\newcommand{\country}[1]{#1.}
\newcommand{\city}[1]{#1}
\newcommand{\institution}[1]{#1}
\newcommand{\email}[1]{Email: \texttt{#1}}
\newcommand{\affiliation}{\thanks}
\author{
 {Kiarash Banihashem
 \affiliation{
   \institution{University of Maryland}
   \city{College Park}
   \country{USA}
 \email{kiarash@umd.edu}
 }}
 \and
 {MohammadTaghi Hajiaghayi
 \affiliation{
   \institution{University of Maryland}
   \city{College Park}
   \country{USA}
 \email{hajiaghayi@gmail.com}
 }}
 \and
 {Mahdi JafariRaviz
 \affiliation{
   \institution{University of Maryland}
   \city{College Park}
   \country{USA}
 \email{mahdijafariraviz@gmail.com}
 }}
 \and
 {Danny Mittal
 \affiliation{
   \institution{University of Maryland}
   \city{College Park}
   \country{USA}
 \email{dannymittal@gmail.com}
 }}
 \and
 {Alipasha Montaseri
 \affiliation{
   \institution{University of Maryland}
   \city{College Park}
   \country{USA}
 \email{alipasha.montaseri@gmail.com}
 }}
}
\newcommand{\tempdel}[1]{}
\newcommand{\R}{\mathbb{R}}
\newcommand{\Z}{\mathbb{Z}}
\newcommand{\EqComment}[1]{\text{\emph{(#1)}}}
\newcommand{\pneg}{{p_{\text{neg}}}}
\newcommand{\pzero}{{p_{\text{zero}}}}
\newcommand{\ppos}{{p_{\text{pos}}}}
\newcommand{\inv}{^{-1}}
\newtheorem{definition}{Definition}
\newtheorem{theorem}{Theorem}
\newtheorem{lemma}{Lemma}
\newtheorem{corollary}{Corollary}
\newtheorem{example}{Example}
\newcommand{\norm}[1]{{ \Vert #1 \Vert}}
\newcommand{\munagalamodel}{scalar general-cost } 
\newcommand{\munagalamodelcapital}{Scalar General-Cost }
\newcommand{\vectorfjmodel}{multidimensional FJ }
\newcommand{\vectorfjmodelcapital}{Multidimensional FJ }
\newcommand{\fjmodel}{FJ }
\newcommand{\vectormunagalamodel}{general-cost multidimensional }
\newcommand{\vectormunagalamodelcapital}{General-Cost Multidimensional }
\newcommand{\cliquevectormodel}{multidimensional clique }
\newcommand{\cliquevectormodelcapital}{Multidimensional Clique }
\newcommand{\quadraticvectormodel}{multidimensional quadratic }
\newcommand{\quadraticvectormodelcapital}{Multidimensional Quadratic }
\newcommand{\ourmodel}{multidimensional heterogeneous }
\newcommand{\ourmodelcapital}{Multidimensional Heterogeneous }
\newcommand{\arbitrarysymmetricmodel}{arbitrary symmetric cost }
\newcommand{\arbitrarysymmetricmodelcapital}{Arbitrary Symmetric Cost }
\newenvironment{theoremrestate}[1]{
  \begingroup
  
  \begin{theorem}
}{
  \end{theorem}
  \addtocounter{theorem}{-1}
  \endgroup
}
\newenvironment{lemmarestate}[1]{
  \begingroup
  
  \begin{lemma}
}{
  \end{lemma}
  \addtocounter{lemma}{-1}
  \endgroup
}
\title{How Bad Is Forming Your Own Multidimensional Opinion?}
\date{}
\begin{document}

\maketitle

\begin{abstract}
    Understanding the formation of opinions on multiple interconnected topics within social networks is of significant importance. It offers insights into collective behavior and decision-making processes, with applications in Graph Neural Networks. Existing models propose that individuals form opinions based on a weighted average of their peers' opinions and potentially their own beliefs. This averaging process, when viewed as a best-response game, can be seen as an individual minimizing disagreements with peers, defined by a quadratic penalty, leading to an equilibrium. Bindel, Kleinberg, and Oren (FOCS 2011) provided tight bounds on the \enquote{price of anarchy,} which is defined as the maximum level of overall disagreement at equilibrium relative to a social optimum. Bhawalkar, Gollapudi, and Munagala (STOC 2013) generalized the penalty function to consider non-quadratic penalties and provided tight bounds on the price of anarchy of these functions.

    When considering multiple topics, an individual's opinions can be represented as a vector. Parsegov, Proskurnikov, Tempo, and Friedkin (2016) proposed a multidimensional model using the weighted averaging process, but with constant interdependencies between topics for any individual. However, the same question of the price of anarchy for this model remained open. We address this question by providing tight bounds on the price of anarchy of the multidimensional model, while also generalizing the model to consider more complex interdependencies. Furthermore, through novel approaches, following the work of Bhawalkar, Gollapudi, and Munagala, we provide tight bounds on the price of anarchy under non-quadratic penalty functions. Surprisingly, these bounds match the bounds for the scalar model used by them. We further demonstrate that the bounds for the price of anarchy remain unchanged even when we add another layer of complexity to the dynamics, involving multiple individuals working in groups to minimize their overall internal and external disagreement penalty, a common occurrence in real-life scenarios. Lastly, we provide a novel lower bound showing that for any reasonable penalty function, the worst-case price of anarchy across all networks is at least $\frac 2 {e\ln 2}$. This lower bound naturally applies to both the multidimensional and scalar models and was previously unknown even for scalar models.
\end{abstract}

\section{Introduction}

The behaviors and choices of individuals are frequently shaped by those around them. Social dynamics and peer influences play a crucial role in forming an individual's preferences and decisions. For instance, if most of a person's friends frequently visit a particular restaurant, that person is likely to develop a preference for dining there as well. Understanding these patterns of influence is essential for comprehending various social phenomena, from cultural trends to market behaviors.

Various models have been proposed \cite{degroot1974reaching, friedkin1990social, bhawalkar2013coevolutionary, parsegov2016novel, ravazzi2014ergodic} to examine the dynamics of opinion formation under different constraints. In most of these models, a \emph{weighted} (directed or undirected) graph $G = (V, E)$ represents the social network, where each node $i$ has an \emph{expressed} opinion $z_i$. We sometimes refer to this expressed opinion simply as an opinion. This opinion can be either a single number (representing an opinion on a single topic) or a vector (representing opinions on multiple topics). Every edge $(i, j) \in E$ signifies the influence of node $j$ on node $i$'s opinion (or both ways if the edge is undirected), and the weight of the edge quantifies this influence. Opinions evolve through a coevolutionary process: at each time step, node $i$ updates its opinion based on the opinions of its neighbors in $G$.

In most models, node $i$'s opinion $z_i$ is a single scalar, and in most of the rest (where $z_i$ is a vector), the different entries of $z_i$ are independent of each other (i.e. the topics are independent). However, in reality, people have opinions on a wide variety of topics, and when these topics are related to each other, one person's opinion on one topic can affect another person's opinion on a different topic. Therefore, it is of interest to consider models of multidimensional opinions with interdependencies.
Parsegov, Proskurnikov, Tempo, and Friedkin \cite{parsegov2016novel} introduced a multidimensional model that captures these dependencies between topics. To the best of our knowledge, the model introduced by \tempdel{Parsegov et al.\ }\cite{parsegov2016novel} is the only one in the literature that considers vector opinions with dependencies between topics.

An important question in these models is the extent of overall disagreement that would persist if individuals' opinions were allowed to converge naturally, compared to the optimal scenario where societal disagreement is minimized. For instance, if each person’s opinion represents their position on a political spectrum, it becomes important to understand how much political disagreement would remain relative to the minimum possible level of disagreement. In mathematical terms, this comparison corresponds to the price of anarchy, which captures the inefficiency that arises when individuals act based on their own interests rather than toward a collective goal. Providing bounds on the price of anarchy is essential to understanding how far the outcome deviates from the optimal case, highlighting the potential gap between natural opinion convergence and what could be achieved through more structured, cooperative efforts.

The dynamics of opinion formation have applications in Graph Neural Networks (GNNs) \cite{scarselli2008graph, gori2005new, frasconi1998general, sperduti1997supervised}. Specifically, \tempdel{Scarselli et al.\ }\cite{scarselli2008graph, gori2005new} proposed a popular model for processing data represented in graph domains. In their model, each node $i$ in the graph has a label $l_i$ and an associated state vector $x_i$ of size $m$. These states are dynamically updated based on the relationships (edges) among the nodes, which are weighted using learnable parameters $w$. Finally, an output is computed using the states of the nodes. A learning algorithm can then be created using gradient descent on the parameters $w$. When the dynamic updates are performed linearly, the connection to opinion formation dynamics becomes apparent. However, it is important to note that in this model, the states attached to the nodes are vectors (of size $m$) and thus the study of multidimensional opinion formation is significant.

The following paragraphs provide an overview of several opinion formation models and their relevant results. For convenience, we have given each model a name to simplify their reference.

\paragraph{The DeGroot model.}
The \emph{DeGroot} model \cite{degroot1974reaching}, is a basic model of opinion formation, in which each person $i$ holds an opinion represented by a real number $z_i$. The network is modeled by a weighted graph $G = (V, E)$ with $n$ nodes, where each node is influenced by its neighbors and the edge weight represents the extent of the influence. Then in each step, each node changes its opinion to a weighted average of its neighbors. More formally, let $z_i(t)$ denote the opinion of node $i$ at time step $t$. Then if $w_{ij} \ge 0$ denotes the weight on the edge $(i, j)$, the updates are defined as:
\begin{align}
    z_i(t + 1) = \frac{\sum_{j \in N(i)}w_{ij}z_j(t)}{\sum_{j \in N(i)} w_{ij}},
\end{align}
for every node $i$, where $N(i)$ is the set of neighbors of $i$ in $G$. Under general conditions, this process leads to a state of \textit{consensus}, meaning that every node holds the same opinion. Therefore, a different model is needed to account for situations where consensus is not reached.

\paragraph{The \fjmodel model.}
Friedkin and Johnsen \cite{friedkin1990social} proposed a variation of the DeGroot model  (which we will refer to as the \emph{\fjmodel} model), where each node holds an internal opinion $s_i$ that remains constant even as the node's expressed opinion $z_i$ changes. Furthermore, $r_i \ge 0$ represents the weight that node $i$ attaches to its internal opinion. Then the updates are defined as:
\begin{align}
    z_i(t + 1) = \frac{r_is_i + \sum_{j \in N(i)} w_{ij}z_j(t)}{r_i + \sum_{j \in N(i)} w_{ij}},
\end{align}
for every node $i$. One can observe that when $r_i = 0$ for every $i$, the \fjmodel model is equivalent to the DeGroot model. Bindel, Kleinberg, and Oren \cite[FOCS 2011]{bindel2015bad} interpreted the updates in the above equation as choosing $z_i$ to minimize the following cost function:
\begin{align}
    \label{eq:fj:cost}
    c_i(z) = r_i(z_i - s_i)^2 + \sum_{j \in N(i)} w_{ij} (z_i - z_j)^2,
\end{align}
where $z = (z_1, \dots, z_n)$ is the sequence of all opinions. From this perspective, the repeated averaging can be thought of as the trajectory of best-response dynamics in a one-shot, complete information game played by the nodes in $V$, where node $i$'s strategy is a choice of opinion $z_i$ and its payoff is the negative of the cost in \Cref{eq:fj:cost}. Note that when $r_i > 0$ for every $i$, the repeated averaging process always converges to a fixed point, which is the unique Nash equilibrium of the game.

Defining the social cost $SC(z)$ as the sum of each node's cost, i.e. $SC(z) = \sum_i c_i(z)$, one can observe that the Nash equilibrium vector $x = \begin{pmatrix} x_1, \dots, x_n\end{pmatrix}$ does not always correspond to the social optimum vector, which is the vector $y$ minimizing $SC(y)$. A natural question is thus the price of anarchy (PoA), defined as the ratio between the cost of the Nash equilibrium and the cost of the optimal solution (i.e. $\frac{SC(x)}{SC(y)}$). \tempdel{Bindel et al.\ }\cite{bindel2015bad} have shown that when the graph is undirected (i.e. $w_{ij} = w_{ji}$ for every $i,j$), the PoA is at most $\frac{9}{8}$. They also demonstrate that the PoA can be unbounded for directed graphs.

\paragraph{The \munagalamodelcapital model.} 
\tempdel{Bhawalkar et al.\ } \cite{bhawalkar2013coevolutionary} generalize the cost function so that the node $i$'s cost function is defined as:
\begin{align}
    \label{bhawalkar:cost}
    c_i(z) = g_i(z_i - s_i) + \sum_{j \neq i} f_{ij}(z_i - z_j),
\end{align}
where $f_{ij}$ and $g_i$ are real valued functions that remain fixed. They generally require that $f_{ij}$ and $g_i$ be symmetric (in the sense that $f(d) = f(-d)$) and convex in order to obtain their results. One can observe that the cost function introduced by \tempdel{Bindel et al.\ }\cite{bindel2015bad} is a specific case of this model where $f_{ij}(x) = w_{ij}x^2$ and $g_i(x) = r_i x^2$.

\paragraph{The \vectorfjmodelcapital model.} Another line of work is considering nodes' opinions on multiple topics. \tempdel{Parsegov et al.\ }\cite{parsegov2016novel} introduced a multidimensional model, that considers $m$ different topics, meaning each person holds $m$ opinions (i.e. $z_i \in \R^m$ is a vector). This model accounts for the fact that a person's opinion on one topic might be influenced by others' opinions on a different topic, implying interdependencies between topics.
More specifically the expressed opinion and the internal opinion of node $i$, denoted as $z_i$ and $s_i$ respectively, are real vectors of size $m$. A round of updates is then defined as:
\begin{align}
    \label{eq:vectorized-fj:update}
    z_i(t + 1) = \frac{r_is_i + C \sum_{j \in N(i)} w_{ij} z_j(t)}{r_i + \sum_j w_{ij}},
\end{align}
where $C \in \R^{m \times m}$ is the matrix of multi-issue dependence structure (MiDS). Matrix $A$ is \emph{row-stochastic} if $A_{ij} \ge 0$ and $\sum_j A_{ij} = 1\ \forall i$. As noted by \tempdel{Parsegov et al.}\cite{parsegov2016novel}, choosing $C$ to be row-stochastic is natural, as it makes the update correspond to a \enquote{weighted average}. We call an instance of this model \emph{undirected} if $w_{ij} = w_{ji}\ \forall i,j$ and $C$ is symmetric. Note that the symmetry of $C$ is important, as an asymmetric $C$, intuitively, introduces asymmetry in the way two nodes connected with an edge influence each other.

\section{Our Contributions}
\label{section:our-contributions}

\subsection{The \ourmodelcapital Model}

In this paper, we introduce the following model, which we refer to as the \textit{\ourmodel model}. All of our results are attained in this model; we describe other models only to show how they are generalized by or can be reduced to the \ourmodel model. In this model, we use the following cost function:
\begin{align}
    c_i(z) = g_i(R_iz_i - s_i) + \sum_{j \neq i} f_{ij}(A_{ij}z_i + B_{ij}z_j),
\end{align}
where for each person $i$ we have a dimension $m_i \in \Z$ such that $z_i \in \R^{m_i}$. We then have a dimension $d_{ij}$ so that $f_{ij} : \R^{d_{ij}} \to \R$, and $A_{ij}, B_{ij}$ are linear transformations $A_{ij} : \R^{m_i} \to \R^{d_{ij}}$, $B_{ij} : \R^{m_j} \to \R^{d_{ij}}$. Similarly, we have a dimension $e_i$ so that $g_i : \R^{e_i} \to \R$, $R_i$ is a linear transformation $\R_i : \R^{m_i} \to \R^{e_i}$, and $s_i \in \R^{e_i}$ is the equivalent of an internal opinion.

Since the model arises from a game interpretation, in the following sections, we often refer to an instance of it as an \emph{\textbf{opinion formation game}}. We call an opinion formation game, \emph{\textbf{symmetric}}, when $d_{ij} = d_{ji}$, $f_{ij} = f_{ji}$, $A_{ij} = B_{ji}$, $B_{ij} = A_{ji}$ for all $i,j$, so that for each pair of persons $(i, j)$ their costs due to the other person are the same (i.e. $f_{ij}(A_{ij}z_i + B_{ij}z_j) = f_{ji}(A_{ji} z_j + B_{ji} z_i)$). Note that some previous works refer to an undirected network; the definition of a symmetric opinion formation game captures the same meaning. All of our results are for symmetric opinion formation games; as shown by \tempdel{Bindel et al.\ }\cite{bindel2015bad}, even in their much simpler model, allowing asymmetry can lead to the PoA being unbounded.

Note that in comparison to the \munagalamodel model defined in \Cref{bhawalkar:cost}, the \ourmodel model does not generally demand that $f_{ij}$ themselves be symmetric in the sense that $f(v) = f(-v)$, because this symmetry is now baked in to $A_{ij}, B_{ij}$. As a result, our model provides more flexibility even in the case where all cost functions are $\R \to \R$ and all matrices $A_{ij}, B_{ij}, R_{ij}$ are scalars. For example, we could define the penalty between two people $1, 2$ to be $e^{z_1 - z_2}$ by setting $f_{12}(x) = e^x$, $A_{12} = B_{21} = 1$, and $B_{12} = A_{21} = -1$; though $e^x$ is not itself symmetric, the cost for each player remains the same due to our definition of $A_{ij}, B_{ij}$. Our PoA bounds then extend to this kind of example, which was previously not considered under models of symmetric opinion formation games.

The \ourmodel model can be interpreted as allowing different persons to have not only different opinions but different frameworks in which they form opinions, so that different persons' opinions may have different structures. We then use the matrices $A_{ij}, B_{ij}$ to allow one aspect of person $i$'s opinion to interact with another aspect of person $j$'s opinion, instead of forcing all persons' opinions to be structured the same way. For this reason, we refer to the model as \textit{heterogenous}; this is in addition to the model being \textit{multidimensional} because the opinions are not necessarily scalars.

In \Cref{section:equiv-models}, we discuss several models that are special cases of the \ourmodel model. These models either directly extend previous work or introduce new dynamics. First, we introduce the \emph{\quadraticvectormodel}model, which generalizes the model studied by Parsegov et al. \cite{parsegov2016novel}. Next, we present the \emph{\vectormunagalamodel}model, a multidimensional version of the model introduced by Bhawalkar et al. \cite{bhawalkar2013coevolutionary}. We then examine the \emph{\arbitrarysymmetricmodel}model, which is chosen to clearly allow for the most general possible symmetric costs in networks. We show that even this general model is equivalent to the \ourmodel model, showing that the \ourmodel model is also the most general possible model of symmetric cost functions in networks. Finally, we introduce the \emph{\cliquevectormodel}model, which follows different dynamics.

\subsection{General Upper Bounds on the Price of Anarchy}
We show a general upper bound on the PoA of a symmetric opinion formation game in section \ref{section:general-ub}. The value of this upper bound depends on a property of the cost functions $f_{ij}, g_i$ that we will term $(\lambda, \kappa, p)$-\textit{suitability}.
\begin{definition}[$(\lambda, \kappa, p)$-suitability]
    \label{suitable}
    A differentiable function $f : \R^m \to \R$ is said to be $(\lambda, \kappa, p)$-\textbf{suitable} if for all $a, b \in \R^m$, we have
    \begin{align}
    \label{ineq:suitability}
    (\nabla f(a))^T \frac {b - a} p \leq \lambda f(b) - \kappa f(a).
    \end{align}
\end{definition}

We then show how $(\lambda, \kappa, p)$-suitability relates to the analysis of the PoA.
\begin{theorem}\label{suitabilitythm}
    Consider a symmetric opinion formation game whose cost functions $f_{ij}, g_i$ are nonnegative and differentiable. If for some $\lambda, \kappa > 0$, we have that all $f_{ij}$ are $(\lambda, \kappa, 2)$-suitable and all $g_i$ are $(\lambda, \kappa, 1)$-suitable, then the price of anarchy of the opinion formation game is at most $\frac \lambda \kappa$.
\end{theorem}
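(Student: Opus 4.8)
The plan is to run a smoothness-style argument. I would fix a Nash equilibrium $x$ and a social optimum $y$, write $\delta_i = y_i - x_i$, and apply the suitability inequality to every cost term, with its value at $x$ playing the role of $a$ and its value at $y$ the role of $b$. Concretely, for each $g_i$ I would set $a = R_i x_i - s_i$, $b = R_i y_i - s_i$ in $(\lambda,\kappa,1)$-suitability, and for each $f_{ij}$ I would set $a = A_{ij}x_i + B_{ij}x_j$, $b = A_{ij}y_i + B_{ij}y_j$ in $(\lambda,\kappa,2)$-suitability. Summing all these inequalities over $i$ and over the ordered pairs $(i,j)$ with $j \ne i$, the right-hand sides add up by definition of $SC$ to exactly $\lambda\, SC(y) - \kappa\, SC(x)$. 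Hence it suffices to show that the summed left-hand side
\begin{align*}
L = \sum_i (\nabla g_i(R_i x_i - s_i))^T R_i \delta_i + \frac 1 2 \sum_i \sum_{j \ne i} (\nabla f_{ij}(A_{ij} x_i + B_{ij} x_j))^T (A_{ij}\delta_i + B_{ij}\delta_j)
\end{align*}
is nonnegative (in fact, I expect it to vanish identically).

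To handle $L$, I would split the interaction double-sum into an $A_{ij}\delta_i$ part and a $B_{ij}\delta_j$ part. In the second part I would swap the dummy indices $i \leftrightarrow j$ and then invoke symmetry: since $f_{ij} = f_{ji}$, $A_{ij} = B_{ji}$, and $B_{ij} = A_{ji}$, the arguments $A_{ji}x_j + B_{ji}x_i$ and $A_{ij}x_i + B_{ij}x_j$ coincide, so $\nabla f_{ji}(A_{ji}x_j + B_{ji}x_i) = \nabla f_{ij}(A_{ij}x_i + B_{ij}x_j)$ while $B_{ji} = A_{ij}$. This converts the second part into an exact copy of the first, and the factor $\tfrac 1 2$ arising from $p = 2$ is precisely what merges the two copies into the single sum $\sum_i \sum_{j \ne i} (A_{ij}^T \nabla f_{ij}(A_{ij} x_i + B_{ij} x_j))^T \delta_i$.

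Collecting the coefficient of each $\delta_i$, the left-hand side becomes
\begin{align*}
L = \sum_i \Big( R_i^T \nabla g_i(R_i x_i - s_i) + \sum_{j \ne i} A_{ij}^T \nabla f_{ij}(A_{ij} x_i + B_{ij} x_j) \Big)^T \delta_i,
\end{align*}
and the bracketed vector is exactly $\nabla_{z_i} c_i(x)$. Because $x$ is a Nash equilibrium, $x_i$ minimizes player $i$'s cost over $z_i \in \R^{m_i}$ with the other opinions fixed at their values in $x$, so the first-order stationarity condition gives $\nabla_{z_i} c_i(x) = 0$ for every $i$; thus $L = 0$. Combined with the summed suitability inequality this yields $0 \le \lambda\, SC(y) - \kappa\, SC(x)$, i.e.\ $SC(x)/SC(y) \le \lambda/\kappa$, the claimed bound. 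Notably this uses only differentiability and equilibrium stationarity, not convexity of the cost functions.

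The step I expect to be the main obstacle is the symmetry-folding: one must verify carefully that the reindexing together with the identities $A_{ij} = B_{ji}$, $B_{ij} = A_{ji}$, $f_{ij} = f_{ji}$ genuinely makes the two halves of the interaction sum identical, and that the constant $\tfrac 1 2$ from choosing $p = 2$ for the $f_{ij}$ (against $p = 1$ for the $g_i$) is exactly what is needed to reconstruct the true gradient $\nabla_{z_i} c_i(x)$ rather than a rescaled version of it. This matching of constants is the whole reason two different values of $p$ appear in the hypothesis, and getting it wrong would break the cancellation that forces $L = 0$.
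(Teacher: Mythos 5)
Your proposal is correct and follows essentially the same route as the paper's proof: you apply the suitability inequalities termwise so the right-hand sides sum to $\lambda\, SC(y) - \kappa\, SC(x)$, use the symmetry identities $f_{ij} = f_{ji}$, $A_{ij} = B_{ji}$, $B_{ij} = A_{ji}$ to fold the two halves of the interaction sum (with the factor $\tfrac{1}{2}$ from $p=2$ reconstructing the true gradient), and conclude via first-order stationarity $\nabla_i c_i(x) = 0$ at the Nash equilibrium. The only cosmetic difference is that you inline the final smoothness-to-PoA step and work with ordered-pair reindexing, where the paper packages the same computation into auxiliary quantities $t_{ij}, u_{ij}, v_i, w_i$ over unordered pairs and invokes the Roughgarden--Schoppmann local smoothness lemma.
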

Note in particular that $f_{ij}, g_i$ are not required to be convex for this result to hold. However, most of our following results will require them to be convex. In particular, we can show that if the cost functions are allowed to be non-convex then the price of anarchy can become unbounded, even in the single-dimensional case (see Appendix \ref{section:non-convex_unbounded_example}). Note that we generally take the Nash equilibrium $x$ to be defined by satisfying $\nabla_i c_i(x) = 0$, but this is only valid when the cost functions involved are convex.

We then define the \emph{worst-case} PoA, which helps to consider how high the price of anarchy can be when the cost functions are of a specific form.

\begin{definition}[Worst-Case Price of Anarchy]
    \label{definition:worst-case-poa}
    The \emph{worst-case price of anarchy} \emph{(worst-case PoA)} of a function $h : \R^m \to \R$ is the supremum of the PoA across all symmetric opinion formation games whose cost functions are of the form $f_{ij}(x) = w_{ij}h(x)$ and $g_i(x) = r_ih(x)$, where $w_{ij}, r_i$ are nonnegative real numbers.
\end{definition}

The main utility of \Cref{suitabilitythm} is to show upper bounds on the worst-case PoAs of various functions; this use is expressed in the following corollary:
\begin{corollary}
    \label{suitcorol}
    Suppose that a function $h : \R^m \to \R$ is differentiable. If for some $\lambda, \kappa > 0$, $h$ is both $(\lambda, \kappa, 1)$-suitable and $(\lambda, \kappa, 2)$-suitable, then we will say that $h$ is simply $(\lambda, \kappa)$-\textbf{suitable}, and the worst-case PoA of $h$ is bounded by $\frac \lambda \kappa$.
\end{corollary}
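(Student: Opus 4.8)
The plan is to derive the corollary directly from \Cref{suitabilitythm} by observing that $(\lambda, \kappa, p)$-suitability is preserved under multiplication by a nonnegative scalar. Fix any symmetric opinion formation game of the form required in \Cref{definition:worst-case-poa}, i.e. with cost functions $f_{ij}(x) = w_{ij} h(x)$ and $g_i(x) = r_i h(x)$ for nonnegative reals $w_{ij}, r_i$. My goal is to verify that this game satisfies the hypotheses of \Cref{suitabilitythm}, namely that each $f_{ij}$ is $(\lambda, \kappa, 2)$-suitable and each $g_i$ is $(\lambda, \kappa, 1)$-suitable, and that all cost functions are nonnegative and differentiable.

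The key step is a scaling lemma: if $\phi : \R^m \to \R$ is $(\lambda, \kappa, p)$-suitable and $c \ge 0$, then $c\phi$ is also $(\lambda, \kappa, p)$-suitable. This follows immediately from linearity of the gradient, $\nabla(c\phi)(a) = c\,\nabla\phi(a)$, together with the chain
\begin{equation}
(\nabla (c\phi)(a))^T \frac{b-a}{p} = c\, (\nabla \phi(a))^T \frac{b-a}{p} \le c\lp\lambda \phi(b) - \kappa \phi(a)\rp = \lambda (c\phi)(b) - \kappa (c\phi)(a),
\end{equation}
where the middle inequality is the defining inequality from \Cref{suitable} for $\phi$, multiplied by $c$. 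The nonnegativity $c \ge 0$ is exactly what guarantees that the inequality direction is preserved under this multiplication. Applying the lemma with $\phi = h$: since $h$ is $(\lambda, \kappa, 2)$-suitable, each $f_{ij} = w_{ij} h$ is $(\lambda, \kappa, 2)$-suitable, and since $h$ is $(\lambda, \kappa, 1)$-suitable, each $g_i = r_i h$ is $(\lambda, \kappa, 1)$-suitable. Nonnegativity and differentiability of the $f_{ij}, g_i$ follow from those of $h$ and $w_{ij}, r_i \ge 0$.

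With the hypotheses of \Cref{suitabilitythm} verified, I invoke it to conclude that the PoA of this particular game is at most $\frac{\lambda}{\kappa}$. Since this bound holds uniformly over every game of the prescribed form, the supremum defining the worst-case PoA of $h$ is likewise bounded by $\frac{\lambda}{\kappa}$, which is the claim.

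I do not expect a genuine analytic obstacle here; the only point requiring care — and the reason the corollary is not entirely vacuous — is the nonnegativity of the weights $w_{ij}$ and $r_i$, which is essential in the scaling lemma because a negative multiplier would reverse the suitability inequality. The corollary is essentially the observation that the per-function suitability hypotheses of \Cref{suitabilitythm} are closed under exactly the nonnegative scaling that defines the worst-case PoA family of functions.
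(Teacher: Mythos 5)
Your proof is correct and follows essentially the same route as the paper: both arguments reduce the corollary to \Cref{suitabilitythm} by noting that $(\lambda, \kappa, p)$-suitability is preserved under multiplication by the nonnegative weights $w_{ij}, r_i$. The only cosmetic difference is that you prove this scaling fact inline (correctly, via homogeneity of the suitability inequality), whereas the paper cites it as a special case of a lemma proven later.
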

\begin{proof}
    For any network considered in the definition of the worst-case PoA of $h$, its cost functions $f_{ij}$ are all of the form $w_{ij}h$, so because $h$ is $(\lambda, \kappa, 2)$-suitable, all $f_{ij}$ are also $(\lambda, \kappa, 2)$-suitable. Similarly, its cost functions $g_i$ are all of the form $r_i h$, so because $h$ is $(\lambda, \kappa, 1)$-suitable, all $g_i$ are also $(\lambda, \kappa, 1)$-suitable. (These implications are special cases of \Cref{suitlinear}, which is proven below). Therefore, by \Cref{suitabilitythm}, the PoA of any such network is at most $\frac \lambda \kappa$ so the worst-case PoA of $h$ is at most $\frac \lambda \kappa$.
\end{proof}

\subsection{Bounds on Worst-Case Price of Anarchy for Specific Functions}
We use our framework to provide bounds on the worst-case price of anarchy (as defined in \Cref{definition:worst-case-poa}) for specific functions. The following theorem summarizes our results.

\begin{theorem}
    \label{theorem:x-alpha-norm}
    For $h : \R^m \to \R$ defined as $h(z) = \norm{z}^\alpha$, where $\norm{\cdot}$ is any nonconstant norm on $\R^m$ which is differentiable everywhere (except possibly at the origin), the worst-case PoA is exactly:
    \begin{align} \label{eq:zetaintro}
        \zeta(\alpha) = \frac {(\alpha - 1)^{\alpha - 1}} {\alpha^\alpha} \cdot \frac {\left(2^{\frac \alpha {\alpha - 1}} - 1\right)^\alpha} {2^{\frac \alpha {\alpha - 1}} - 2}.
    \end{align}
\end{theorem}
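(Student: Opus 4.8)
The plan is to prove the two directions separately: the upper bound $\zeta(\alpha)$ comes from the suitability machinery (\Cref{suitcorol}), and the matching lower bound from an explicit two-node game. Throughout write $N = \norm{\cdot}$ for the norm and set $q = 2^{1/(\alpha-1)}$, noting the convenient identity $2^{\alpha/(\alpha-1)} = 2\cdot 2^{1/(\alpha-1)} = 2q$, so that $\zeta(\alpha) = \frac{(\alpha-1)^{\alpha-1}}{\alpha^\alpha}\cdot\frac{(2q-1)^\alpha}{2(q-1)}$.

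For the upper bound, the first step is to reduce the multidimensional suitability of $h$ to a one-dimensional inequality. Since $N$ is $1$-homogeneous, Euler's identity gives $\nabla N(a)^T a = N(a)$, and since $\norm{\nabla N(a)}_* = 1$ the dual-norm inequality gives $\nabla N(a)^T b \le N(b)$. Writing $\nabla h(a) = \alpha N(a)^{\alpha-1}\nabla N(a)$ and combining these, for every $a,b$ and $p>0$,
\begin{align}
(\nabla h(a))^T\frac{b-a}{p} = \frac{\alpha N(a)^{\alpha-1}}{p}\big(\nabla N(a)^Tb - N(a)\big) \le \frac{\alpha N(a)^{\alpha-1}}{p}\big(N(b) - N(a)\big).
\end{align}
Hence it suffices to establish the scalar inequality $\frac{\alpha s^{\alpha-1}}{p}(t-s) \le \lambda t^\alpha - \kappa s^\alpha$ for all $s,t\ge 0$, which by homogeneity reduces to $\lambda t^\alpha \ge \kappa + \frac\alpha p(t-1)$ for all $t \ge 0$. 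I would then exhibit a single pair $(\lambda,\kappa)$ certifying both $p=1$ and $p=2$: requiring the line $\kappa + \alpha(t-1)$ and the line $\kappa + \frac\alpha2(t-1)$ to be tangent to the common convex curve $t\mapsto\lambda t^\alpha$ pins down $\lambda = \big(\tfrac{(\alpha-1)(2q-1)}{\alpha q}\big)^{\alpha-1}$ and $\kappa = \tfrac{\alpha(q-1)}{2q-1}$, and convexity makes each tangent line a global lower bound, so $h$ is simultaneously $(\lambda,\kappa,1)$- and $(\lambda,\kappa,2)$-suitable. A direct simplification (using $q^{\alpha-1}=2$) gives $\lambda/\kappa = \zeta(\alpha)$, so \Cref{suitcorol} yields worst-case $\poa \le \zeta(\alpha)$.

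For the lower bound I construct a single edge. Fix a unit vector $e$ at which $N$ is differentiable and place every opinion on the line $\R e$; since $N(\theta e) = |\theta|$ and $\nabla h(\theta e) = (|\theta|^\alpha)'\,\nabla N(e)$, both the best-response conditions $\nabla_i c_i = 0$ and the optimality condition $\nabla_{z_i}SC = 0$ collapse (by convexity) to their one-dimensional analogues, so it suffices to analyze the scalar game. Take two nodes with internal opinions $s_1 = s$, $s_2 = -s$, internal weights $r_1=r_2=r$, and edge weight $w$, and exploit the antisymmetry $x_1 = -x_2 = \xi$, $y_1=-y_2=\eta$ of the unique equilibrium $x$ and optimum $y$. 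The stationarity conditions become $r(s-\xi)^{\alpha-1} = w(2\xi)^{\alpha-1}$ and $r(s-\eta)^{\alpha-1} = 2w(2\eta)^{\alpha-1}$; eliminating $r/w$ forces $\tfrac{u}{1-u} = q\,\tfrac{v}{1-v}$ where $u=\xi/s,\ v=\eta/s$, i.e.\ $v = u/(q-(q-1)u)$. Substituting into $SC(x)$ and $SC(y)$ collapses the price of anarchy to the single-parameter expression
\begin{align}
\frac{SC(x)}{SC(y)} = \frac{(1+u)\,(q-(q-1)u)^{\alpha-1}}{2}.
\end{align}
Maximizing over $u\in(0,1)$ gives $u^* = \frac{q(2-\alpha)+(\alpha-1)}{\alpha(q-1)}$, at which $1+u^* = \frac{2q-1}{\alpha(q-1)}$ and $q-(q-1)u^* = \frac{(\alpha-1)(2q-1)}{\alpha}$, and the value equals exactly $\zeta(\alpha)$ (for $\alpha=2$, where $q=2$, this recovers the classical $\zeta(2)=9/8$); checking $u^*\in(0,1)$ reduces to the elementary inequalities $2^x > 1+x/2$ and $2^x(1-x)<1$.

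The main obstacle I anticipate is the lower-bound reduction to a line for a general, possibly non-symmetric-about-$e$ norm: I must verify that the line-restricted configuration is the genuine $m$-dimensional equilibrium and optimum, not merely a stationary point of the restricted problem. This is where the identities $\nabla N(\theta e)=\mathrm{sgn}(\theta)\nabla N(e)$ (from $1$-homogeneity and evenness of norms) together with the convexity of $SC$ do the essential work, letting a scalar stationary point certify a global one. A secondary, purely computational obstacle is confirming that the two tangency conditions in the upper bound are met by a single $(\lambda,\kappa)$ and that both the resulting ratio and the maximized lower-bound expression collapse to the same closed form $\zeta(\alpha)$ after using $2^{\alpha/(\alpha-1)}=2q$.
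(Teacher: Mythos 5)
Your proposal is correct, but it takes a genuinely different route from the paper's. The paper proves \Cref{theorem:x-alpha-norm} modularly: it reduces the multidimensional problem to the scalar one via \Cref{normthm}, whose upper-bound half uses \Cref{suitnorm} (suitability is preserved under composition with a norm, with special care at the origin where the norm is not differentiable) and whose lower-bound half uses \Cref{restrictionlemma} (restricting $h$ along a direction); it then cites Bhawalkar et al.'s scalar computation (\Cref{bhawalkarthm}) for the optimal $(\lambda,\kappa)$ and invokes \Cref{scalarlowerboundlemma} --- proved in the appendix via a three-person construction --- for scalar tightness. You instead re-derive everything in place: your Euler-identity/dual-norm step $\nabla N(a)^T(b-a) \le N(b)-N(a)$ is exactly the convexity gradient inequality at the heart of \Cref{suitnorm}, specialized to $1$-homogeneous $g$, and your tangency computation of $\lambda = \bigl(\tfrac{(\alpha-1)(2q-1)}{\alpha q}\bigr)^{\alpha-1}$, $\kappa = \tfrac{\alpha(q-1)}{2q-1}$ reproves \Cref{bhawalkarthm} rather than citing it (I checked: with $q^{\alpha-1}=2$ the ratio does collapse to $\zeta(\alpha)$, and your maximized two-node expression $\tfrac{(1+u)(q-(q-1)u)^{\alpha-1}}{2}$ at $u^*$ also equals $\zeta(\alpha)$, recovering $9/8$ at $\alpha=2$). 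The most substantive difference is your lower bound: a two-node antisymmetric instance in place of the paper's three-person network from \Cref{appendixlb2}. This works, and your anticipated obstacle is correctly dispatched --- since $h$ and the social cost are convex, full-dimensional stationarity (which collapses to a scalar equation because every gradient along $\R e$ is a multiple of the single vector $\nabla N(e)$, using $\nabla N(\theta e)=\mathrm{sgn}(\theta)\nabla N(e)$) certifies a genuine Nash equilibrium and a genuine global optimum, and exhibiting one equilibrium with ratio $\zeta(\alpha)$ suffices for a PoA lower bound. But note that your two-node trick relies essentially on $h = \norm{\cdot}^\alpha$ being even, which is why the paper's general-purpose \Cref{appendixlb2} (built to avoid any symmetry assumption on $h$, e.g.\ for $e^x$) needs three persons. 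In short: the paper's route buys reusable machinery applicable to other cost functions ($\cosh$, $e^{g(z)}$, etc.), while yours buys a self-contained argument with an explicit closed-form worst-case instance, at the cost of being special to even, homogeneous costs.
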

The above theorem is a generalization of the results from \tempdel{Bhawalkar et al.\ }\cite[Lemma 3.10]{bhawalkar2013coevolutionary}, which show this bound when $m = 1$ (the scalar game). They also show that the bound is tight for $m = 1$, which naturally extends to the multidimensional game.

The idea behind the proof of the above theorem \ref{theorem:x-alpha-norm} is that we show for the cost function $f \circ g$, where $g$ is vector-to-scalar and $f$ is scalar-to-scalar, the $(\lambda, \kappa, p)$-suitability of $f$ is preserved when $g$ is chosen from a certain class of functions. An example of a class of functions for which we show this is the class of nonconstant and differentiable norms; this is applied in the theorem above.

\subsection{A Lower Bound on Price of Anarchy}

We show the following theorem, which presents a novel lower bound for the worst-case PoA that was previously unknown even in the \munagalamodel model of \tempdel{Bhawalkar et al.\ }\cite{bhawalkar2013coevolutionary}.
\begin{theorem}
    \label{theorem:lowerbound}
    \label{lowerboundthm}
    For any nonzero function $h : \R^m \to \R$ which is nonnegative, differentiable, and convex, the worst-case PoA of $h$ is at least $\frac 2 {e \ln 2}$.
\end{theorem}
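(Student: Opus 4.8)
The plan is to reduce the multidimensional statement to a one-dimensional one and then exhibit an explicit family of games whose price of anarchy approaches $\frac{2}{e\ln 2}$. For the reduction, observe that under the natural penalty normalization $h(0)=0$ (no disagreement, no cost) a nonzero nonnegative convex $h$ is automatically nonconstant; I pick a direction $v\in\R^m$ along which $h$ is not identically zero and set $\phi(t):=h(tv)$, a nonnegative, convex, differentiable, nonconstant scalar function with minimum $0$ attained at $0$. Restricting every player's opinion to the line $\{tv:t\in\R\}$ — formally taking $m_i=1$ and letting the maps $R_i,A_{ij},B_{ij}$ be $t\mapsto tv$ — turns any scalar game with penalty $\phi$ into a legitimate symmetric \ourmodel game with penalty $h$. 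Hence the worst-case PoA of $h$ (in the sense of \Cref{definition:worst-case-poa}) is at least the worst-case PoA of the scalar $\phi$, and it suffices to prove the bound for every such $\phi$.

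The core is a two-parameter \emph{pinned-hub} family. Take a single ``hub'' node anchored at $0$ and, using the symmetry freedom of the heterogeneous model (choosing $A,B$ with opposite signs across an edge so that both endpoints pay the identical term $\phi(z_i-z_j)$), connect it by one edge of weight $w$ to a ``leaf'' with internal opinion $\delta$ and self-weight $r$. Sending the hub's self-weight to infinity pins it at $0$ and decouples the leaf, reducing everything to one variable: at the Nash equilibrium the leaf minimizes $r\phi(z-\delta)+w\phi(z)$, since it internalizes only its own half of the edge penalty, whereas the social cost is $r\phi(z-\delta)+2w\phi(z)$ because the edge is paid at both endpoints. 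The PoA is the social cost evaluated at the Nash point divided by its value at the social optimum. As a sanity check this reproduces $\zeta(2)=\tfrac98$ for $\phi(t)=t^2$ at $r=w$, and the two free knobs — the scale $\delta$ and the ratio $\rho=r/w$ — are exactly what I optimize.

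By convexity the first-order conditions $r\phi'(z_N-\delta)+w\phi'(z_N)=0$ and $r\phi'(z_O-\delta)+2w\phi'(z_O)=0$ pin down the globally optimal Nash and social-optimum points $z_N>z_O$, and writing the PoA as $\frac{r\phi(z_N-\delta)+2w\phi(z_N)}{r\phi(z_O-\delta)+2w\phi(z_O)}$ lets me eliminate $\rho$ in favor of these points. The target constant is the infimum of the power-law value of \Cref{theorem:x-alpha-norm}: a short computation gives $\lim_{\alpha\to\infty}\zeta(\alpha)=\frac{2}{e\ln 2}$, with the factor $e^{-1}$ coming from $(1-1/\alpha)^{\alpha}$ and the $\ln 2$ from the factor-two gap between the Nash weight $w$ and the social weight $2w$; moreover $\inf_{\alpha>1}\zeta(\alpha)=\frac{2}{e\ln 2}$. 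For $\phi(t)=|t|^\alpha$ the pinned-hub family attains exactly $\zeta(\alpha)$ when $\rho$ is chosen so that $z_N,z_O$ straddle the symmetric midpoint $\delta/2$. I would then use convexity to compare a general $\phi$ against the extremal power laws at a suitable scale and ratio and conclude $\sup_{\delta,\rho}\mathrm{PoA}\ge\inf_{\alpha>1}\zeta(\alpha)=\frac{2}{e\ln 2}$.

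The main obstacle is making this last step uniform over all convex $\phi$. Penalties of moderate curvature already give PoA near $\tfrac98>\frac{2}{e\ln 2}$, so the delicate regime is that of ``near-hard-wall'' penalties of high local elasticity $t\phi'(t)/\phi(t)$ — precisely where $\zeta(\alpha)\to\frac{2}{e\ln 2}$ from above — and I must show no convex $\phi$ dips below this infimum. The key lemma will be a convexity-based comparison showing that the ratio-optimized pinned-hub PoA at a given scale is at least $\zeta$ of an effective elasticity exceeding $1$ (convexity with $\phi(0)=0$ forces $t\phi'(t)\ge\phi(t)$), combined with the monotonicity and limiting analysis of $\zeta$. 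The secondary technical points are verifying that the first-order points are genuine global optima (using convexity), justifying the infinite-self-weight hub as a limit of finite valid games (legitimate because the worst-case PoA is a supremum), and dispensing with the degenerate globally-constant case, which is excluded by the $h(0)=0$ normalization.
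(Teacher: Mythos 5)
Your skeleton is reasonable — the reduction to one dimension is exactly \Cref{restrictionlemma}, and your pinned-hub family is in spirit the paper's own three-person construction in \Cref{appendixlb2} (where the hub $\pzero$ is pinned at $0$ by symmetry rather than by an infinite self-weight, and your sanity check $\zeta(2)=\frac 9 8$ is consistent). But there are two genuine gaps. First, the normalization $h(0)=0$ is not without loss of generality in this model: the definition of worst-case PoA only allows scaling cost functions by nonnegative weights and composing with affine maps of the \emph{inputs}, so you can translate the minimizer but you cannot subtract the minimum value of $h$. This is substantive, not cosmetic: the extremal functions for this very theorem, such as $e^x$ and $\cosh x$, have strictly positive infimum or minimum, and the paper shows the worst-case PoA of $\cosh x$ is exactly $\frac 2 {e\ln 2}$ while arguing that $\cosh x - 1$ should instead give $\frac 9 8$ (see \Cref{cosh} and the discussion following \Cref{corollary:h-shifted-exact-wcpoa}) — so the value at the minimum genuinely changes the answer, and your key inequality $t\phi'(t)\ge\phi(t)$, which requires $\phi(0)=0$, fails precisely on the tight examples (e.g.\ $t\sinh t<\cosh t$ near $0$). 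As written, your argument simply does not cover a large part of the hypothesis class, including the functions where the bound is attained.

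Second, and more fundamentally, the step you yourself flag as ``the main obstacle'' is the actual mathematical content of the theorem, and your proposal only conjectures a lemma for it. A comparison of a general convex $\phi$ to power laws via a pointwise ``effective elasticity'' cannot work as stated: the pinned-hub PoA depends on $\phi$ at separated arguments ($z-\delta$ and $z$, at both the Nash and optimum points), the elasticity of a convex function can oscillate with scale or grow without bound, and it gives no control when $\phi$ has a positive minimum. The paper's route is genuinely different and avoids power-law comparison altogether: it writes $f=e^{g}$ on the region where $f$ is positive and increasing, substitutes $c=g(y)-g(x)$ into the suitability inequality, and reduces the problem to the averaging statement that for some $z$, $\frac{\int_z^{z+c}h(x)\,dx}{h(z)}\ge c-\epsilon$, proven by contradiction from the divergence of $\int h$ (\Cref{creqlemma,hreqlemma,cneglemma,cposlemma}); this forces every suitable pair to satisfy $\frac c p\le\lambda e^c-\kappa$, hence $\frac\lambda\kappa\ge\frac 2 {e\ln 2}$ by \Cref{clemma}, and the transfer from suitability to PoA goes through \Cref{scalarlowerboundlemma} rather than a direct optimization of your two-knob family. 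To complete your approach you would need a real proof of the uniform comparison lemma, valid also when $h(0)>0$ — and any such proof would likely end up reconstructing the paper's integral lemma.
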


The above theorem means that for any such $h$, a symmetric game exists with cost functions which are all $h$ scaled by some nonnegative weight where the PoA is greater than or arbitrarily close to $\frac 2 {e \ln 2}$.

This result is significant and somewhat surprising for the following reason. Based on \Cref{theorem:x-alpha-norm}, we see that specifically for functions of the form $\norm{z}^\alpha$, cost functions with higher growth rates lead to smaller bounds on the PoA. Given this, we might expect that as we increase the growth rate of the cost function, perhaps even choosing the growth rate to be faster than polynomial, the PoA approaches $1$. However, \Cref{theorem:lowerbound} tells us that this is not the case: regardless of the cost functions used, it is always possible to choose a network so that the PoA approaches $\frac 2 {e \ln 2}$.

Interestingly, if we consider the worst-case PoA of $\norm{z}^\alpha$ as $\alpha$ goes to infinity, this bound indeed approaches $\frac 2 {e \ln 2}$, as illustrated in Figure \ref{fig:boundalpha}. This can be interpreted as follows: to get arbitrarily close to minimizing the worst-case PoA, it is sufficient to choose polynomial cost functions of sufficiently high degree. That is, we do not need to consider functions that grow faster than polynomials in order to keep the worst-case PoA small.

The idea behind the proof of \Cref{theorem:lowerbound} comes from considering the worst-case PoA of $e^x$ -- the suitability inequality (Inequality \ref{ineq:suitability}) for $e^x$ can be simplified by defining $c = b - a$ and replacing it in the inequality, so that the inequality is only in terms of $c$, after which it can be optimized using basic calculus. We then crucially apply this same idea to general functions by writing $f(x) = e^{g(x)}$ and defining $c = g(b) - g(a)$ and replacing it in the inequality, arguing that despite the more general form of $g$, we can derive the same inequality that is immediate in the case of $g(x) = x$.

In addition to the aforementioned consequences, \Cref{theorem:lowerbound} also allows us to determine the exact worst-case PoA of more exotic cost functions for which tight PoA bounds were previously unknown. For example, by combining \Cref{theorem:lowerbound} with \Cref{suitabilitythm}, we can show that the worst-case PoA of $\cosh x$ is exactly $\frac 2 {e \ln 2}$. These results are discussed in \Cref{subsection:exact-poa}.

\begin{figure}[t!]
    \centering
    \includegraphics[width=0.5\textwidth]{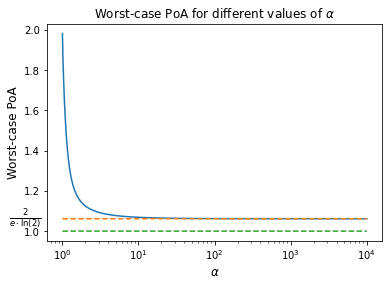} 
    \caption{The worst-case PoA of $\norm{z}^\alpha$ (Equation \ref{eq:zetaintro}) for different values of $\alpha$  is shown by the blue curve. As $\alpha$ approaches $\infty$, it converges to $\frac{2}{e \ln{2}}$.}
    \label{fig:boundalpha}
\end{figure}

\subsection{Special Cases and Equivalent Models}
\label{section:equiv-models}

In this section, we explore three key special cases of the \ourmodel model and compare them with the existing models discussed in the literature. We additionally list as a fourth model a new model that we introduce which allows for opinions to be formed via more general group dynamics so that the definition of PoA changes; while not a special case of the \ourmodel model, it can in fact be reduced to the latter. Thus, our results for the \ourmodel model extend to all four of these models in addition to all previously discussed models.

\textbf{The \quadraticvectormodelcapital Model.} This model generalizes the \vectorfjmodel model. First, we assign a matrix $W_{ij} \in \R^{m \times m}$ as the \emph{weight} of edge $(i, j)$ (instead of the scalar weight $w_{ij}$). For simplicity, assume that $W_{ij} = 0$ if $(i, j) \notin E$. Furthermore, we denote a matrix $R_i \in \R^{m \times m}$ for the \emph{weight} that node $i$ assigns to its internal opinion. We then introduce a cost function for each node $i$ as:
\begin{align}
    \label{eq:quadratic-vectorized:cost}
    c_i(z) = (z_i - s_i)^T R_i (z_i - s_i) + \sum_j (z_i - z_j)^T W_{ij} (z_i - z_j),
\end{align}
where $z = (z_1, \dots, z_n)$ is the sequence of all opinions. We call an instance of this model \emph{undirected} if $W_{ij} = W_{ji}$ for every $i,j$. We mainly study undirected instances of this model in the following sections.

We assume that $W_{ij}$ is symmetric, as replacing it with $\frac{1}{2} (W_{ij} + W_{ij}^T)$ does not change $c_i(z)$. The same argument holds for $R_i$. We require that $W_{ij}$ is positive semidefinite, so that $(z_i - z_j)^T W_{ij} (z_i - z_j)$ is minimized at $z_i = z_j$. Similarly, the same argument holds for $R_i$, but we mostly further assume that $R_i$ is positive definite for update rules to be well defined (see \Cref{eq:quadratic-vectorized:update}). Positive semidefinitness of $W_{ij}$ and $R_i$ is a reasonable assumption, as without it, a Nash equilibrium may not exist (see Appendix \ref{section:non_positive_definite}).

To minimize $c_i(z)$, the following update process can be used:
\begin{align}
    \label{eq:quadratic-vectorized:update}
    z_i(t + 1) = \left( R_i + \sum_j W_{ij} \right)\inv \left(R_i s_i + \sum_j W_{ij} z_j(t) \right),
\end{align}
which is valid under the assumption that $R_i$ is positive definite, because then $R_i + \sum_j W_{ij}$ is positive definite and thus invertible.

In the \vectorfjmodel model, the MiDS (i.e. the matrix $C$) is constant (up to a scalar factor) for every $i$ and $j$, but in this model, we have removed this constraint. Specifically, consider an instance of the \vectorfjmodel model with $n$ nodes, with internal weights $(r_i)_{i=1}^n$, internal opinions $(s_i)_{i=1}^n$, edge weights $(w_{ij})_{i,j=1}^n$, and $C$ as the MiDS. When the graph has no self-loops, and $C$ is symmetric and row-stochastic, and $r_i > 0$ for all $i$, defining $W_{ij} = w_{ij}C$, and $R_i = (r_i + \sum_j w_{ij})I - \sum_j w_{ij} C$ and using $R_i\inv r_i s_i$ as the internal opinion for node $i$, we get an equivalent instance in the \quadraticvectormodel model. Note that the new instance is valid because $r_i > 0$ results in a positive definite $R_i$, and $w_{ij} C$ is positive semidefinite since $w_{ij} \ge 0$ and $C$ is positive semidefinite under these settings. One can rewrite the update rule in \Cref{eq:quadratic-vectorized:update} for the defined instance to get an equivalent update rule to the \vectorfjmodel update rule in \Cref{eq:vectorized-fj:update}.

For completeness, we discuss the convergence of the update rule in \Cref{eq:quadratic-vectorized:update} in the \Cref{appendix:convergence}. In short, we show that for any undirected instance having positive definite $R_i$ for all $i$, we can apply a modification to the network that makes the instance convergent regardless of the starting point (i.e. $z_i(0)$ for all $i$) (see \Cref{theorem:quadratic-vectorized:convergence}). This modification is shown to maintain the Nash equilibrium (i.e. the convergence point).

\textbf{The \vectormunagalamodelcapital  Model.}
This model generalizes the model of \tempdel{Bhawalkar et al.\ }\cite{bhawalkar2013coevolutionary} to allow for multidimensional opinions. The cost function is as follows:
\begin{align}
    \label{equation:munagala-cost}
    c_i(z) = g_i(z_i - s_i) + \sum_{j \neq i} f_{ij}(z_i - z_j),
\end{align}
for $z_i \in \R^m$, where $f_{ij}$ and $g_i$ are real-valued functions that remain fixed. These functions take a vector of size $m$ as input and output a real number. When $m = 1$, this model reduces to the \munagalamodel model, which itself reduces to the \fjmodel model for quadratic functions $f_{ij}$ and $g_i$. Note that when $m=1$ (i.e.\ the opinions are scalars),
this model is equivalent to the \munagalamodel model \ref{bhawalkar:cost}.

To derive the \quadraticvectormodel model introduced in \Cref{eq:quadratic-vectorized:cost}, we observe that $x^T A x$ can be rewritten as $\norm{A^\frac{1}{2} x}^2$ for positive semidefinite $A$, where $\norm{\cdot}$ denotes the Euclidean norm. Therefore, setting $f_{ij}(x) = \norm{W_{ij}^\frac{1}{2} x}^2$ and $g_i(x) = \norm{R_i^\frac{1}{2} x}^2$ yields the \quadraticvectormodel model (\ref{eq:quadratic-vectorized:cost}).

\textbf{The \arbitrarysymmetricmodelcapital Model.} This model provides the most general possible symmetric costs in a network, where symmetry is taken to imply that the cost to person $i$ due to person $j$ is the same as the cost to person $j$ due to person $i$. In this model, we use the following cost function:
\begin{align}
    c_i(z) = g_i(z_i) + \sum_{j \neq i} f_{ij}(z_i, z_j),
\end{align}
where for each person $i$ we have a dimension $m_i \in \Z$ such that $z_i \in \R^{m_i}$. We then have that $f_{ij} : \R^{m_i} \times \R^{m_j} \to \R$ and $g : \R^{m_i} \to \R$. In order to obtain symmetry, we require that $f_{ij}(x, y) = f_{ji}(y, x)$.

It is easy to see by the simple generality of its definition that the \arbitrarysymmetricmodel model is the most general model where the costs between two persons $i$ and $j$ are the same. As an example, to see that it generalizes the \vectormunagalamodel model, we can do the following: marking the cost functions and internal opinions of an instance of \vectormunagalamodel with the superscript "orig," we can take all $m_i$ equal to a fixed $m$, $f_{ij}(x, y) = f_{ij}^{\text{orig}}(x - y)$, and $g_i(x) = g_i^{\text{orig}}(x - s_i^{\text{orig}})$.

In fact, the \arbitrarysymmetricmodel model even generalizes our \ourmodel model. However, surprisingly, our \ourmodel model actually generalizes the \arbitrarysymmetricmodel as well. The proof of this is available in section \ref{section:equivalence_arb_het}.

We therefore have that the \ourmodel model and the \arbitrarysymmetricmodel model generalize each other and so are equivalent. This means that the \ourmodel model is in fact the most general possible model of a symmetric opinion formation game, and so the results we attain for it are the most general of their form. By extension, we have that the \ourmodel model also generalizes  the \fjmodel model, the \vectorfjmodel model, the \quadraticvectormodel model, the \munagalamodel model, and the \vectormunagalamodel model.

\textbf{The \cliquevectormodelcapital  Model.}
We introduce a new model whose costs are identical to the \vectormunagalamodel model, but we consider different dynamics.

In this model, we partition the set of $n$ persons into $k$ \emph{cliques} $C_1, \ldots, C_k$. We assume that the persons inside a given clique $C_i$ will collaborate to minimize the cost to the clique. Specifically, we define a cost function $q_i$ for each clique as follows:

\begin{align*}
    q_i(z) = \sum_{j \in C_i} c_j(z),
\end{align*}

and then define the Nash equilibrium as the state where each clique has minimized its own cost $q_i$; thus, we have $\nabla_{C_i} q_i(x) = 0$ in the Nash equilibrium $x$.

The motivation for this model comes from considering real-life cliques, such as friend groups or groups on social media, where members change their opinions to support the collective good of their group. Notably, while this model is not a direct special case of the \ourmodel model, we demonstrate a reduction from this model to the \ourmodel model in \Cref{section:clique-model}, meaning that all of our results also apply to this model. This reduction involves representing each clique as a single person in the instance of the \ourmodel model, demonstrating the value added by considering heterogeneous opinions. In fact, we can even reduce a clique version of the \ourmodel model to the \ourmodel model itself, meaning that, as the \ourmodel model is the most general possible model of symmetric cost functions for opinion formation, our results are also the most general possible for symmetric cost functions with clique dynamics.

\subsection{Paper Organization}
In \Cref*{section:lb}, we prove that the worst-case price of anarchy across all networks is at least $\frac 2 {e \ln 2}$ for any reasonable cost function, a novel result both in the scalar model and in the multidimensional models we introduce here. The following sections focus more on our multidimensional models. \Cref*{section:general-ub} provides a general upper bound on the price of anarchy of the \ourmodel model for various cost functions, with the bound depending on the $(\lambda, \kappa, p)$-suitability of the cost functions used. In \Cref*{section:tools}, we provide useful tools for finding the closed-form bounds of specific functions. We then apply these tools to find the exact worst-case PoA for functions of the form $\norm{z}^\alpha$ where $\norm{\cdot}$ is any vector-induced norm. Finally, in \Cref*{section:clique-model}, we prove that the \cliquevectormodel model can be reduced to the \ourmodel model.

\section{A Constant Lower Bound on the Worst-Case Price of Anarchy}
\label{section:lb}
In this section, we show \Cref{theorem:lowerbound} restated below, which states that the worst-case PoA of \emph{any} function satisfying the usual requirements is in fact lower bounded by a constant greater than $1$. This result represents a novel lower bound that was previously unknown in the scalar case (i.e. the \munagalamodel model of \tempdel{Bhawalkar et al.\ }\cite{bhawalkar2013coevolutionary}).

\begin{theoremrestate}{\ref{theorem:lowerbound}}
    For any nonzero function $h : \R^m \to \R$ which is nonnegative, differentiable, and convex, the worst-case PoA of $h$ is at least $\frac 2 {e \ln 2}$.
\end{theoremrestate}

As stated in the introduction, this is somewhat surprising: \tempdel{Bhawalkar et al.}\cite{bhawalkar2013coevolutionary}'s result (\Cref{bhawalkarthm})  and our extension of it to the more general model shows that the worst-case PoA of $\norm{z}^\alpha$ (i.e. $\zeta(\alpha)$ in \Cref{eq:zetaintro}) decreases as $\alpha$ goes to $\infty$, which would suggest that by making $\alpha$ sufficiently large we could force the worst-case PoA to approach $1$. Indeed, the fact that this is not the case does not seem to have been known. The value of this limit can in fact be calculated straightforwardly which we do in the below theorem; this value turns out to be exactly $\frac 2 {e \ln 2}$ as illustrated in Figure \ref{fig:boundalpha}.
\begin{theorem}\label{zetalimitthm}
    Define $\zeta(\alpha)$ as in \Cref{eq:zetaintro} to be the worst-case PoA of $\norm{z}^\alpha$ for $\alpha > 1$. Then we have that:
    \begin{align*}
        \lim_{\alpha \to \infty} \zeta(\alpha) = \frac 2 {e\ln 2}.
    \end{align*}
\end{theorem}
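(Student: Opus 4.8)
The plan is to rewrite $\zeta(\alpha)$ as a product of four factors, each having an easily determined limit, so that multiplying the limits gives $\frac{2}{e\ln 2}$. First I would peel off the algebraic prefactor: since $\frac{(\alpha-1)^{\alpha-1}}{\alpha^\alpha} = \frac1\alpha\left(1-\frac1\alpha\right)^{\alpha-1}$ and $\frac1\alpha = \frac{\alpha-1}{\alpha}\cdot\frac{1}{\alpha-1}$, I can regroup
\[
\zeta(\alpha) = \underbrace{\frac{\alpha-1}{\alpha}}_{\to 1}\cdot\underbrace{\left(1-\tfrac1\alpha\right)^{\alpha-1}}_{\to 1/e}\cdot\underbrace{\left(2^{\frac{\alpha}{\alpha-1}}-1\right)^{\alpha}}_{\to 4}\cdot\underbrace{\frac{1}{(\alpha-1)\left(2^{\frac{\alpha}{\alpha-1}}-2\right)}}_{\to 1/(2\ln 2)}.
\]
The first factor clearly tends to $1$, and the second tends to $e^{-1}$ by the standard limit $\left(1-\frac1\alpha\right)^{\alpha}\to e^{-1}$ together with $\left(1-\frac1\alpha\right)^{-1}\to 1$. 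It then remains to justify the limits $4$ and $\frac{1}{2\ln 2}$ claimed for the last two factors.

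For the final two factors I would substitute $\epsilon := \frac{1}{\alpha-1}$, so that $\frac{\alpha}{\alpha-1}=1+\epsilon$, $\alpha = 1+\frac1\epsilon$, and $2^{\frac{\alpha}{\alpha-1}} = 2\cdot 2^{\epsilon}$, with $\epsilon\to 0^+$ as $\alpha\to\infty$. The fourth factor is then routine: its denominator is $(\alpha-1)\left(2\cdot 2^{\epsilon}-2\right) = 2\cdot\frac{2^{\epsilon}-1}{\epsilon}$, which converges to $2\ln 2$ because $\frac{2^{\epsilon}-1}{\epsilon}\to \left.\frac{d}{dx}2^{x}\right|_{x=0}=\ln 2$; hence the factor itself tends to $\frac{1}{2\ln 2}$.

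The third factor is the crux and the main obstacle, since it is a $1^{\infty}$ indeterminate form: the base $2\cdot 2^{\epsilon}-1\to 1$ while the exponent $\alpha\to\infty$. I would resolve it by taking logarithms and using the expansion $2^{\epsilon}=1+\epsilon\ln 2 + \mco(\epsilon^2)$, which gives $2\cdot 2^{\epsilon}-1 = 1 + 2\epsilon\ln 2 + \mco(\epsilon^2)$ and therefore $\ln\!\left(2^{\frac{\alpha}{\alpha-1}}-1\right) = 2\epsilon\ln 2 + \mco(\epsilon^2)$. Multiplying by $\alpha = 1+\frac1\epsilon$ yields $\alpha\,\ln\!\left(2^{\frac{\alpha}{\alpha-1}}-1\right) = \left(1+\tfrac1\epsilon\right)\!\left(2\epsilon\ln 2 + \mco(\epsilon^2)\right) = 2\ln 2 + \mco(\epsilon)\to 2\ln 2$, where the dominant $\frac1\epsilon\cdot 2\epsilon\ln 2 = 2\ln 2$ term survives and all remaining terms are $\mco(\epsilon)$. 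Exponentiating shows the third factor tends to $e^{2\ln 2}=4$.

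Finally I would combine the four limits: $\zeta(\alpha)\to 1\cdot e^{-1}\cdot 4\cdot\frac{1}{2\ln 2} = \frac{4}{2e\ln 2} = \frac{2}{e\ln 2}$, as desired. The only delicate point is the $1^{\infty}$ form in the third factor; everything else is a direct application of standard limits and a first-order Taylor expansion of $2^{\epsilon}$ about $\epsilon = 0$, so I would spend the bulk of the written proof making the logarithmic argument and the control of the $\mco(\epsilon)$ remainder precise.
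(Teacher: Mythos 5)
Your proposal is correct and follows essentially the same route as the paper's proof: both substitute $\epsilon = \frac 1 {\alpha - 1}$ (the paper calls it $x$), factor $\zeta(\alpha)$ into pieces with elementary limits, and resolve the $1^\infty$ form $\left(2^{1+\epsilon}-1\right)^{\frac 1 \epsilon + 1}$ by taking logarithms. The only cosmetic differences are that you split off the factors $\frac{\alpha-1}{\alpha}$ and $\left(1-\frac 1 \alpha\right)^{\alpha-1} \to \frac 1 e$ separately (the paper absorbs the $\frac 1 e$ into its second factor, whose limit is $\frac 4 e$) and that you use a first-order expansion of $2^\epsilon$ where the paper invokes L'H\^opital's rule, which are interchangeable ways of evaluating the same limits.
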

\begin{proof}
    Recalling that $\zeta(\alpha) = \frac {(\alpha - 1)^{\alpha - 1}} {\alpha^\alpha} \cdot \frac {\left(2^{\frac \alpha {\alpha - 1}} - 1\right)^\alpha} {2^{\frac \alpha {\alpha - 1}} - 2}$, define $x = \frac 1 {\alpha - 1}$. We will rewrite $\zeta(\alpha)$ in terms of $x$. First, we have that $\alpha - 1 = \frac 1 x$ and $\alpha = \frac 1 x + 1$, so the first term $\frac {(\alpha - 1)^{\alpha - 1}} {\alpha^\alpha}$ becomes $\frac {\left(\frac 1 x\right)^{\frac 1 x}} {\left(\frac 1 x + 1\right)^{\frac 1 x + 1}}$. Second, we again have that $\alpha = \frac 1 x + 1$ and also have that $\frac \alpha {\alpha - 1} = 1 + x$, so the second term $\frac {\left(2^{\frac \alpha {\alpha - 1}} - 1\right)^\alpha} {2^{\frac \alpha {\alpha - 1}} - 2}$ becomes $\frac {\left(2^{1 + x} - 1\right)^{\frac 1 x + 1}} {2^{1 + x} - 2}$. We can then simplify the first term by multiplying by $\frac {x^{\frac 1 x + 1}} {x^{\frac 1 x + 1}}$; this yields:
    \begin{align*}
        \frac {\left(\frac 1 x\right)^{\frac 1 x}} {\left(\frac 1 x + 1\right)^{\frac 1 x + 1}} = \frac {\left(\frac 1 x\right)^{\frac 1 x}} {\left(\frac 1 x + 1\right)^{\frac 1 x + 1}} \cdot \frac {x^{\frac 1 x + 1}} {x^{\frac 1 x + 1}} = \frac {\left(\frac 1 x\right)^{\frac 1 x} x^{\frac 1 x} x} {\left(x\left(\frac 1 x + 1\right)\right)^{\frac 1 x + 1}} = \frac x {(1 + x)^{\frac 1 x + 1}}.
    \end{align*}
    We therefore have $\zeta(\alpha) = \frac x {(1 + x)^{\frac 1 x + 1}} \cdot \frac {\left(2^{1 + x} - 1\right)^{\frac 1 x + 1}} {2^{1 + x} - 2}$. We finally rearrange by switching the denominators to get $\zeta(\alpha) = \frac x {2^{1 + x} - 2} \cdot \frac {\left(2^{1 + x} - 1\right)^{\frac 1 x + 1}} {(1 + x)^{\frac 1 x + 1}}$. We then proceed by independently computing the limit of each term as $\alpha$ goes to $\infty$; that is, as $x$ goes to $0$.
\newcommand*\LHeq{\ensuremath{\overset{\kern2pt L'H}{=}}}
    The limit of the first term can be evaluated using L'Hôpital's rule:
    \begin{align*}
        \lim_{x \to 0} \frac x {2^{1 + x} - 2} \LHeq \lim_{x \to 0} \frac {\frac d {dx} x} {\frac d {dx} \left(2^{1 + x} - 2\right)} = \lim_{x \to 0} \frac 1 {\ln 2 \cdot 2^{1 + x}} = \frac 1 {\ln 2 \cdot 2}.
    \end{align*}
    The limit of the second term can be evaluated by taking its logarithm, and then evaluating the limit of that expression using L'Hôpital's rule:
    \begin{align*}
        \lim_{x \to 0} \ln \left(\frac {\left(2^{1 + x} - 1\right)^{\frac 1 x + 1}} {(1 + x)^{\frac 1 x + 1}}\right) &= \lim_{x \to 0} \frac {\ln \frac {2^{1 + x} - 1} {1 + x}} {\frac x {1 + x}} &\EqComment{Property of logarithm}\\
        & = \lim_{x \to 0} \frac {\frac d {dx} \ln \frac {2^{1 + x} - 1} {1 + x}} {\frac d {dx} \frac x {1 + x}} &\EqComment{L'Hôpital's rule}\\
     & = \lim_{x \to 0} {\frac {1 + x} {2^{1 + x} - 1} \cdot \left(\frac {\ln 2 \cdot 2^{1 + x}} {1 + x} - \frac {2^{1 + x} - 1} {(1 + x)^2}\right)} &\EqComment{Evaluation of derivatives}\\
     & = \frac 1 {2 - 1} \cdot \left(\frac {\ln 2 \cdot 2} 1 - \frac {2 - 1} 1\right) &\EqComment{Value of limit at $x = 0$}\\
     & = 2\ln2 - 1.
    \end{align*}
    We therefore have that $\lim_{x \to 0} \frac {\left(2^{1 + x} - 1\right)^{\frac 1 x + 1}} {(1 + x)^{\frac 1 x + 1}} = e^{2 \ln 2 - 1} = \frac {2^2} e$. Thus, the limit of $\zeta(\alpha)$ as $\alpha$ goes to $\infty$ is $\frac 1 {\ln 2 \cdot 2} \cdot \frac {2^2} e = \frac 2 {e \ln 2}$.
\end{proof}

Though the limit computed in \Cref{zetalimitthm} turns out to be larger than $1$, one might expect that using cost functions that grow faster than polynomial, we could further decrease the worst-case PoA; however, \Cref{lowerboundthm} tells us that this is in fact not the case. Thus, \Cref{zetalimitthm} in a certain sense provides an upper bound that matches the lower bound provided by \Cref{lowerboundthm}: we can say that polynomial cost functions are sufficient to attain the lowest worst-case PoA possible.

The remainder of this section is split into two subsections. Subsection \ref{subsection:lb-exponential} proves \Cref{lowerboundthm} for the specific function $e^x$ -- this proof lays the framework for the full proof of \Cref{lowerboundthm}, as it contains the key ideas motivating \Cref{lowerboundthm} and also introduces two lemmas necessary for the general proof. Subsection \ref{subsection:lb-general} then completes the proof of \Cref{lowerboundthm} for general functions.

\subsection{The Lower Bound for the Exponential Function}
\label{subsection:lb-exponential}
In this subsection, we prove \Cref{lowerboundthm} for the specific function $e^x$ -- this proof will suggest how \Cref{lowerboundthm} can be proven for general functions.
\begin{theorem}\label{expthm}
    The worst-case PoA of $e^x$ is $\frac 2 {e\ln 2}$.
\end{theorem}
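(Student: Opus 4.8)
The plan is to prove the stated equality by establishing two matching bounds: an upper bound of $\frac{2}{e\ln 2}$ via the suitability machinery already developed (Corollary~\ref{suitcorol}), and a matching lower bound of $\frac{2}{e\ln 2}$ via an explicit symmetric game whose PoA attains this value exactly. The upper bound is essentially a calculus exercise; the real work is the lower-bound construction and verifying its internal consistency.

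For the upper bound I would show that $e^x$ is $(\lambda,\kappa)$-suitable with $\lambda=\frac2e$ and $\kappa=\ln 2$, so that Corollary~\ref{suitcorol} gives worst-case PoA $\le\frac{\lambda}{\kappa}=\frac{2}{e\ln 2}$. The key simplification, flagged in the introduction, is that for $f(x)=e^x$ the suitability inequality~(\ref{ineq:suitability}) collapses to a one-variable condition: substituting $f'(a)=e^a$, setting $c=b-a$, and dividing by $e^a>0$ turns $f'(a)\frac{b-a}{p}\le\lambda f(b)-\kappa f(a)$ into $\frac{c}{p}+\kappa\le\lambda e^c$, required for all $c\in\R$. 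For fixed $p$ this is tightest at the minimizer $c=-\ln(p\lambda)$ of $\lambda e^c-\frac cp$, forcing $\kappa\le\frac{1+\ln(p\lambda)}{p}$. Imposing this for both $p=1$ and $p=2$ and maximizing $\frac{\kappa}{\lambda}$ over $\lambda$ by elementary calculus, the two constraints balance exactly at $\lambda=\frac2e$, where both yield $\kappa=\ln 2$, giving the optimal ratio $\frac{\lambda}{\kappa}=\frac{2}{e\ln 2}$.

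For the lower bound I would exhibit a single symmetric two-node game on $\{1,2\}$ with scalar opinions, a weight-one edge penalty $e^{\alpha(z_1+z_2)}$ (that is, $A_{12}=B_{12}=\alpha$) and weight-one internal penalties $e^{z_i-s_i}$, and show its PoA equals $\frac{2}{e\ln 2}$. The design principle is to make the suitability inequalities above hold with equality: the minimizers $c=-\ln(p\lambda)$ at $\lambda=\frac2e$ identify the target gaps $c_1^*=1-\ln 2$ (internal, $p=1$) and $c_2^*=1-2\ln 2$ (edge, $p=2$). Placing the Nash equilibrium at the origin fixes $e^{-s_i}=-\alpha$ via the first-order conditions $e^{-s_i}+\alpha=0$; requiring the social optimum $y$ to satisfy $y_i=c_1^*$ with edge-gap $\alpha(y_1+y_2)=c_2^*$ then pins down $\alpha=\frac{c_2^*}{2c_1^*}=\frac{1-2\ln 2}{2(1-\ln 2)}<0$. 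A direct computation gives $SC(x)=2(1-\alpha)$ at the origin and $SC(y)=e(\frac12-\alpha)$ at $y$ (using $e^{c_1^*}=\frac e2$ and $e^{c_2^*}=\frac e4$), so that PoA $=\frac{2(1-\alpha)}{e(\frac12-\alpha)}$; substituting the value of $\alpha$ collapses this expression to exactly $\frac{2}{e\ln 2}$.

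The main obstacle is verifying the internal consistency of this tight example. The delicate point is the factor-of-$2$ discrepancy between the individual first-order (Nash) condition, which weights the single edge by $w$, and the social-cost gradient, which weights it by $2w$ since the symmetric edge appears in both players' costs. This forces the compatibility relation $e^{c_1^*}=2e^{c_2^*}$ between the internal and edge gaps, and one must check it holds at $\lambda=\frac2e$ (indeed $e^{1-\ln 2}=\frac e2=2\cdot\frac e4=2e^{1-2\ln 2}$) so that the origin is genuinely a Nash equilibrium \emph{and} $y$ is genuinely the social optimum at the same time. Convexity of these exponential cost functions ensures the first-order conditions identify global minima, so no second-order analysis is needed. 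I also expect this construction, combined with the one-variable reduction of suitability, to be precisely what generalizes to arbitrary convex $h$ by writing $h=e^{g}$ and setting $c=g(b)-g(a)$, as anticipated for Theorem~\ref{theorem:lowerbound}.
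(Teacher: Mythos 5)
Your proof is correct, and its upper-bound half is essentially identical to the paper's: the reduction of the suitability inequality for $e^x$ to the univariate condition $\frac c p \leq \lambda e^c - \kappa$, and the calculus showing the optimal pair is $\lambda = \frac 2 e$, $\kappa = \ln 2$, are precisely \Cref{clemma} plus the first part of the paper's proof of \Cref{expthm}, finished off by \Cref{suitcorol}. Where you genuinely diverge is the lower bound. The paper invokes the generic \Cref{scalarlowerboundlemma} (worst-case PoA is at least the infimum of $\frac \lambda \kappa$ over suitable pairs), whose proof in \Cref{section:poa-lowerbound-inf} adapts Bhawalkar et al.'s three-person construction with dilations $a, b$ of opposite signs. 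You instead hand-build a two-person tight instance: edge cost $e^{\alpha(z_1 + z_2)}$ with $\alpha = \frac{1 - 2\ln 2}{2(1 - \ln 2)} < 0$ and internal costs $e^{z_i - s_i}$ with $e^{-s_i} = -\alpha$. Your arithmetic checks out: each $c_i$ is convex in $z_i$ and the first-order conditions hold at the origin (which is in fact the unique equilibrium), $SC(x) = 2(1 - \alpha)$, $SC(y) = e\left(\frac 1 2 - \alpha\right)$ at $y_1 = y_2 = 1 - \ln 2$, and the ratio collapses to $\frac 2 {e\ln 2}$; moreover your compatibility relation $e^{c_1^*} = 2e^{c_2^*}$ is exactly the stationarity condition of $SC$ at $y$ under the Nash-at-origin normalization, and convexity of $SC$ then certifies $y$ as the true optimum — though note this last worry is dispensable for a lower bound, since any comparison point $y$ gives $\mathrm{PoA} \geq \frac{SC(x)}{SC(y)}$ (the paper makes the same remark in the proof of \Cref{appendixlb2}). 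What each route buys: yours is self-contained, uses two persons instead of three, attains $\frac 2 {e\ln 2}$ exactly rather than as a supremum, and nicely exploits the sign freedom $A_{12} = B_{12} = \alpha < 0$ (a same-direction coupling) that the \ourmodel model permits within \Cref{definition:worst-case-poa}; the paper's heavier machinery, by contrast, is the engine reused for \Cref{lowerboundthm}, where for general $h$ the tight gaps $c_1^*, c_2^*$ are only approachable in the limit via \Cref{hreqlemma} and no closed-form instance like yours exists, so your anticipated generalization by writing $h = e^g$ would still have to route through something like \Cref{scalarlowerboundlemma} rather than through a direct construction.
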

The proof of \Cref{expthm} proceeds by showing that any $\lambda, \kappa$ for which $e^x$ is $(\lambda, \kappa)$-suitable must satisfy the below Inequality \ref{inequality:exp-c-lambda-kappa-suit}. From that statement, we then show that $\frac \lambda \kappa \geq \frac 2 {e\ln 2}$, from which we can then apply another lemma to see that the worst-case PoA of $e^x$ is $\frac 2 {e\ln 2}$. The proof of \Cref{lowerboundthm} follows the same structure; as such, the second step is presented in the below \Cref{clemma}, which is then referenced in both proofs:
\begin{lemma}\label{clemma}
The minimum value of $\frac \lambda \kappa$ for $\lambda, \kappa \ge 0$ satisfying (for any $c \in \mathbb{R}$ and $p = 1,2$):
        \begin{align}
            \label{inequality:exp-c-lambda-kappa-suit}
            \frac c p \leq \lambda e^c - \kappa,
        \end{align}
        is $\frac 2 {e \ln 2}$.
\end{lemma}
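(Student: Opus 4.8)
The plan is to collapse the two-sided, all-$c$ requirement into two clean scalar constraints relating $\kappa$ and $\lambda$, and then minimize $\tfrac\lambda\kappa$ as a one-variable problem. First I would observe that the hypothesis splits into the two families $c \le \lambda e^c - \kappa$ (the case $p=1$) and $\tfrac c2 \le \lambda e^c - \kappa$ (the case $p=2$), each required for all $c \in \R$. Since $c \mapsto \lambda e^c$ is strictly convex, for fixed $\lambda > 0$ each family is equivalent to a single inequality obtained by minimizing the slack over $c$: the map $c \mapsto \lambda e^c - c$ attains its global minimum $1 + \ln\lambda$ at $c = -\ln\lambda$, and $c \mapsto \lambda e^c - \tfrac c2$ attains its global minimum $\tfrac12(1 + \ln 2 + \ln\lambda)$ at $c = -\ln(2\lambda)$. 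Noting that $\lambda = 0$ is infeasible (the constraints fail as $c \to \infty$) and that only $\kappa > 0$ yields a finite ratio, feasibility becomes exactly
\begin{align*}
\kappa \le 1 + \ln\lambda \qquad\text{and}\qquad \kappa \le \tfrac12(1 + \ln 2 + \ln\lambda).
\end{align*}

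Next I would use that for fixed $\lambda$ the ratio $\tfrac\lambda\kappa$ is minimized by taking $\kappa$ as large as feasibility allows, namely $\kappa = \min\{1 + \ln\lambda,\ \tfrac12(1 + \ln 2 + \ln\lambda)\}$. The substitution $u = \ln\lambda$ turns the objective into $F(u) = e^u / \min\{1 + u,\ \tfrac12(1 + \ln 2 + u)\}$. The two linear functions in the denominator cross at $u = \ln 2 - 1$, with $1+u$ the smaller for $u \le \ln 2 - 1$ and the averaged term the smaller for $u \ge \ln 2 - 1$. I would then minimize $F$ piecewise: on the left branch $F = e^u/(1+u)$ has derivative proportional to $u$, hence is decreasing there since $\ln 2 - 1 < 0$; on the right branch $F = 2e^u/(1 + \ln 2 + u)$ has its only critical point at $u = -\ln 2$, which lies strictly to the left of $\ln 2 - 1$, so $\ln 2 + u > 0$ throughout the branch and $F$ is increasing. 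Thus $F$ is minimized precisely at the kink $u = \ln 2 - 1$, i.e.\ $\lambda = 2/e$, where $\kappa = \ln 2$ and $\tfrac\lambda\kappa = \tfrac{2}{e\ln 2}$.

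Finally I would exhibit the matching feasible point $(\lambda, \kappa) = (2/e, \ln 2)$ explicitly and check that both displayed constraints hold with equality (equivalently, that the two binding values of $c$, namely $c = 1 - \ln 2$ and $c = 1 - 2\ln 2$, make Inequality~\ref{inequality:exp-c-lambda-kappa-suit} tight for $p=1$ and $p=2$ respectively); this both certifies achievability and confirms the claimed minimum. I expect the only genuine obstacle to be the bookkeeping around the piecewise minimum and the signs of the two branch derivatives—in particular verifying that the interior critical point $u = -\ln 2$ of the right branch falls outside that branch's domain, so that no spurious interior minimizer competes with the kink. Everything else reduces to convexity of $\lambda e^c$ and elementary single-variable calculus.
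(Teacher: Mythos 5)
Your proposal is correct and follows essentially the same route as the paper: both eliminate $c$ by minimizing $\lambda e^c - \frac{c}{p}$ over $c$ to reduce the hypothesis to the two constraints $\kappa \leq \frac{1 + \ln(\lambda p)}{p}$ for $p = 1, 2$, and both locate the optimum at the intersection of these constraints, yielding $\lambda = \frac{2}{e}$, $\kappa = \ln 2$, and $\frac{\lambda}{\kappa} = \frac{2}{e \ln 2}$. Your closing step merely swaps the paper's candidate-checking (showing each single-constraint optimum violates the other constraint) for a piecewise monotonicity analysis of $e^u / \min\left\{1 + u,\ \tfrac{1}{2}(1 + \ln 2 + u)\right\}$ in $u = \ln \lambda$ -- a tidier but equivalent finish, not a different argument.
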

\begin{proof}
 We first convert the conditions into ones not involving $c$. Each inequality for $p = 1, 2$ is equivalent to the minimum value of $\lambda e^c - \kappa - \frac c p$ being nonnegative. To determine the minimum value of the said expression, we note that its derivative as a function of $c$ is $\lambda e^c - \frac 1 p$, meaning that the minimum value is attained when $c = -\ln (p\lambda)$. The minimum value is therefore $\lambda e^c - \kappa - \frac c p = \frac 1 p - \kappa + \frac {\ln (\lambda p)} p$.
 
 The minimum value being nonnegative, i.e. the inequality $\frac 1 p - \kappa + \frac {\ln (\lambda p)} p \geq 0$, is equivalent to $\kappa \leq \frac {1 + \ln(\lambda p)} p$. This gives two constraints for $p = 1, 2$. The minimum value of $\frac \lambda \kappa$ occurs either at the optimum of $\frac \lambda \kappa$ subject to one of these constraints, or at the intersection.

    If we optimize for one constraint, we get $\kappa = \frac {1 + \ln(\lambda p)} p$, giving us $\frac \lambda \kappa = \frac \lambda {\frac {1 + \ln(\lambda p)} p} = \frac {\lambda p} {1 + \ln(\lambda p)}$.
    
    The minimum value of $\frac x {1 + \ln x}$ occurs at the maximum value of $\frac {1 + \ln x} x$. Differentiating this expression gives $\frac 1 {x^2} - \frac {1 + \ln x} {x^2}$; the maximum value occurs when said derivative is $0$, which can be solved to get $x = 1$.

    Thus, the minimum value of $\frac {\lambda p} {1 + \ln(\lambda p)}$ occurs when $\lambda p = 1$. When $p = 1$, this gives $\lambda = 1$; the $p = 1$ constraint then gives $\kappa \leq 1 + \ln \lambda = 1$, and the $p = 2$ constraint gives $\kappa \leq \frac {1 + \ln(2\lambda)} 2 = \frac {1 + \ln 2} 2 \approx .8466$; this is smaller than $1$, so the optimum for $p = 1$ is irrelevant.

    When $p = 2$, we get $\lambda = \frac 1 2$; the $p = 2$ constraint then gives $\kappa \leq \frac {1 + \ln(2\lambda)} 2 = \frac 1 2$, and the $p = 1$ constraint gives $\kappa \leq 1 + \ln \lambda = 1 - \ln 2 \approx .3066$; this is smaller than $\frac 1 2$, so the optimum for $p = 2$ is irrelevant.

    Therefore, the optimum occurs where the two constraints intersect. This gives the equation $1 + \ln \lambda = \frac {1 + \ln (2\lambda)} 2$, which can be solved by algebra to yield $\ln \lambda = \ln 2 - 1$. This then gives $\lambda = \frac 2 e$, and $\kappa$ can be calculated from the $p = 1$ constraint as $1 + \ln \lambda = \ln 2$. We thus see that the optimal value of $\frac \lambda \kappa $ under the given constraints is $\frac 2 {e\ln 2}$.
\end{proof}
The following lemma will allow us to use the minimum value of $\frac \lambda \kappa$ for which an $\R \to \R$ function is $(\lambda, \kappa)$-suitable to derive a lower bound for its worst-case PoA. This is a modification of a result of \tempdel{Bhawalkar et al.\ }\cite{bhawalkar2013coevolutionary} to show that their proof can be used to get the same result without constraining the cost function to be symmetric. We provide a proof of this lemma in the \Cref{section:poa-lowerbound-inf}; this is mostly a restatement of the argument of \tempdel{Bhawalkar et al.}\cite{bhawalkar2013coevolutionary}, but in one place includes a simpler and more elegant approach that simultaneously removes the reliance on symmetricity. This modification is crucial in order for our results to apply in the full generality of our model, which usefully does not restrict the cost functions to be symmetric. Note that one might expect that the lower bound only happens on large networks, but as shown by \tempdel{Bhawalkar et al.}\cite{bhawalkar2013coevolutionary} and similarly in \Cref{section:poa-lowerbound-inf}, the lower bound can be achieved by a specific network consisting of 3 persons.
\begin{lemma}\label{scalarlowerboundlemma}
    Given a nonnegative, differentiable, and convex function $h : \R \to \R$, the worst-case PoA of $h$ is lower bounded by the infimum of $\frac \lambda \kappa$ for which $\lambda, \kappa > 0$ and $h$ is $(\lambda, \kappa)$-suitable.
\end{lemma}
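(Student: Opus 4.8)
The plan is to prove the bound directly by construction: writing $\rho^* = \inf\{\lambda/\kappa : \lambda,\kappa>0,\ h \text{ is } (\lambda,\kappa)\text{-suitable}\}$, I will exhibit, for every $\epsilon > 0$, a scalar symmetric opinion formation game whose cost functions are nonnegative scalings of $h$ and whose PoA exceeds $\rho^* - \epsilon$; this shows the worst-case PoA is at least $\rho^*$. The first step is to unpack $\rho^*$. By \Cref{suitable}, $(\lambda,\kappa)$-suitability of $h$ is the conjunction, over all $a,b\in\R$ and $p\in\{1,2\}$, of the inequalities $h'(a)(b-a)/p \le \lambda h(b) - \kappa h(a)$, which are linear in $(\lambda,\kappa)$. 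Thus $\rho^*$ is the value of a linear-fractional program over $(\lambda,\kappa)$, and — exactly as in the analysis of \Cref{clemma} for $e^x$ — its optimum is governed by the binding constraints, generically one coming from a pair $(a_1,b_1)$ with $p=1$ and one from a pair $(a_2,b_2)$ with $p=2$, both made tight by the optimal $(\lambda^*,\kappa^*)$. I will fix a feasible $(\lambda,\kappa)$ whose ratio is within $\epsilon$ of $\rho^*$, together with near-tight pairs realizing its two active constraints.

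Next I construct the network, following the three-person gadget of Bhawalkar et al. The guiding principle is that the Nash and optimality conditions differ only by the factor $p$: since a symmetric game makes both endpoints of an edge pay $w\,h(z_i-z_j)$, that edge contributes $w\,h'$ to each player's own gradient but $2w\,h'$ to $\nabla SC$, whereas an internal term $r\,h(z_i-s_i)$ contributes $r\,h'$ to both. Hence an edge behaves like a $p=2$ constraint and an internal term like a $p=1$ constraint. I therefore choose the edge weight $w$, the internal weights $r_i$, and the internal opinions $s_i$ by solving the force-balance equations $\nabla_i c_i(x)=0$ and $\nabla SC(y)=0$ so that at the Nash point $x$ the edge disagreement equals $a_2$ and the relevant internal disagreement equals $a_1$, while at the optimum $y$ they equal $b_2$ and $b_1$. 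With these choices every suitability inequality invoked in the upper-bound argument of \Cref{suitabilitythm} is one of the two tight ones, so that chain of inequalities collapses to equalities and yields $SC(x)/SC(y)=\lambda/\kappa$, i.e.\ a PoA within $\epsilon$ of $\rho^*$. Convexity of $h$ makes $SC$ convex, so the critical point $y$ is genuinely the global social optimum, as the definition of PoA requires.

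The main obstacle is realizing both tight pairs at once subject to nonnegativity ($w,r_i\ge 0$) and force balance. The naive attempt — placing every internal disagreement at the single pair $a_1$ — fails, because the two ``outer'' players of the gadget receive edge forces of opposite sign, so their internal forces $h'$ would have to carry opposite signs, impossible at a common $a_1$ unless $h$ is symmetric (so that $-a_1$ is an equally tight pair with $h'(-a_1)=-h'(a_1)$). This is precisely where Bhawalkar et al.\ rely on symmetry of the cost, and it is the one step I must replace. The fix is to exploit the freedom of the \ourmodel model in the linear maps $A_{ij},B_{ij},R_i$: using the shared-argument encoding $A_{ij}=B_{ji}$, $B_{ij}=A_{ji}$, so that both endpoints of an edge evaluate $h$ at the \emph{same} argument $z_i-z_j$, together with the orientation and sign freedom of these maps, I can balance the forces at each player without ever invoking $h(d)=h(-d)$. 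This yields the same three-person bound for arbitrary nonnegative, convex, differentiable $h$ and removes the reliance on symmetricity, completing the argument that the worst-case PoA of $h$ is at least $\rho^*$.
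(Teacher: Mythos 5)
Your route is, at its core, the same as the paper's (see \Cref{appendixlb1}, \Cref{appendixlb2}): reduce the infimum to one tight $p=1$ pair and one tight $p=2$ pair, instantiate the three-person gadget of Bhawalkar et al.\ with dilated copies of $h$, and replace the reliance on $h(d)=h(-d)$ by the sign/orientation freedom of $A_{ij}, B_{ij}, R_i$, so that the mirrored configuration makes both outer players evaluate $h$ at the same arguments (exactly the paper's choice of, e.g., opposite signs in $R_{\pneg}$ versus $R_{\ppos}$). However, there is one concrete gap: the nonnegativity of the edge weight $w$ is not delivered by the map freedom, and your proposal never establishes it. The four matching conditions (edge argument $=a_2$ at Nash and $=b_2$ at the comparison point, internal argument $=a_1$ at Nash and $=b_1$ at the comparison point) pin down the two dilations and the outer players' opinions with no residual sign freedom, and force balance at the outer player then yields $w = -\,h'(a_1)(b_1-a_1)\big/\bigl(h'(a_2)(b_2-a_2)\bigr)$ up to the positive normalization, so $w \geq 0$ holds \emph{if and only if} $h'(a_1)(b_1-a_1)$ and $h'(a_2)(b_2-a_2)$ have opposite signs. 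That sign structure is precisely the content of the paper's \Cref{appendixlb1}, proved without symmetry by a domination observation: for any feasible $(\lambda,\kappa)$, a pair with $h'(a)(b-a)>0$ has its $p=1$ constraint strictly tighter than its $p=2$ constraint, so a binding $p=2$ constraint must have $h'(a)(b-a)\leq 0$, and symmetrically a binding $p=1$ constraint must have $h'(a)(b-a)\geq 0$. The map freedom genuinely cannot absorb a negative $w$: negating an edge map changes the evaluation points from $a_2, b_2$ to $-a_2, -b_2$, which need not be tight for non-symmetric $h$. The gap is fillable by the domination argument, but your plan asserts it away (``orientation and sign freedom'') rather than proving it, and without it the construction can exit the class of games over which the worst-case PoA is defined.

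Two smaller points. First, suitability comprises infinitely many constraints, so a feasible $(\lambda,\kappa)$ with ratio within $\epsilon$ of the infimum need not have \emph{any} exactly tight pair, and the gadget computation requires exact equalities to collapse; the paper handles this by restricting to rational pairs, truncating to the first $n$ constraints so the optimum $I_n$ is attained with genuinely binding constraints, and using $I_n \to I$ (treating $I_n \leq 1$ trivially, since the PoA is always at least $1$ --- a case your plan also omits). Your ``generically'' and ``near-tight'' need to be replaced by some such limiting device. Second, imposing $\nabla SC(y) = 0$ as a design equation overconstrains the system (the matching conditions already exhaust the degrees of freedom) and is unnecessary: the paper never argues $y$ is the social optimum, only that $SC(x)/SC(y) = \lambda/\kappa$ lower bounds the PoA because the true optimum $y'$ satisfies $SC(y') \leq SC(y)$. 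Dropping that equation restores solvability of your system without weakening the conclusion.
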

We now prove \Cref{expthm}.
\begin{proof}[Proof of \Cref{expthm}.]
    We aim to show that $e^x$ being $(\lambda, \kappa, p)$-suitable, i.e. $f'(x)\frac {y - x} p \leq \lambda f(y) - \kappa f(x)$
    for all $x, y \in \R$ where $f(x) = e^x$, is equivalent to $\lambda, \kappa$ satisfying $\frac c p \leq \lambda e^c - \kappa$ for all $c \in \R$. To see this, first, expand the definition $f(x) = e^x$ to see that the suitability inequality is $e^x \frac {y - x} p \leq \lambda e^y - \kappa e^x$. Now divide both sides by $e^x$ to get $\frac {y - x} p \leq \lambda e^{y - x} - \kappa$. We can then let $c = y - x$ to reduce to the inequality $\frac c p \leq \lambda e^c - \kappa$ as desired.

    Therefore, $e^x$ being $(\lambda, \kappa)$-suitable is equivalent to having $\frac c p \leq \lambda e^c - \kappa$ for $c \in \R, p = 1, 2$, meaning that we can apply \Cref{clemma} to see that the minimum value of $\frac \lambda \kappa$ is $\frac 2 {e\ln 2}$. We now first apply \Cref{suitcorol} to get that the worst-case PoA of $e^x$ is at most $\frac 2 {e \ln 2}$. Following that, we apply \Cref{scalarlowerboundlemma} to get that the worst-case PoA of $e^x$ is at least $\frac 2 {e \ln 2}$.
\end{proof}
While the proof of \Cref{clemma} is longer, the key idea comes in the above proof of \Cref{expthm}, where we divide the suitability inequality by $f(x) = e^x$ to get a simpler univariate inequality (Inequality \ref{inequality:exp-c-lambda-kappa-suit}), which can then be analyzed using calculus as in \Cref{clemma}. The same idea is used to prove \Cref{lowerboundthm}, albeit with significantly more technical challenges.

\subsection{The Lower Bound for General Functions}
\label{subsection:lb-general}

We now prove \Cref{creqlemma}, which will show that Inequality \ref{inequality:exp-c-lambda-kappa-suit} is required for all functions. However, its proof relies on \Cref{hreqlemma}, where most of the technical challenge arises; we introduce \Cref{creqlemma} and its proof first in order to explain the necessity of \Cref{hreqlemma}.
\begin{lemma}\label{creqlemma}
    For any function $f : \R \to \R$ which is differentiable, convex, and eventually strictly increasing on $\R^+$, if we have that $f$ is $(\lambda, \kappa, p)$-suitable for some $\lambda, \kappa \in \R, p > 0$, then we must have that $\lambda, \kappa$ satisfy:
    \begin{align*}
        \frac c p \leq \lambda e^c - \kappa,
    \end{align*}
    i.e. Inequality \ref{inequality:exp-c-lambda-kappa-suit}, for all $c \in \R$.
\end{lemma}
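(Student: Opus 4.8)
The plan is to mimic the proof of \Cref{expthm}, where dividing the suitability inequality by $e^x$ collapsed it to the univariate form $\frac{c}{p}\le\lambda e^{c}-\kappa$; the crucial difference is that the clean substitution $c=y-x$ must be replaced by $c=g(b)-g(a)$ for $g=\ln f$, and the resulting left-hand side will only match $\frac{c}{p}$ in a limit. First I would record that $f$ is eventually positive: since $f$ is convex, $f'$ is nondecreasing, and eventual strict monotonicity forces $f'(a)\ge\delta>0$ for all large $a$, so $f(a)\to+\infty$ and in particular $f(a)>0$ and $f'(a)>0$ eventually. On this range define $g=\ln f$, which is differentiable and strictly increasing, satisfies $g'=f'/f>0$, and has $g(a)\to+\infty$.

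Next, fix an arbitrary $c\in\R$. Writing $(\lambda,\kappa,p)$-suitability in the scalar case as $f'(a)\frac{b-a}{p}\le\lambda f(b)-\kappa f(a)$ and dividing by $f(a)>0$ gives $g'(a)\frac{b-a}{p}\le\lambda\frac{f(b)}{f(a)}-\kappa$. For each large $a$ I would choose $b=b(a)$ with $f(b(a))=e^{c}f(a)$, equivalently $g(b(a))-g(a)=c$; this is solvable because $g$ is strictly increasing with $g(a)\to+\infty$ (for $c<0$ we use that $f(a)\to\infty$, so $e^{c}f(a)$ eventually exceeds $\inf f$ on the range). With this choice the inequality becomes
\[
    g'(a)\,\frac{b(a)-a}{p}\le\lambda e^{c}-\kappa\qquad\text{for all large }a.
\]
Because the right-hand side is constant in $a$, taking $\limsup_{a\to\infty}$ reduces the entire lemma to the single claim that $\limsup_{a\to\infty} g'(a)\,(b(a)-a)\ge c$: granting it, we get $c\le\limsup_a g'(a)(b(a)-a)\le p(\lambda e^{c}-\kappa)$, hence $\frac{c}{p}\le\lambda e^{c}-\kappa$, and since $c$ was arbitrary we are done.

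The hard part is exactly this limiting claim, which is the content of \Cref{hreqlemma}. The obstacle is that $g'(a)(b(a)-a)\ge g(b(a))-g(a)=c$ is precisely the tangent-line bound for a \emph{concave} $g$, but $\ln f$ need not be concave for a general convex $f$, so this inequality can fail at individual base points $a$. Writing the discrepancy as $g'(a)(b(a)-a)-c=\int_a^{b(a)}\big(g'(a)-g'(t)\big)\,dt$, the plan is to show it is nonnegative in the limit along a well-chosen sequence rather than pointwise. The governing intuition, which I expect to make rigorous through \Cref{hreqlemma}, is that by selecting base points $a$ where $g'(a)$ is large, the constraint $g(b(a))-g(a)=c$ forces $b(a)-a\to0$, so the variation of $g'$ across the shrinking interval $[a,b(a)]$ washes out and $g'(a)(b(a)-a)\to c$; this is confirmed in the motivating examples $f=e^{x^{2}}$ and $f=e^{e^{x}}$, where a direct computation yields $g'(a)(b(a)-a)\to c$ despite $\ln f$ being convex. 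Carrying this out uniformly, assuming neither $g\in C^{2}$ nor any monotonicity of $g'$, is where the genuine technical work lies.
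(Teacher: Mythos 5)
Your proposal is correct and is essentially the paper's own proof: you divide the suitability inequality by $f(a)$, write $g = \ln f$ on the tail where $f$ is positive and increasing, choose $b(a)$ so that $g(b(a)) - g(a) = c$, and reduce everything to the single limiting claim that you correctly identify as the content of \Cref{hreqlemma}. The paper does exactly this, with the one bookkeeping step you leave implicit made explicit: it substitutes $z = g(a)$ via the differentiable inverse $j = g^{-1}$ and sets $h = j'$, so that $g'(a)\,(b(a) - a) = \frac{\int_z^{z+c} h(t)\,dt}{h(z)}$ with $h$ continuous, positive, and of divergent integral, which is precisely the form in which \Cref{hreqlemma} applies (and its domain restriction $z \geq \max(M, M-c)$ absorbs your separate solvability remark for $c < 0$).
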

\begin{proof}
    As $f$ is both eventually increasing and convex, we have that $\lim_{x \to \infty} f(x) = \infty$. Therefore, there exists some $L$ such that for $x \geq L$, $f(x)$ is both increasing and positive. On this domain, we can write $f(x) = e^{g(x)}$; we now manipulate the suitability equation for $f$ in a similar manner to the proof of \Cref{expthm}:
    \begin{align*}
        f'(x) \frac {y - x} p &\leq \lambda f(y) - \kappa f(x), &\EqComment{$f$ is $(\lambda, \kappa, p)$-suitable}\\
        e^{g(x)}g'(x)\frac {y - x} p &\leq \lambda e^{g(y)} - \kappa e^{g(x)}, &\EqComment{Definition of $g$, chain rule for $e^{g(x)}$}\\
        g'(x)\frac {y - x} p &\leq \lambda e^{g(y) - g(x)} - \kappa. &\EqComment{Dividing by $e^{g(x)}$}
    \end{align*}
    The idea from here is to let $c = g(y) - g(x)$. To facilitate this, note that because $f$ is differentiable and increasing on $[L, \infty)$, and tends to $\infty$, $g$ does as well; this means that $g$ has an inverse $j$ which is also differentiable, increasing, and tends to $\infty$; define $M = g(L)$ so that $j : [M, \infty) \to [L, \infty)$. For some $z \in \R$ to be chosen later set $x = j(z)$ and $y = j(z + c)$. We then have that $g'(x) = \frac 1 {j'(g(x))} = \frac 1 {j'(z)}$, as well as $g(y) - g(x) = (z + c) - z = c$. The suitability equation then reduces to the following:
    \begin{align*}
        \frac 1 {j'(z)} \frac {j(z + c) - j(z)} p \leq \lambda e^c - \kappa.
    \end{align*}
    We now set $h = j'$, which allows us to rewrite the above as:
    \begin{align*}
        \frac 1 p \frac {\int_z^{z + c} h(x)dx} {h(z)} \leq \lambda e^c - \kappa.
    \end{align*}
    As $j$ is differentiable, increasing, and tends to $\infty$, $h$ is continuous and positive on $[M, \infty)$ and satisfies $\int_M^{\infty} h(x)dx = \infty$. $h$ therefore meets the requirements of \Cref{hreqlemma}; it follows that there exist $z$ such that $\frac {\int_z^{z + c} h(x)dx} {h(z)}$ is greater than or arbitrarily close to $c$. This implies that in order to satisfy the above inequality it is necessary to satisfy $\frac c p \leq \lambda e^c - \kappa$ as desired.
\end{proof}
The motivation for the following lemma is that where for the function $e^x$ we can directly reduce the suitability equation to $c \leq \lambda e^c - \kappa$, for a more general function $f$ we have the more complicated term $\frac {\int_z^{z + c} h(x)dx} {h(z)}$. If $h$ were constant, this term would be equal to $c$. Therefore, we intuitively expect that given that $f$ goes to $\infty$, $h$ must either grow so that $\frac {\int_z^{z + c} h(x)dx} {h(z)}$ is actually greater than $c$, or decline to $0$ slowly enough that it is close enough to a constant to make $\frac {\int_z^{z + c} h(x)dx} {h(z)}$ essentially $c$. Specifically, if $h$ declined geometrically in order to cause $\frac {\int_z^{z + c} h(x)dx} {h(z)}$ to be less than $rc$ for some $r < 1$, then its integral would be finite, in contradiction to the constraints of the lemma.
\begin{lemma}\label{hreqlemma}
    For any function $h$ : $[M, \infty)$ (for some $M \in \R$) which is continuous, positive, and satisfies $\int_M^{\infty} h(x)dx = \infty$, we have that for any $c \in R$ and $\epsilon > 0$, there exist $z \in [\max(M, M - c), \infty)$ such that:
    \begin{align*}
        \frac {\int_z^{z + c} h(x)dx} {h(z)} \geq c - \epsilon.
    \end{align*}
\end{lemma}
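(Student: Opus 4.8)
The plan is to argue by contradiction: I assume that the desired $z$ does not exist, i.e.\ that $\int_z^{z+c} h(x)\,dx < (c-\epsilon)\,h(z)$ holds for \emph{every} $z$ in the relevant domain, and I derive that $\int_M^\infty h$ must be finite, contradicting the hypothesis. Throughout I would write $H(z) = \int_M^z h(x)\,dx$, which is $C^1$, strictly increasing (since $h>0$), satisfies $H' = h$, and tends to $\infty$ as $z\to\infty$ (since $\int_M^\infty h = \infty$). The assumed inequality then reads $H(z+c)-H(z) < (c-\epsilon)H'(z)$.

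The tempting approach is to discretize, evaluating the inequality on the grid $z_k = M+kc$ and trying to show that the interval masses $\int_{z_k}^{z_{k+1}} h$ decay geometrically. I expect this to be the main obstacle, and in fact it does not work directly: the inequality controls the forward integral by the \emph{pointwise} value $h(z_k)$, and for a merely continuous $h$ a pointwise value cannot be bounded by neighboring interval integrals (a thin tall spike breaks any such bound). The key idea that circumvents this is to \emph{average the inequality over $z$} rather than sample it. Concretely, for $c>0$ I would integrate $H(z+c)-H(z) < (c-\epsilon)H'(z)$ over $z \in [M,T]$; a shift of variables gives the exact identity
\begin{align*}
\int_M^T \big(H(z+c)-H(z)\big)\,dz = \int_T^{T+c} H(u)\,du - \int_M^{M+c} H(u)\,du,
\end{align*}
while the right-hand side integrates to $(c-\epsilon)(H(T)-H(M))$. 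Since $H$ is increasing, $\int_T^{T+c} H(u)\,du \ge c\,H(T)$, and substituting this lower bound and rearranging collapses the $H(T)$ terms to $\big(c-(c-\epsilon)\big)H(T) = \epsilon H(T)$, yielding $\epsilon\,H(T) < \int_M^{M+c} H(u)\,du - (c-\epsilon)H(M)$. The right-hand side is a constant independent of $T$, so $H(T)$ is bounded, contradicting $H(T)\to\infty$. This proves the claim for $c>0$.

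For $c=0$ the statement is trivial, since the left-hand side is $0 \ge -\epsilon$. For $c<0$, writing $a=-c>0$ so that the domain is $[M+a,\infty)$ and the negated assumption becomes $\int_{z-a}^z h(x)\,dx > (a+\epsilon)\,h(z)$, I would run the mirror-image computation: integrating over $z\in[M+a,T]$ and using that $\int_{T-a}^T H(u)\,du \le a\,H(T)$ (again by monotonicity of $H$) produces $\epsilon\,H(T) < (a+\epsilon)H(M+a) - \int_M^{M+a} H(u)\,du$, once more forcing $H$ to be bounded and contradicting its divergence. Thus in all cases the assumption that no suitable $z$ exists is impossible, which establishes the lemma. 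The only subtlety to verify carefully is the shift-of-variables identity and the direction of the monotonicity bound (lower bound $c\,H(T)$ when $c>0$, upper bound $a\,H(T)$ when $c<0$); everything else is routine, and notably the argument never needs $h$ to be monotone or differentiable, only continuous, positive, and of divergent integral.
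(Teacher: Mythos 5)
Your proof is correct and takes essentially the same approach as the paper: the paper likewise argues by contradiction, integrates the negated inequality over $z \in [M, T]$ (computing the resulting double integral by Fubini, which is exactly your shift-of-variables identity for $H$), cancels the common bulk term, and contradicts $\int_M^\infty h(x)\,dx = \infty$. The only difference is organizational --- the paper first reduces to $c = \pm 1$ by a dilation and proves two separate sub-lemmas (Lemmas \ref{cneglemma} and \ref{cposlemma}), whereas you handle general $c$ directly, a mild streamlining rather than a different method.
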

\begin{proof}
    First note that the case of $c = 0$ is trivial, as we have that $\frac {\int_z^{z + c} h(x)dx} {h(z)} = 0 = c$. We then note that it suffices to prove the desired statement for $c = \pm 1$. To see why, assume that we have proven the statement for $c = \pm 1$ for all relevant functions $h$. Then for any other $c \neq 0$, we can consider the dilation $h_2(x) = h(|c|x)$, and observe that the equation $\frac {\int_z^{z + c} h(x)dx} {h(z)} \geq c - \epsilon$ is equivalent to $\frac {\int_w^{w + 1} h_2(x)dx} {h_2(w)} \geq \frac c {|c|} - \frac \epsilon {|c|}$ where $w = \frac z {|c|}$. Then, because $\frac c {|c|} = \pm 1$, we know that for all $\frac \epsilon {|c|} > 0$ there exists some $w$ satisfying the inequality, we can conclude that for all $\epsilon > 0$, we can take $z = |c|w$.

    We now split into cases for each of $c = -1, 1$; these are handled by the below lemmas respectively.
    \begin{lemma}\label{cneglemma}
    For any function $h$ : $[M, \infty)$ (for some $M \in \R$) which is continuous, positive, and satisfies $\int_M^{\infty} h(x)dx = \infty$, we have that for any $\epsilon > 0$, there exist $z \in [M + 1, \infty)$ such that
    \begin{align*}
        \frac {\int_z^{z - 1} h(x)dx} {h(z)} \geq -1 - \epsilon.
    \end{align*}
    \end{lemma}
    \begin{lemma}\label{cposlemma}
    For any function $h$ : $[M, \infty)$ (for some $M \in \R$) which is continuous, positive, and satisfies $\int_M^{\infty} h(x)dx = \infty$, we have that for any $\epsilon > 0$, there exist $z \in [M, \infty)$ such that:
    \begin{align*}
        \frac {\int_z^{z + 1} h(x)dx} {h(z)} \geq 1 - \epsilon.
    \end{align*}
    \end{lemma}
    The full proof of the above lemmas can be found in \Cref{appendix:cposneglemma-proof}. With these two lemmas the proof in complete.
\end{proof}

The following last lemma allows us to extend lower bounds given by \Cref{scalarlowerboundlemma} for scalar functions to apply to multidimensional functions as well.
\begin{lemma}\label{restrictionlemma}
    Given a function $h(z) : \R^m \to \R$, consider its restriction along a particular direction; that is, for some unit vector $v \in \R^m$, consider $u(x) = h(xv)$, so that $u : \R \to \R$. The worst-case PoA of $h$ is at least the worst-case PoA of $u$.
\end{lemma}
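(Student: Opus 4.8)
The plan is to show that every symmetric opinion formation game realizing (or approaching) the worst-case PoA of the scalar restriction $u$ can be lifted, without changing its PoA, to a symmetric game whose cost functions are scalar multiples of the multidimensional function $h$. Since by \Cref{definition:worst-case-poa} the worst-case PoA of $h$ is a supremum over exactly such games, this immediately yields that the worst-case PoA of $h$ is at least the worst-case PoA of $u$.

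Concretely, let $\Gamma$ be any symmetric opinion formation game whose cost functions have the form $f_{ij}(x) = w_{ij} u(x)$ and $g_i(x) = r_i u(x)$. Each node carries an opinion $z_i \in \R^{m_i}$, and because $u : \R \to \R$ every disagreement term $A_{ij} z_i + B_{ij} z_j$ and every internal term $R_i z_i - s_i$ is scalar, so $A_{ij} : \R^{m_i} \to \R$, $B_{ij} : \R^{m_j} \to \R$, $R_i : \R^{m_i} \to \R$, and $s_i \in \R$. I would build a companion game $\tilde\Gamma$ on the same node set, with the same node dimensions $m_i$ and the same scalar weights $w_{ij}, r_i$, but using $h$ as the outer function, i.e. $\tilde f_{ij} = w_{ij} h$ and $\tilde g_i = r_i h$ (so now $d_{ij} = e_i = m$). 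The linear maps are obtained by post-composing with the fixed embedding $t \mapsto t v$: set $\tilde A_{ij}(z_i) = (A_{ij} z_i) v$, $\tilde B_{ij}(z_j) = (B_{ij} z_j) v$, $\tilde R_i(z_i) = (R_i z_i) v$, and $\tilde s_i = s_i v$.

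The verification then splits into two routine parts. First, symmetry of $\tilde\Gamma$ is inherited from $\Gamma$: the equalities $A_{ij} = B_{ji}$, $B_{ij} = A_{ji}$ and $w_{ij} = w_{ji}$ pass through post-composition with $v$ to give $\tilde A_{ij} = \tilde B_{ji}$, $\tilde B_{ij} = \tilde A_{ji}$ and $\tilde f_{ij} = \tilde f_{ji}$, while trivially $d_{ij} = m = d_{ji}$. Second, and crucially, the two games have identical cost functions on the common opinion space $\prod_i \R^{m_i}$: using $h(tv) = u(t)$ for scalars $t$ together with linearity,
\[
\tilde f_{ij}\lp \tilde A_{ij} z_i + \tilde B_{ij} z_j \rp = w_{ij} h\lp (A_{ij} z_i + B_{ij} z_j) v \rp = w_{ij} u\lp A_{ij} z_i + B_{ij} z_j \rp,
\]
and analogously $\tilde g_i(\tilde R_i z_i - \tilde s_i) = r_i u(R_i z_i - s_i)$, so that $\tilde c_i(z) = c_i(z)$ for every profile $z$ and every $i$.

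Because the two games agree as functions of the opinion profile, they share the same social cost $SC$, the same set of Nash equilibria (the stationarity condition $\nabla_i c_i = 0$, equivalently the best-response fixed points, is literally the same), and the same social optimum; hence they have exactly the same PoA. Choosing for each $\epsilon > 0$ a game $\Gamma$ with PoA at least (worst-case PoA of $u$) $- \epsilon$ and passing to its lift $\tilde\Gamma$ gives games of function $h$ with PoA arbitrarily close to the worst-case PoA of $u$, which proves the bound. I expect no serious obstacle: this is essentially a bookkeeping reduction, and the only points requiring care are confirming that post-composition with $v$ preserves the four symmetry conditions and that the equilibrium notion defining the PoA transfers verbatim — which it does precisely because the cost functions are identical, not merely similar.
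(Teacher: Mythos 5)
Your proposal is correct and follows essentially the same route as the paper: the paper's proof likewise rewrites each cost term $w_{ij}\,u(A_{ij}z_i + B_{ij}z_j)$ as $w_{ij}\,h\bigl((A_{ij}z_i + B_{ij}z_j)v\bigr)$, i.e.\ post-composes the linear maps with the embedding $t \mapsto tv$, and concludes that every game counted in the worst-case PoA of $u$ is also counted in that of $h$. Your write-up merely makes explicit the bookkeeping (symmetry conditions, identity of costs, and transfer of equilibria) that the paper's shorter proof leaves implicit.
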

\begin{proof}
Any opinion formation game considered in the definition of the worst-case PoA of $u$ has costs defined as follows:
\begin{align*}
    c_i(z) = r_i u(R_i z_i - s_i) + \sum_{j \neq i} w_{ij}u(A_{ij} z_i + B_{ij}z_j).
\end{align*}
$c_i(z)$ can then alternatively be written using $h$:
\begin{align*}
    c_i(z) = r_i h(R_i v z_i - s_i v) + \sum_{j \neq i} h_{ij}u(A_{ij} v z_i + B_{ij} v z_j).
\end{align*}
Any such game is therefore also considered in the definition of the worst-case PoA of $h$. Therefore, the supremum of PoAs of such games (which is the definition of the worst-case PoA of $u$) must also be a lower bound on the PoA of $h$.
\end{proof}

We may now finally prove \Cref{lowerboundthm}. For clarity, we first restate it below:
\begin{theoremrestate}{\ref{lowerboundthm}}
    For any nonconstant function $h : \R^m \to \R$ which is nonnegative, differentiable, and convex, the worst-case PoA of $h$ is at least $\frac 2 {e \ln 2}$.
\end{theoremrestate}
\begin{proof}
    We first prove the statement for functions $f : \R \to \R$. First note that because $f$ is convex and nonzero, it must be that either $f$ is eventually strictly increasing as $x$ goes to $+\infty$, or eventually strictly decreasing as $x$ goes to $-\infty$ (usually, both are true). If only the latter is the case, then we can instead consider the function $x \mapsto f(-x)$, because the definition of worst-case PoA is maintained under dilation of the function as the dilation can be canceled out by negating $A_{ij}, B_{ij}, R_{ij}$. Therefore, we will assume that $f$ is eventually strictly increasing.
    
    Thus, $f$ is eventually increasing. Consider the $(\lambda, \kappa)$ for which $f$ is $(\lambda, \kappa)$-suitable. By \Cref{creqlemma}, because $f$ is $(\lambda, \kappa, p)$-suitable for $p = 1, 2$, it must satisfy $\frac c p \leq \lambda e^c - \kappa$ for all $c \in \R$ for $p = 1, 2$. Then, by \Cref{clemma}, $\frac \lambda \kappa$ must be at least $\frac 2 {e \ln 2}$. Thus, for any $\lambda, \kappa > 0$ such that $f$ is $(\lambda, \kappa)$-suitable, we have that $\frac \lambda \kappa \geq \frac 2 {e \ln 2}$. We can then apply \Cref{scalarlowerboundlemma} to get that the worst-case PoA of $f$ is lower bounded by the infimum of such $\frac \lambda \kappa$. It follows that the worst-case PoA of $f$ is at least $\frac 2 {e \ln 2}$.

    To extend this result to $h : \R^m \to \R$, we first note that if $h$ is nonconstant, then there exists some restriction of $h$ along a particular direction, i.e. some $u(x) = h(xv)$ for a unit vector $v \in \R^m$, such that $u$ is nonconstant. We then have that the worst-case PoA of $u$ is at least $\frac 2 {e \ln 2}$, so by \Cref{restrictionlemma} the worst-case PoA of $h$ is at least the worst-case PoA of $u$ and therefore at least $\frac 2 {e \ln 2}$.
\end{proof}

In addition to the surprising implications of this theorem touched upon at the beginning of this section, it also allows us to give the exact worst-case PoA for a large class of functions derived from $e^x$. As a particularly interesting example, we are able to determine that the worst-case PoA of $\cosh x$ is exactly $\frac 2 {e\ln 2}$. These corollaries are fully described later, in \Cref{subsection:exact-poa}, after we have developed the tools to prove matching upper bounds on the worst-case PoA for those functions.

\section{General Upper Bound on the Price of Anarchy}
\label{section:general-ub}

In this section, we will mainly prove \Cref{suitabilitythm}, restated below.

\begin{theoremrestate}{\ref{suitabilitythm}}
    Consider a symmetric opinion formation game whose cost functions $f_{ij}, g_i$ are nonnegative and differentiable. If for some $\lambda, \kappa > 0$, we have that all $f_{ij}$ are $(\lambda, \kappa, 2)$-suitable and all $g_i$ are $(\lambda, \kappa, 1)$-suitable, then the price of anarchy of the opinion formation game is at most $\frac \lambda \kappa$.
\end{theoremrestate}

The above theorem is proven for the scalar case (i.e. when $m = 1$) by \tempdel{Bhawalkar et al.\ }\cite{bhawalkar2013coevolutionary}. We will now use the remainder of this section to extend their result to our more general model.
\begin{proof}[Proof of \Cref{suitabilitythm}]
    The proof is structured as follows: we first prove the following inequality (recall that $c_i(z)$ defines the penalty of a set of expressed opinions $z$ for person $i$, and $SC(z) = \sum_i c_i(z)$ defines the social cost of $z$):
    \begin{align}
        \label{equation:roughgarden-inequality}
        \sum_i \nabla_i c_i(x)^T (y_i - x_i) \leq \lambda SC(y) - \kappa SC(x).
    \end{align}
    We then use a result of Roughgarden and Schoppmann \cite{roughgarden2015local} to derive a bound on the PoA from inequality \ref{equation:roughgarden-inequality}. The idea behind the proof of Inequality \ref{equation:roughgarden-inequality} is to divide it into smaller inequalities for each $i$ (corresponding to the cost due to the difference between a person's internal and expressed opinion) and each pair $(i, j)$ (corresponding to the cost due to the difference between two persons' expressed opinions), simplify the inequality corresponding to pairs by making use of the fact that the game is symmetric, then apply the fact that the cost functions are suitable to prove the resulting inequalities.

    Correspondingly, we define the following auxiliary variables:
    \begin{align*}t_{ij} & = (\nabla_i f_{ij}(A_{ij}x_i + B_{ij}x_j))^T(y_i - x_i) + (\nabla_j f_{ji}(A_{ji}x_j + B_{ji}x_i))^T(y_j - x_j),                         \\
        u_{ij} & = \lambda f_{ij}(A_{ij}y_i + B_{ij}y_j) + \lambda f_{ji}(A_{ji}y_j + B_{ji}y_i) - \kappa f_{ij}(A_{ij}x_i + B_{ij}x_j) - \kappa f_{ji}(A_{ji}x_j + B_{ji}x_i), \\
        v_i    & = (\nabla_i g_i(R_i x_i - s_i))^T(y_i - x_i),                                                                       \\
        w_i    & = \lambda g_i(R_i y_i - s_i) - \kappa g_i(R_i x_i - s_i).
    \end{align*}
    
    The following two lemmas show how we can express the components of the local smoothness inequality in terms of the above auxiliary variables.
    \begin{lemma}\label{suithelper1}
        \begin{align*}
            \sum_i \nabla_i c_i(x)^T(y_i - x_i)
             & = \sum_i v_i + \sum_{i < j} t_{ij}.
        \end{align*}
    \end{lemma}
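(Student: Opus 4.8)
The plan is to expand the left-hand side one summand at a time and then regroup an ordered double sum into a sum over unordered pairs. First I would recall that $c_i(z) = g_i(R_i z_i - s_i) + \sum_{j \neq i} f_{ij}(A_{ij} z_i + B_{ij} z_j)$ and differentiate it with respect to $z_i$ only. Since $z_i$ enters $c_i$ through the $g_i$ term and through the first argument of each $f_{ij}$ (the terms $f_{i'j}$ with $i' \neq i$ belong to other persons' costs and never appear in $c_i$), linearity of the gradient gives
\[
\nabla_i c_i(x) = \nabla_i g_i(R_i x_i - s_i) + \sum_{j \neq i} \nabla_i f_{ij}(A_{ij} x_i + B_{ij} x_j),
\]
where throughout $\nabla_i$ denotes the gradient of the indicated summand, viewed as a function of $z_i$ alone, evaluated at $z = x$. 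Taking the inner product with $y_i - x_i$ and summing over $i$, the $g_i$ contributions are by definition exactly $\sum_i v_i$, leaving
\[
\sum_i \nabla_i c_i(x)^T (y_i - x_i) = \sum_i v_i + \sum_i \sum_{j \neq i} (\nabla_i f_{ij}(A_{ij} x_i + B_{ij} x_j))^T (y_i - x_i).
\]

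The second and final step is to identify the remaining double sum, taken over all ordered pairs $(i, j)$ with $i \neq j$, as $\sum_{i < j} t_{ij}$. For each unordered pair $\{i, j\}$ with $i < j$, exactly the two ordered pairs $(i, j)$ and $(j, i)$ occur in the double sum; the first contributes $(\nabla_i f_{ij}(A_{ij} x_i + B_{ij} x_j))^T (y_i - x_i)$ and the second contributes $(\nabla_j f_{ji}(A_{ji} x_j + B_{ji} x_i))^T (y_j - x_j)$. Their sum is precisely $t_{ij}$ as defined, so summing over $i < j$ recovers the full double sum and the claimed identity follows.

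The argument has no genuinely hard step; it is a purely algebraic regrouping that isolates each person's self-cost ($g_i$) contribution into $v_i$ and each pair's mutual-cost ($f_{ij}$) contribution into $t_{ij}$. The one point to state carefully is the meaning of $\nabla_i$ applied through the linear maps $R_i$ and $A_{ij}$ — by the chain rule $\nabla_i f_{ij}(A_{ij} x_i + B_{ij} x_j) = A_{ij}^T (\nabla f_{ij})(A_{ij} x_i + B_{ij} x_j)$, and analogously for $g_i$ — though this detail only affects how one later evaluates the individual terms, not the identity itself. I would also note that symmetry of the game plays no role in this lemma; it enters only afterward, when each $t_{ij}$ is bounded against the corresponding $u_{ij}$.
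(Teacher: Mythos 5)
Your proposal is correct and follows essentially the same route as the paper's proof: expand $\nabla_i c_i$ term by term, identify the $g_i$ contributions with $\sum_i v_i$, and fold the ordered double sum over pairs $(i,j)$, $i \neq j$, into $\sum_{i<j} t_{ij}$ by combining each $(i,j)$ with its mirror $(j,i)$. Your added remarks --- that the chain rule through $A_{ij}$, $R_i$ only matters for later evaluation of the terms, and that symmetry of the game is not used in this lemma --- are accurate and consistent with how the paper defers both points to Lemmas \ref{suithelper3} and \ref{suithelper4}.
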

    \begin{proof}
        First recall that the cost function for person $i$ is $c_i(z) = g_i(R_i z_i - s_i) + \sum_{j \neq i} f_{ij}(A_{ij} z_i + B_{ij} z_j)$. We can thus write:
        \begin{align*}
            \sum_i \nabla_i c_i(x)^T(y_i - x_i) = \sum_i\left[(\nabla_i g_i(R_i x_i - s_i))^T(y_i - x_i) + \sum_{j \neq i} (\nabla_i f_{ij}(A_{ij} x_i + B_{ij} x_j))^T(y_i - x_i)\right],
        \end{align*}
        by the definition of $c_i(x)$. Then, extracting the inner sum we have:
        \begin{align*}
            \sum_i \nabla_i c_i(x)^T(y_i - x_i) = \sum_i (\nabla_i g_i(R_i x_i - s_i))^T(y_i - x_i) + \sum_{i, j}(\nabla_i f_{ij}(A_{ij} x_i + B_{ij} x_j))^T(y_i - x_i).
        \end{align*}
        We then have that $\sum_i (\nabla_i g_i(R_i x_i - s_i))^T(y_i - x_i) = \sum_i v_i$ by definition. The other term can be manipulated by combining the term for $(i, j)$ with that for $(j, i)$:
        \begin{align*}
            & \sum_{i, j}(\nabla_i f_{ij}(A_{ij} x_i + B_{ij} x_j))^T(y_i - x_i) \\
             & = \sum_{i < j}(\nabla_i f_{ij}(A_{ij} x_i + B_{ij}x_j))^T(y_i - x_i) + (\nabla_j f_{ji}(A_{ji} x_j + B_{ji} x_i))^T(y_j - x_j) \\
             & = \sum_{i < j} t_{ij}.
        \end{align*}
        We therefore have that $\sum_{i, j}(\nabla_i f_{ij}(x_i - x_j))^T(y_i - x_i) = \sum_{i < j} t_{ij}$ and so $\sum_i \nabla_i c_i(x)^T(y_i - x_i) = \sum_i v_i + \sum_{i < j} t_{ij}$ as desired.
    \end{proof}

    \begin{lemma}\label{suithelper2}
        \begin{align*}
            \lambda SC(y) - \kappa SC(x)
             & = \sum_i w_i + \sum_{i < j} u_{ij}.
        \end{align*}
    \end{lemma}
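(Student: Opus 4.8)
The plan is to mirror the approach of \Cref{suithelper1}, expanding the definition of the social cost and regrouping terms so that they match the auxiliary variables $w_i$ and $u_{ij}$. Since no gradients appear here, this lemma is purely a matter of bookkeeping: every step is a direct substitution, and no appeal to the symmetry of the game is needed. First I would recall that $SC(z) = \sum_i c_i(z)$ with $c_i(z) = g_i(R_i z_i - s_i) + \sum_{j \neq i} f_{ij}(A_{ij} z_i + B_{ij} z_j)$, so that
\begin{align*}
  SC(z) = \sum_i g_i(R_i z_i - s_i) + \sum_{i, j} f_{ij}(A_{ij} z_i + B_{ij} z_j),
\end{align*}
where the double sum ranges over all ordered pairs $(i,j)$ with $i \neq j$.

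Next I would substitute $z = y$ and $z = x$, form the combination $\lambda SC(y) - \kappa SC(x)$, and split the result into the portion coming from the $g_i$ terms and the portion coming from the $f_{ij}$ terms. The $g_i$ portion is $\sum_i \left[\lambda g_i(R_i y_i - s_i) - \kappa g_i(R_i x_i - s_i)\right]$, which is exactly $\sum_i w_i$ by definition. For the $f_{ij}$ portion, namely $\sum_{i,j} \left[\lambda f_{ij}(A_{ij} y_i + B_{ij} y_j) - \kappa f_{ij}(A_{ij} x_i + B_{ij} x_j)\right]$, I would pair each ordered term $(i,j)$ with its partner $(j,i)$ and collect both into a single sum over unordered pairs with $i < j$. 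The combined contribution of $(i,j)$ and $(j,i)$ is precisely $u_{ij}$ as defined, so this portion equals $\sum_{i<j} u_{ij}$. Adding the two portions yields $\lambda SC(y) - \kappa SC(x) = \sum_i w_i + \sum_{i<j} u_{ij}$ as claimed.

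The only place requiring genuine care — and the closest thing to an obstacle — is the regrouping of the ordered double sum into a sum over unordered pairs, exactly as in \Cref{suithelper1}. Here the regrouping is in fact cleaner, because $u_{ij}$ already contains both $f_{ij}$ and $f_{ji}$ explicitly, so we merely match each ordered term with its transpose rather than needing to invoke $f_{ij} = f_{ji}$ or any other symmetry property. Consequently the identity holds for arbitrary (not necessarily symmetric) games, and the computation is a routine expansion.
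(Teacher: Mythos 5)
Your proposal is correct and follows essentially the same route as the paper's proof: expand $SC$ via the definition of $c_i$, identify the $g_i$ portion with $\sum_i w_i$, and regroup the ordered double sum over $f_{ij}$ terms by pairing $(i,j)$ with $(j,i)$ to obtain $\sum_{i<j} u_{ij}$. Your side remark is also accurate — the paper's proof of this lemma likewise uses no symmetry assumption, since $u_{ij}$ is defined to contain both $f_{ij}$ and $f_{ji}$ explicitly (symmetry enters only later, in \Cref{suithelper4}).
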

    \begin{proof}
        Recall that $SC(z) = \sum_i c_i(z)$; we first expand $\lambda SC(y) - \kappa SC(x)$ using the definition of $c_i(z)$:
        \begin{align*}
            \lambda SC(y) - \kappa SC(x) = \lambda \sum_i c_i(y) - \kappa \sum_i c_i(x).
        \end{align*}
        Then, expanding the definition of $c_i(x)$ we have:
        \begin{align*}
            \lambda SC(y) - \kappa SC(x) = &\lambda \sum_i \left[g_i(R_i y_i - s_i) + \sum_{j \neq i} f_{ij}(A_{ij} y_i + B_{ij} y_j)\right] \\ &- \kappa \sum_i \left[g_i(R_i x_i - s_i) + \sum_{j \neq i} f_{ij}(A_{ij} x_i + B_{ij} x_j)\right].
        \end{align*}
        By extracting inner sums and grouping together like sums, we get:
        \begin{align*}
            \lambda SC(y) - \kappa SC(x) = \sum_i \left[\lambda g_i(R_i y_i - s_i) - \kappa g_i(R_i x_i - s_i)\right] + \sum_{i, j}\left[\lambda f_{ij}(A_{ij} y_i + B_{ij} y_j) - \kappa f_{ij}(A_{ij} x_i + B_{ij} x_j)\right].
        \end{align*}
        We then have that $\sum_i \left[\lambda g_i(R_i y_i - s_i) - \kappa g_i(R_i x_i - s_i)\right] = \sum_i w_i$ by definition. The other term can be manipulated by combining the term for $(i, j)$ with that for $(j, i)$:
        \begin{align*}
            & \sum_{i, j}\left[\lambda f_{ij}(A_{ij} y_i + B_{ij} y_j) - \kappa f_{ij}(A_{ij} x_i + B_{ij} x_j)\right]\\
             & = \sum_{i < j}\left[\lambda f_{ij}(A_{ij} y_i + B_{ij} y_j) + \lambda f_{ji}(A_{ij} y_j + B_{ij} y_i) - \kappa f_{ij}(A_{ij} x_i + B_{ij} x_j) - \kappa f_{ji}(A_{ij}x_j + B_{ij} x_i)\right] \\
             & = \sum_{i < j} u_{ij}.
        \end{align*}
        We therefore have that $\sum_{i, j}\left[\lambda f_{ij}(A_{ij} y_i + B_{ij} y_j) - \kappa f_{ij}(A_{ij} x_i + B_{ij} x_j)\right] = \sum_{i < j} u_{ij}$, and so $\lambda SC(y) - \kappa SC(x)
            = \sum_i w_i + \sum_{i < j} u_{ij}$ as desired.
    \end{proof}
    The following two lemmas prove the key inequalities necessary to show Inequality \ref{equation:roughgarden-inequality}. \Cref{suithelper3} essentially follows directly from $(\lambda, \kappa, 1)$-suitability.
    \begin{lemma}\label{suithelper3}
        For all persons $i$, we have $v_i \leq w_i$.
    \end{lemma}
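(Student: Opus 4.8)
The plan is to obtain this directly from the $(\lambda, \kappa, 1)$-suitability of $g_i$ together with a single application of the chain rule. The only subtlety is interpreting the symbol $\nabla_i g_i(R_i x_i - s_i)$ in the definition of $v_i$ correctly: since $v_i$ arises in \Cref{suithelper1} as the contribution of the term $g_i(R_i z_i - s_i)$ to $\nabla_i c_i(x)^T(y_i - x_i)$, it denotes the gradient with respect to $z_i$ of the composite map $z_i \mapsto g_i(R_i z_i - s_i)$, evaluated at $x_i$. By the chain rule this equals $R_i^T \nabla g_i(R_i x_i - s_i)$, where $\nabla g_i$ now denotes the ordinary gradient of $g_i$ evaluated at its argument.

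First I would rewrite $v_i$ using this identity and pass the factor $R_i^T$ through the inner product, so that
\begin{align*}
    v_i = (R_i^T \nabla g_i(R_i x_i - s_i))^T (y_i - x_i) = (\nabla g_i(R_i x_i - s_i))^T R_i (y_i - x_i).
\end{align*}
Next I would apply \Cref{suitable} (the definition of $(\lambda, \kappa, p)$-suitability) to $g_i$ with $p = 1$, choosing $a = R_i x_i - s_i$ and $b = R_i y_i - s_i$. With this choice we have $b - a = R_i(y_i - x_i)$, so (because $p = 1$) the left-hand side of Inequality \ref{ineq:suitability} is exactly $(\nabla g_i(a))^T(b - a) = (\nabla g_i(R_i x_i - s_i))^T R_i(y_i - x_i)$, which is precisely the expression for $v_i$ just derived. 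The right-hand side is $\lambda g_i(R_i y_i - s_i) - \kappa g_i(R_i x_i - s_i)$, which equals $w_i$ by definition. Suitability therefore yields $v_i \leq w_i$ immediately.

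There is no real obstacle here: the entire content of the lemma is the observation that the linear transformation $R_i$ appearing inside $g_i$ is absorbed cleanly by the substitution $a = R_i x_i - s_i$, $b = R_i y_i - s_i$, so that the chain-rule factor $R_i$ on the gradient side matches the factor $R_i$ produced by $b - a$ on the suitability side. The only point requiring care is to correctly pass the $R_i^T$ arising from the gradient of the composite through the transpose in the inner product, which is exactly the step made explicit in the displayed rewriting of $v_i$ above.
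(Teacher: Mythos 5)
Your proposal is correct and follows essentially the same route as the paper's proof: apply the chain rule to interpret $\nabla_i g_i(R_i x_i - s_i)$ as $R_i^T \nabla g_i(R_i x_i - s_i)$, pass $R_i$ through the inner product, and invoke $(\lambda, \kappa, 1)$-suitability with $a = R_i x_i - s_i$ and $b = R_i y_i - s_i$ so that $b - a = R_i(y_i - x_i)$. Your write-up is in fact slightly cleaner on the one delicate point, since it makes explicit the interpretation of $\nabla_i$ that the paper's displayed computation leaves implicit.
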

    \begin{proof}
        We apply the suitability inequality for $g_i$ as follows:
        \begin{align*}
            v_i
            &= (\nabla_i g_i(R_i x_i - s_i))^T R_i(y_i - x_i) &\EqComment{Definition of $v_i$}\\
            &= (R_i^T \nabla g_i(R_i x_i - s_i))^T (y_i - x_i)&\EqComment{Chain rule for $\nabla_i$}\\
            &= \nabla g_i(R_i x_i - s_i)^T R_i (y_i - x_i)&\EqComment{Expanding transpose}\\
            &= \nabla g_i(R_i x_i - s_i)^T ((R_i y_i - s_i) - (R_i x_i - s_I)) &\EqComment{Adding and subtracting $s_i$}\\
            &\leq \lambda g_i(R_i y_i - s_i) - \kappa g_i(R_i x_i - s_i)&\EqComment{$g_i$ is $(\lambda, \kappa, 1)$-suitable}\\
            &= w_i. &\EqComment{Definition of $w_i$}
        \end{align*}
    \end{proof}
    \begin{lemma}\label{suithelper4}
        For all persons $i, j$, we have $t_{ij} \leq u_{ij}$.
    \end{lemma}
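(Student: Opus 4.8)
The plan is to exploit the symmetry of the game to collapse all four terms appearing in $t_{ij}$ and $u_{ij}$ into a single application of the $(\lambda,\kappa,2)$-suitability of $f_{ij}$. First I would set $a := A_{ij}x_i + B_{ij}x_j$ and $b := A_{ij}y_i + B_{ij}y_j$. The symmetry conditions $A_{ij} = B_{ji}$ and $B_{ij} = A_{ji}$ give $A_{ji}x_j + B_{ji}x_i = B_{ij}x_j + A_{ij}x_i = a$ and likewise $A_{ji}y_j + B_{ji}y_i = b$, so that the argument of $f_{ji}$ evaluated at $x$ (resp. $y$) coincides with that of $f_{ij}$. Combined with $f_{ij} = f_{ji}$, this immediately yields
\begin{align*}
u_{ij} = 2\lambda f_{ij}(b) - 2\kappa f_{ij}(a).
\end{align*}

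Next I would simplify $t_{ij}$ using the chain rule. Writing $\nabla f_{ij}$ for the gradient of $f_{ij}$ with respect to its $\R^{d_{ij}}$ argument, we have $\nabla_i f_{ij}(A_{ij}x_i+B_{ij}x_j) = A_{ij}^T \nabla f_{ij}(a)$ and, for the second term, $\nabla_j f_{ji}(A_{ji}x_j+B_{ji}x_i) = A_{ji}^T\nabla f_{ji}(a) = B_{ij}^T\nabla f_{ij}(a)$, again invoking symmetry and $f_{ij}=f_{ji}$. Substituting these into the definition of $t_{ij}$ and factoring out $\nabla f_{ij}(a)^T$ gives
\begin{align*}
t_{ij} = \nabla f_{ij}(a)^T\big(A_{ij}(y_i-x_i) + B_{ij}(y_j-x_j)\big) = \nabla f_{ij}(a)^T(b-a),
\end{align*}
since $A_{ij}(y_i-x_i)+B_{ij}(y_j-x_j) = (A_{ij}y_i+B_{ij}y_j)-(A_{ij}x_i+B_{ij}x_j) = b-a$.

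Finally, the desired inequality $t_{ij}\le u_{ij}$ becomes $\nabla f_{ij}(a)^T(b-a) \le 2\lambda f_{ij}(b) - 2\kappa f_{ij}(a)$, which after dividing by $2$ is precisely $(\nabla f_{ij}(a))^T\frac{b-a}{2} \le \lambda f_{ij}(b) - \kappa f_{ij}(a)$, i.e. the $(\lambda,\kappa,2)$-suitability of $f_{ij}$ applied at the points $a,b$, which holds by hypothesis. I expect the main obstacle to be the bookkeeping in the second step: one must track carefully which gradient is taken in which variable and apply the identities $A_{ji}=B_{ij}$, $B_{ji}=A_{ij}$ so that the two separate gradient terms genuinely merge into $\nabla f_{ij}(a)^T(b-a)$. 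This merging is exactly what produces the factor of $2$, and it explains why suitability is needed with $p=2$ for the pairwise terms $f_{ij}$ whereas $p=1$ suffices for the $g_i$ terms in \Cref{suithelper3}.
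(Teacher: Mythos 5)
Your proposal is correct and follows essentially the same route as the paper's proof: you use the symmetry conditions $A_{ji}=B_{ij}$, $B_{ji}=A_{ij}$, $f_{ij}=f_{ji}$ to collapse $u_{ij}$ to $2\left[\lambda f_{ij}(b) - \kappa f_{ij}(a)\right]$ and, via the chain rule, to merge the two gradient terms of $t_{ij}$ into $\nabla f_{ij}(a)^T(b-a)$, so that the inequality is exactly the $(\lambda,\kappa,2)$-suitability of $f_{ij}$ at $a = A_{ij}x_i + B_{ij}x_j$ and $b = A_{ij}y_i + B_{ij}y_j$. Your closing remark correctly identifies the source of the factor of $2$ and why $p=2$ is needed for the pairwise costs, matching the paper's own observation.
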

    \begin{proof}
        We first manipulate $t_{ij}$ as follows, noting that the symmetry of opinion formation game implies that $A_{ji} = B_{ij}$ and $B_{ji} = A_{ij}$:
        \begin{align*}
            t_{ij}
             & = (\nabla_i f_{ij}(A_{ij} x_i + B_{ij} x_j))^T(y_i - x_i) + (\nabla_j f_{ji}(A_{ji} x_j + B_{ji} x_i))^T(y_j - x_j) & \EqComment{Definition of $t_{ij}$} \\
             & = (\nabla_i f_{ij}(A_{ij} x_i + B_{ij} x_j))^T(y_i - x_i) + (\nabla_j f_{ij}(B_{ij} x_j + A_{ij} x_i))^T(y_j - x_j) &\EqComment{Symmetry}\\
             & = (A_{ij}^T\nabla f_{ij}(A_{ij} x_i + B_{ij} x_j))^T(y_i - x_i) + (B_{ij}^T \nabla f_{ij}(B_{ij} x_j + A_{ij} x_i))^T(y_j - x_j) &\EqComment{Chain rule}  \\
             & = \nabla f_{ij}(A_{ij} x_i + B_{ij} x_j)^T A_{ij}(y_i - x_i) + \nabla f_{ij}(B_{ij} x_j + A_{ij} x_i)^T B_{ij}(y_j - x_j) &\EqComment{Transpose}  \\
             & = \nabla f_{ij}(A_{ij} x_i + B_{ij} x_j)^T((A_{ij} y_i + B_{ij}y_j) - (A_{ij}x_i + B_{ij}x_j)).
        \end{align*}
        We then observe that in the definition of $u_{ij}$, by symmetry we have that the terms $\lambda f_{ij}(A_{ij} y_i + B_{ij} y_j)$ and $\lambda f_{ji}(A_{ji} y_j + B_{ji} y_i)$ are equal, and the other two terms are similarly equal as well, meaning that:
        \begin{align*}
            u_{ij}
             & = \lambda f_{ij}(A_{ij} y_i + B_{ij} y_j) + \lambda f_{ji}(A_{ji} y_j + B_{ji} y_i) - \kappa f_{ij}(A_{ij} x_i + B_{ij} x_j) - \kappa f_{ji}(A_{ji} x_j + B_{ji} x_i) \\
             & = 2\left[\lambda f_{ij}(A_{ij} y_i + B_{ij} y_j) - \kappa f_{ij}(A_{ij} x_i + B_{ij} x_j)\right].
        \end{align*}
        The desired inequality can now be expressed as:
        \begin{align*}
        \nabla f_{ij}(A_{ij} x_i + B_{ij} x_j)^T((A_{ij} y_i + B_{ij}y_j) - (A_{ij}x_i + B_{ij}x_j)) \leq 2\left[\lambda f_{ij}(A_{ij} y_i + B_{ij} y_j) - \kappa f_{ij}(A_{ij} x_i + B_{ij} x_j)\right].
        \end{align*}
        If we let $a = A_{ij} x_i + B_{ij} x_j$ and $b = A_{ij} y_i + B_{ij} y_j$, then this is:
        \begin{align*}
            (\nabla_i f_{ij}(a))^T (b - a)
             & \leq 2\left[\lambda f_{ij}(b) - \kappa f_{ij}(a)\right],
        \end{align*}
        i.e. $(\nabla_i f_{ij}(a))^T \frac {b - a} p
            \leq \lambda f_{ij}(b) - \kappa f_{ij}(a)$, which is precisely the definition of $f_{ij}$ being $(\lambda, \kappa, 2)$-suitable, so we are done.
    \end{proof}
    The key idea in \Cref{suithelper4} is that the symmetry of the opinion formation game causes the inequality to collapse to the definition of $(\lambda, \kappa, 2)$-suitability. 
    
    We can now show Inequality \ref{equation:roughgarden-inequality}:
    \begin{align*}
        \sum_i \nabla_i c_i(x)^T (y_i - x_i)
         & = \sum_i v_i + \sum_{i < j} t_{ij}    & \EqComment{\Cref{suithelper1}}                             \\
         & \leq \sum_i w_i + \sum_{i < j} u_{ij} & \EqComment{Lemmas \ref{suithelper3} and \ref{suithelper4}} \\
         & = \lambda SC(y) - \kappa SC(x).       & \EqComment{\Cref{suithelper2}}
    \end{align*}
    The purpose of showing Inequality \ref{equation:roughgarden-inequality} is to apply the framework of Roughgarden and Schoppmann \cite{roughgarden2015local}. Roughgarden and Schoppmann developed a local smoothness framework for analyzing cost-minimization games, building on the work of \tempdel{Blum et al.\ }\cite{blumregret}. Their result, which depends on Inequality \ref{equation:roughgarden-inequality}, is expressed in the following lemma:
    \begin{lemma}
        \label{localsmoothness}
        Consider a general cost-minimization game where we have $n$ players, each of whom can take some action represented by a vector $z_i \in \R^m$ (defining $z = (z_1, \ldots, z_n)$ as before), for which they have some penalty $c_i(z)$, such that the social cost is $SC(z) = \sum_i c_i(z)$. Assume that $SC(z)$ is always nonnegative.

        If for all $n$-tuples $x, y$ such that $x_i, y_i \in \R^n$, we have:
        \begin{align*}
            \sum_i \nabla_ic_i(x)^T(y_i - x_i) \leq \lambda SC(y) - \kappa SC(x),
        \end{align*}
        then the PoA of the cost minimization game is at most $\frac \lambda \kappa$.
    \end{lemma}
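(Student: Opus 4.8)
The plan is to combine the hypothesized local smoothness inequality with the first-order characterization of Nash equilibria used throughout this paper. Recall that we take the Nash equilibrium $x$ to be the profile satisfying $\nabla_i c_i(x) = 0$ for every player $i$; this is exactly the first-order optimality condition for each player unilaterally minimizing its own differentiable cost over the unconstrained action space $\R^m$. Let $y$ denote a social optimum, i.e.\ a minimizer of $SC$.

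First I would evaluate the left-hand side of the smoothness inequality at this equilibrium. Since $\nabla_i c_i(x) = 0$ for all $i$, we immediately get $\sum_i \nabla_i c_i(x)^T(y_i - x_i) = 0$. Substituting into the hypothesis $\sum_i \nabla_i c_i(x)^T(y_i - x_i) \leq \lambda SC(y) - \kappa SC(x)$ then yields $0 \leq \lambda SC(y) - \kappa SC(x)$, which rearranges to $\kappa SC(x) \leq \lambda SC(y)$.

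It then remains to convert this into a ratio bound. Using $\kappa > 0$ and the assumed nonnegativity of $SC$, so that $SC(y) \geq 0$, we divide to obtain $\frac{SC(x)}{SC(y)} \leq \frac \lambda \kappa$ whenever $SC(y) > 0$; in the degenerate case $SC(y) = 0$, the inequality $\kappa SC(x) \leq 0$ together with $SC(x) \geq 0$ forces $SC(x) = 0$ as well, so the bound holds trivially. Since $x$ ranges over the Nash equilibria and $y$ is the optimum, $\frac{SC(x)}{SC(y)}$ is by definition the PoA, giving $\poa \leq \frac \lambda \kappa$.

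The only genuine subtlety, and the step I would be most careful about, is the equilibrium characterization: the collapse of the left-hand side to $0$ relies on the action spaces being unconstrained and the per-player costs being convex, so that the stationary point $\nabla_i c_i(x) = 0$ is a true minimizer rather than a saddle. If the actions were instead constrained to a convex set, one would use the variational inequality $\sum_i \nabla_i c_i(x)^T(y_i - x_i) \geq 0$ at equilibrium, and the argument would go through unchanged, since the left-hand side remains nonnegative. I would also note that the full framework of Roughgarden and Schoppmann strengthens this bound to hold over all coarse correlated equilibria and the time-average of no-regret dynamics by summing the same inequality along a play sequence; however, for the PoA over Nash equilibria as stated here, the short argument above is sufficient.
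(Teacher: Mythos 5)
Your proof is correct and follows essentially the same route as the paper's: at the Nash equilibrium $\nabla_i c_i(x) = 0$ collapses the left-hand side of the smoothness inequality to zero, yielding $\kappa\, SC(x) \leq \lambda\, SC(y)$ and hence the PoA bound. Your extra care about the degenerate case $SC(y) = 0$ and about stationarity versus true minimality (matching the paper's convention that equilibria are defined by $\nabla_i c_i(x) = 0$, valid under convexity) goes slightly beyond the paper's write-up but does not change the argument.
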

    As our notation is slightly different from that of Roughgarden and Schoppmann, we provide a proof of the above lemma in the \Cref{appendix:localsmoothness-proof} for clarity. We can now complete the proof as follows. First, because all cost functions $f_{ij}, g_i$ are nonnegative, the social cost $SC(z)$ is always nonnegative. Second, we have shown Inequality \ref{equation:roughgarden-inequality}, which is identical to the inequality appearing in Lemma \ref{localsmoothness}. Therefore, the conclusion of Lemma \ref{localsmoothness} applies, and we have that the PoA of the symmetric opinion formation game is at most $\frac \lambda \kappa$.
\end{proof}

\section{The Worst-Case Price of Anarchy for Specific Functions}
\label{section:tools}

This section deals with the worst-case PoA of a function $h$ as defined in the introduction. We divide into two subsections: subsection \ref{subsection:poa-tools} demonstrates a variety of tools for proving upper bounds on the worst-case PoA of a function $h$, while subsection \ref{subsection:exact-poa} shows the exact worst-case PoA for a large class of functions.

\subsection{Tools for Upper Bounding the Worst-Case PoA}
\label{subsection:poa-tools}

We will provide four key theorems relating the $(\lambda, \kappa, p)$-suitability of one function to that of another. The primary objective of these theorems is to extend bounds on the worst-case PoA of a function of a scalar to bounds on the worst-case PoA of a function of a vector, but they have applications even in the case of scalars; this will become relevant in later sections. The first of these theorems is essential but easy to see:

\begin{theorem}\label{suitcombo}
    A nonnegative linear combination of $(\lambda, \kappa, p)$-suitable functions is $(\lambda, \kappa, p)$-suitable. That is, given functions $f, g : \R^m \to \R$, for any $a, b \geq 0$ define $h : \R^m \to \R$ by $h(z) = af(z) + bg(z)$. If $f, g$ are $(\lambda, \kappa, p)$-suitable, then $h$ is as well; as a result, if $f, g$ are $(\lambda, \kappa)$-suitable, then $h$ is as well.
\end{theorem}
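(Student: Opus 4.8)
The plan is to argue directly from \Cref{suitable}, exploiting only the linearity of the gradient and the nonnegativity of the coefficients. To avoid the notation clash between the combination coefficients $a, b$ in the statement and the two evaluation points $a, b$ in the definition of suitability, I would carry out the argument for a fixed but arbitrary pair of points, which I will call $u, v \in \R^m$, and verify Inequality \ref{ineq:suitability} for $h$ at $(u, v)$.

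First I would record that $\nabla h(u) = a\,\nabla f(u) + b\,\nabla g(u)$, which is immediate since differentiation is linear and $a, b$ are constants. Substituting this into the left-hand side of the suitability inequality for $h$ and distributing the inner product over the sum gives
\begin{align*}
    (\nabla h(u))^T \frac{v - u}{p} = a\,(\nabla f(u))^T \frac{v - u}{p} + b\,(\nabla g(u))^T \frac{v - u}{p}.
\end{align*}
Now I would apply the hypothesis that $f$ and $g$ are each $(\lambda, \kappa, p)$-suitable, which bounds each of the two summands by $\lambda f(v) - \kappa f(u)$ and $\lambda g(v) - \kappa g(u)$ respectively. Here is the one place that requires a word of care: to multiply these inequalities by the coefficients $a$ and $b$ without reversing their direction, I need $a, b \geq 0$, which is exactly the nonnegativity assumption in the statement. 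Adding the two scaled inequalities and regrouping yields $a(\lambda f(v) - \kappa f(u)) + b(\lambda g(v) - \kappa g(u)) = \lambda(af(v) + bg(v)) - \kappa(af(u) + bg(u)) = \lambda h(v) - \kappa h(u)$, which is precisely Inequality \ref{ineq:suitability} for $h$ at the points $u, v$. Since $u, v$ were arbitrary, $h$ is $(\lambda, \kappa, p)$-suitable.

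For the final ``as a result'' clause, I would simply observe that a function being $(\lambda, \kappa)$-suitable means it is both $(\lambda, \kappa, 1)$-suitable and $(\lambda, \kappa, 2)$-suitable; applying the statement just proved once with $p = 1$ and once with $p = 2$ transfers both properties from $f, g$ to $h$ simultaneously. I do not expect any genuine obstacle in this proof: the content is entirely the linearity of the gradient together with the sign condition on the coefficients, and the only thing to be vigilant about is the overloaded use of the letters $a, b$, which is why I would rename the evaluation points up front. An induction then immediately extends the two-function case to an arbitrary finite nonnegative linear combination, though the binary case as stated suffices.
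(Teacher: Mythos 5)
Your proof is correct and is essentially the same argument as the paper's, which observes that the quantity $\lambda h(y) - \kappa h(x) - \bigl(\nabla h(x)^T \frac{y - x}{p}\bigr)$ is linear in the function and nonnegative for $f$ and $g$, hence nonnegative for $af + bg$ when $a, b \geq 0$. Your version simply spells out the same linearity-plus-sign argument step by step (with a sensible renaming of the evaluation points to avoid the $a, b$ clash), and the concluding remarks about $p = 1, 2$ and induction to finitely many summands match the paper's intent.
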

\begin{proof}
    The inequality defining $(\lambda, \kappa, p)$-suitability is $\nabla h(x)^T \frac {y - x} p \leq \lambda h(y) - \kappa h(x)$, which can be rephrased as $\lambda h(y) - \kappa h(x) - \left(\nabla h(x)^T \frac {y - x} p\right) \geq 0$. The quantity $\lambda h(y) - \kappa h(x) - \left(\nabla h(x)^T \frac {y - x} p\right)$ is linear in the relevant function, so if its values for $f, g$ are $q, r \geq 0$ respectively, then its value for $h$ will be $aq + br \geq 0$, making $h$ $(\lambda, \kappa, p)$-suitable.
\end{proof}

The next of these theorems is similar, showing that $(\lambda, \kappa, p)$-suitability is maintained under composition with an affine transformation.
\begin{theorem}\label{suitlinear}
    Given a function $f : \R^m \to \R$, consider for any positive integer $l$, $m \times l$ real matrix $A$, and vector $v \in \R^m$, the function $h : \R^l \to \R$ defined by $h(z) = f(Az + v)$. If for some $\lambda, \kappa, p > 0$, $f$ is $(\lambda, \kappa, p)$ suitable, then $h$ is also $(\lambda, \kappa, p)$-suitable. As a result, if $f$ is $(\lambda, \kappa)$-suitable, then $h$ is as well.
\end{theorem}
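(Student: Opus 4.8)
The plan is to verify the $(\lambda, \kappa, p)$-suitability inequality for $h(z) = f(Az + v)$ directly by reducing it, via the chain rule, to the already-assumed suitability inequality for $f$. First I would recall the definition we must establish: for all $a, b \in \R^l$,
\begin{align*}
    (\nabla h(a))^T \frac{b - a}{p} \leq \lambda h(b) - \kappa h(a).
\end{align*}
The natural substitution is to set $a' = Aa + v$ and $b' = Ab + v$, so that $h(a) = f(a')$ and $h(b) = f(b')$, which immediately rewrites the right-hand side as $\lambda f(b') - \kappa f(a')$.

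The key computation is the gradient term on the left. By the chain rule, $\nabla h(a) = A^T \nabla f(Aa + v) = A^T \nabla f(a')$. Therefore
\begin{align*}
    (\nabla h(a))^T \frac{b - a}{p} = (A^T \nabla f(a'))^T \frac{b - a}{p} = (\nabla f(a'))^T \frac{A(b - a)}{p}.
\end{align*}
The crucial observation is that $A(b - a) = (Ab + v) - (Aa + v) = b' - a'$, so this equals $(\nabla f(a'))^T \frac{b' - a'}{p}$. This mirrors exactly the algebraic step used in \Cref{suithelper3} and \Cref{suithelper4}, where the matrices $R_i$ and $A_{ij}, B_{ij}$ are passed through the transpose and then absorbed into the difference of arguments.

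Having made this reduction, I would invoke the $(\lambda, \kappa, p)$-suitability of $f$ applied to the points $a', b' \in \R^m$, which gives $(\nabla f(a'))^T \frac{b' - a'}{p} \leq \lambda f(b') - \kappa f(a')$. Chaining the two computations yields precisely the desired inequality for $h$ at $a, b$. Since $a, b$ were arbitrary and the substitution $a \mapsto Aa + v$ covers all corresponding arguments of $f$, this establishes the suitability of $h$ for the single value $p$. The final clause of the theorem, that $(\lambda, \kappa)$-suitability of $f$ implies the same for $h$, then follows immediately by applying the first part separately for $p = 1$ and $p = 2$, since $(\lambda, \kappa)$-suitability is by definition the conjunction of $(\lambda, \kappa, 1)$- and $(\lambda, \kappa, 2)$-suitability.

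I do not expect a genuine obstacle here; the proof is essentially a one-line chain-rule manipulation, with the only point requiring care being the placement of the transpose on $A$ and the verification that the constant shift $v$ cancels in the difference $b' - a'$ (which it does, regardless of $v$). This is why the theorem, despite its usefulness, is routine once the right substitution is identified.
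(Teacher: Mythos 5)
Your proposal is correct and matches the paper's own proof essentially line for line: both apply the chain rule to get $\nabla h(x) = A^T \nabla f(Ax + v)$, move $A$ across the transpose, absorb it into the difference as $(Ay + v) - (Ax + v)$ (the shift $v$ cancelling), and then invoke the $(\lambda, \kappa, p)$-suitability of $f$. No gaps; the final reduction of $(\lambda, \kappa)$-suitability to the cases $p = 1, 2$ is also exactly how the paper treats it.
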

\begin{proof}
    We desire to show that $\nabla h(x)^T \frac {y - x} p \leq \lambda h(y) - \kappa h(x)$ for all $x, y \in \R^l$; this can be seen in the following steps:
    \begin{align*}
        \nabla h(x)^T \frac {y - x} p
         & = \nabla (f(Ax + v))^T \frac {y - x} p       & \EqComment{Definition of $h$}                         \\
         & = (A^T (\nabla f)(Ax + v))^T \frac {y - x} p & \EqComment{Chain rule}                                \\
         & = (\nabla f(Ax + v))^TA \frac {y - x} p       & \EqComment{Property of transpose}     \\
         & = (\nabla f(Ax + v))^T  \frac {(Ay + v) - (Ax + v)} p     & \EqComment{Rearranging}                               \\
         & \leq \lambda f(Ay + v) - \kappa f(Ax + v)              & \EqComment{$(\lambda, \kappa, p)$-suitability of $f$} \\
         & = \lambda h(y) - \kappa h(x).                   & \EqComment{Definition of $h$}
    \end{align*}
\end{proof}
This theorem, though still relatively simple, is already significant as it allows us to attain results for the \quadraticvectormodel model discussed in \Cref{eq:quadratic-vectorized:cost}. Recall that the model has the following cost function:
\begin{align*}c_i(z) = (z_i - s_i)^T R_i (z_i - s_i) + \sum_j (z_i - z_j)^T W_{ij} (z_i - z_j).\end{align*}
where $R_i$ is positive semidefinite (not required to be positive definite) for all $i$, and $W_{ij} = W_{ji}$ is positive semidefinite for all $i,j$. As discussed earlier, when $A$ is positive semidefinite, $z^T A z$ can be rewritten as $\norm{Sz}^2$, where $S = A^{\frac 1 2}$. Therefore under the \vectormunagalamodel model, the cost functions $f_{ij}, g_i$ are of the form $h(z) = \norm{Sz}^2$. By \Cref{bhawalkarthm}, there exist $\lambda, \kappa > 0$ such that the scalar-to-scalar function $f(x) = x^2$ is $(\lambda, \kappa)$-suitable and $\frac \lambda \kappa = \frac 9 8$. We then have by \Cref{suitlinear} that any function of the form $\norm{Sz}^2$ is also $(\lambda, \kappa)$-suitable for the said $\lambda, \kappa$, and so by \Cref{suitabilitythm}, the PoA for undirected instances of the \quadraticvectormodel model is upper bounded by $\frac \lambda \kappa = \frac 9 8$.

The next theorem shows how $(\lambda, \kappa)$-suitability is maintained under more general transformations. This is the first result in this paper which demands convexity.
\begin{theorem}\label{suitconvex}
    Given a function $f : \R \to \R$ which is nondecreasing on some subset $S$ of $\R$, and a \textit{convex} and differentiable function $g : \R^m \to S$, define $h: \R^m \to \R$ by $h(z) = f(g(z))$. If for some $\lambda, \kappa, p > 0$, $f$ is $(\lambda, \kappa, p)$ suitable, then $h$ is also $(\lambda, \kappa, p)$-suitable. As a result, if $f$ is $(\lambda, \kappa)$-suitable, then $h$ is as well.
\end{theorem}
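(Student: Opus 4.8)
The plan is to mirror the structure of the proof of \Cref{suitlinear}, but to replace the exact equality coming from the affine substitution there with an \emph{inequality} coming from the first-order convexity condition on $g$. Concretely, I want to establish the defining inequality of $(\lambda,\kappa,p)$-suitability for $h$, namely $(\nabla h(x))^T \frac{y-x}{p} \le \lambda h(y) - \kappa h(x)$ for all $x,y \in \R^m$, by reducing it to the scalar suitability inequality for $f$ evaluated at the points $a = g(x)$ and $b = g(y)$ in $S$.

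The first step is the chain rule: since $h(z) = f(g(z))$ with $f$ scalar-valued and $g$ vector-to-scalar, I have $\nabla h(x) = f'(g(x))\,\nabla g(x)$, so that the left-hand side of the target inequality is $f'(g(x))\,(\nabla g(x))^T \frac{y-x}{p}$. The second step invokes the convexity of $g$: the first-order characterization of convexity gives $(\nabla g(x))^T(y-x) \le g(y) - g(x)$ for all $x,y$. The third step is the key sign observation: because $g(x) \in S$ and $f$ is nondecreasing on $S$, we have $f'(g(x)) \ge 0$, so multiplying the convexity inequality by the nonnegative scalar $f'(g(x))/p$ preserves its direction, yielding
\begin{align*}
f'(g(x))\,(\nabla g(x))^T \frac{y-x}{p} \le f'(g(x))\,\frac{g(y) - g(x)}{p}.
\end{align*}
The right-hand side here is exactly the left-hand side of the scalar suitability inequality for $f$ at $a = g(x)$, $b = g(y)$; applying the hypothesis that $f$ is $(\lambda,\kappa,p)$-suitable bounds it by $\lambda f(g(y)) - \kappa f(g(x)) = \lambda h(y) - \kappa h(x)$. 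Chaining these gives the desired inequality for $h$, and the final sentence of the statement (the $(\lambda,\kappa)$-suitability consequence) follows by applying the argument separately for $p=1$ and $p=2$.

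The main obstacle, and the reason this theorem requires convexity unlike \Cref{suitlinear}, is the sign-preservation in the third step: the convexity condition only produces an \emph{inequality} (rather than the equality available for affine $g$), so I must ensure the multiplier $f'(g(x))$ is nonnegative before multiplying through, which is precisely what the hypothesis that $f$ is nondecreasing on $S$ supplies. I would take care to note that the scalar suitability of $f$ holds for all real arguments, so in particular at $a=g(x)$, $b=g(y) \in S$, and that $g$ being differentiable (and mapping into $S$) is what makes both the chain rule and the first-order convexity inequality available.
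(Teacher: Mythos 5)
Your proposal is correct and takes essentially the same approach as the paper: chain rule, then the first-order convexity inequality $(\nabla g(x))^T(y-x) \le g(y) - g(x)$ multiplied through by the nonnegative factor $f'(g(x))/p$ (nonnegativity supplied by $f$ nondecreasing on $S$ and $g$ mapping into $S$), then the scalar suitability of $f$ at $a = g(x)$, $b = g(y)$. The only cosmetic difference is that the paper proves the first-order convexity inequality from scratch by restricting $g$ to the segment via $u(t) = g(x + t(y-x))$ and integrating $u'(t) \ge u'(0)$, whereas you invoke it as the standard characterization of convexity.
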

\begin{proof}
    We first show that as $g$ is convex, for all $x, y \in \R^m$ we have:
    \begin{align}\label{ineq11}\nabla g(x)^T (y - x)\leq g(y) - g(x).\end{align}
    To see this, define $u : [0, 1] \to \R$ by $u(t) = g(x + t(y - x))$. Then $u$ is convex, and $h'(t) = (\nabla g)(x + t(y - x))^T (y - x)$, meaning that we would like to show $h'(0) \leq h(1) - h(0)$. Because $u$ is convex, for $t \geq 0$ we have $h'(t) \geq h'(0)$, so $h(1) - h(0) = \int_0^1 h'(z)dz \geq \int_0^1 h'(0)dz = h'(0)$
    as desired.

    We proceed to show that $h$ is $(\lambda, \kappa, p)$-suitable. We use here the fact that because $f$ is nondecreasing on the codomain of $S$, we have that $f'(g(x))$ is nonnegative.
    \begin{align*}
        \nabla h(x)^T \frac {y - x} p
         & = \nabla (f(g(x))^T \frac {y - x} p        & \EqComment{Definition of $h$}                          \\
         & = f'(g(x)) (\nabla g(x))^T \frac {y - x} p & \EqComment{Chain rule}                                 \\
         & = f'(g(x)) \frac {\nabla g(x)^T (y - x)} p & \EqComment{Rearranging}                                \\
         & \leq f'(g(x)) \frac {g(y) - g(x)} p              & \EqComment{Inequality \ref{ineq11}, nonnegativity of $f'(g(x))$} \\
         & \leq \lambda f(g(y)) - \kappa f(g(x))            & \EqComment{$(\lambda, \kappa, p)$-suitability of $f$}  \\
         & = \lambda h(y) - \kappa h(x).                    & \EqComment{Definition of $h$}
    \end{align*}
\end{proof}
The key idea of the proof is that the convexity of $g$ allows us to prove the below Inequality \ref{ineq11}, bounding an expression containing the gradient of $g$ into a simple difference and so transforming the suitability inequality for $h$ into the suitability inequality for $f$.

The primary value of \Cref{suitconvex} is in motivating the following \Cref{suitnorm}, which is more directly useful for proving PoA upper bounds in our \vectormunagalamodel model. However, \Cref{suitconvex} is useful even in the \munagalamodel model. As an example, we will show in Section \ref{section:lb} that the worst-case PoA of $e^x$ is at most $\frac 2 {e\ln 2}$ as there exist $\lambda, \kappa > 0$ with $\frac \lambda \kappa = \frac 2 {e\ln 2}$ such that $e^x$ is $(\lambda, \kappa)$-suitable. If we then let $f(x) = e^x$ and $g(x) = x^2$, we can apply \Cref{suitconvex} to get that $e^{x^2}$ is also $(\lambda, \kappa)$-suitable, and so its worst-case PoA is also at most $\frac 2 {e\ln 2}$. This is an extremely useful result as said bound turns out to be tight.

The final theorem in this section is very closely related to \Cref{suitconvex}; the key difference is that the function $g$ is not required to be differentiable at the origin, which means that we can use it when $g$ is a \textit{norm}, a use-case which is natural in our multidimensional model.
\begin{theorem}\label{suitnorm}
    We are given a function $f : \R \to \R$ and a function $g : \R^m \to \R_{\ge 0}$ which satisfy the following constraints:
    \begin{itemize}
        \item $g$ is convex, and differentiable everywhere, possibly excluding the origin \footnote{Note that $g$ being differentiable outside of the origin and convex implies that it is continuous everywhere.}
        \item $g(0) = 0$, $f'(0) = 0$, $f'(z) \geq 0$ for $z \geq 0$
    \end{itemize}

    Define $h: \R^m \to \R$ by $h(z) = f(g(z))$. If for some $\lambda, \kappa, p > 0$, $f$ is $(\lambda, \kappa, p)$ suitable, then $h$ is also $(\lambda, \kappa, p)$-suitable. As a result, if $f$ is $(\lambda, \kappa)$-suitable, then $h$ is as well.
\end{theorem}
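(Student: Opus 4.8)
The plan is to split on whether the point at which the gradient is taken is the origin, since $h = f\circ g$ inherits differentiability from the chain rule everywhere except possibly at $0$, where $g$ itself may fail to be differentiable. For $x\neq 0$ I would essentially reproduce the argument of \Cref{suitconvex}. At such $x$, $g$ is differentiable, so by the chain rule $h$ is differentiable with $\nabla h(x) = f'(g(x))\nabla g(x)$. Because $g(x)\ge 0$ and $f'(z)\ge 0$ for $z\ge 0$, the scalar factor $f'(g(x))$ is nonnegative, so I can multiply the first-order convexity inequality $\nabla g(x)^T(y-x)\le g(y)-g(x)$ (Inequality \ref{ineq11}) by $f'(g(x))/p\ge 0$ and then invoke the $(\lambda,\kappa,p)$-suitability of $f$ with $a=g(x)$, $b=g(y)$, yielding $\nabla h(x)^T\frac{y-x}p \le f'(g(x))\frac{g(y)-g(x)}p \le \lambda f(g(y))-\kappa f(g(x)) = \lambda h(y)-\kappa h(x)$. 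One subtlety to flag: the derivation of Inequality \ref{ineq11} in \Cref{suitconvex} assumed $g$ differentiable along the entire segment from $x$ to $y$, whereas here that segment may cross the origin. I would instead obtain it directly from convexity, starting from $g(x+t(y-x))\le(1-t)g(x)+tg(y)$, dividing by $t>0$, and letting $t\to 0^+$; this uses only differentiability of $g$ at the single endpoint $x\neq 0$.

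The case $x=0$ is where the relaxed hypotheses force genuinely new work, and I expect it to be the main obstacle. The first task is to show that $h$ is even differentiable at the origin, with $\nabla h(0)=0$, so that the suitability inequality is well-defined there. For this I would use that a finite convex function $g:\R^m\to\R$ is locally Lipschitz, so that $g(z)=g(z)-g(0)\le L\lvert z\rvert$ for $z$ near $0$; combining this with $f'(0)=0$ gives $f(g(z))-f(0)=o(g(z))=o(\lvert z\rvert)$ as $z\to 0$, which is exactly the statement $\nabla h(0)=0$. Concretely, given $\epsilon>0$ one picks $\delta_1$ with $\lvert f(t)-f(0)\rvert\le\epsilon\lvert t\rvert$ for $\lvert t\rvert<\delta_1$, then $\delta_2$ with $g(z)<\delta_1$ and $g(z)\le L\lvert z\rvert$ for $\lvert z\rvert<\delta_2$, so that $\lvert h(z)-h(0)\rvert\le\epsilon L\lvert z\rvert$.

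Once $\nabla h(0)=0$ is established, the suitability inequality at $x=0$ collapses to $0\le \lambda h(y)-\kappa h(0)$, i.e.\ $0\le \lambda f(g(y))-\kappa f(0)$; this is precisely the $(\lambda,\kappa,p)$-suitability inequality for $f$ with $a=0$ and $b=g(y)$, whose left-hand side $f'(0)\,(g(y))/p$ vanishes because $f'(0)=0$. Combining the two cases shows $h$ is $(\lambda,\kappa,p)$-suitable, and the final ``as a result'' clause follows immediately by specializing to $p=1$ and $p=2$.
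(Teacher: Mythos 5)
Your proof is correct and follows essentially the same route as the paper: split into the cases $x \neq 0$ (reusing the argument of \Cref{suitconvex}) and $x = 0$ (showing $\nabla h(0) = 0$ so that the suitability inequality collapses to the $a = 0$, $b = g(y)$ instance of $f$'s suitability, whose left-hand side vanishes since $f'(0) = 0$). If anything, you are more careful than the paper in two spots: you rederive Inequality \ref{ineq11} from the secant-slope characterization of convexity so that it only requires differentiability of $g$ at the endpoint $x$ (handling segments that cross the origin, which the paper's \enquote{repeat the same argument} glosses over), and your local-Lipschitz estimate $|h(z) - h(0)| \leq \epsilon L \norm{z}$ establishes differentiability of $h$ at the origin directly from the definition, whereas the paper bounds $\norm{\nabla g}$ near $0$ via a unit-sphere argument and concludes from $\lim_{x \to 0} \norm{\nabla h(x)} = 0$, which strictly speaking needs one more step to yield $\nabla h(0) = 0$.
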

\begin{proof}
    We need to show the suitability inequality, i.e. $\nabla h(x)^T \frac {y - x} p \leq \lambda h(y) - \kappa h(x)$. When $x \neq 0$, we note that all the constraints required by \Cref{suitconvex} hold ($g$ is convex, and we can take $S = \R_{\ge 0}$), and so we can repeat the same argument. It therefore suffices to prove the suitability inequality for the case when $x = 0$.

    To handle this case, we will show that $\nabla h(0)$ is in fact zero. We first argue that because $g$ is convex, $\norm{\nabla g(x)}$ for $0 < \norm{x} < 1$ must be bounded by the maximum $\norm{\nabla g(x)}$ for $\norm{x} = 1$, which exists as the set $\{x\ |\ \norm{x} = 1\}$ is closed.

    To see that this bound holds, note that for any such $x$, we can let $v$ be a unit vector in the direction of $\nabla g(x)$, and consider the restriction of $g$ along that direction, i.e. the function $u(t) = g(x + tv)$. $u$ must also be convex, meaning that $u'$ must be increasing. Let $t_1 \leq 0$ and $t_2 \geq 0$ satisfy $\norm{x + t_1v} = 1, \norm{x + t_2v} = 1$. Such $t_1, t_2$ must exist because $\norm{x} \leq 1$, while $\norm{x + tv}$ goes to $+\infty$ for arbitrarily positive and arbitrarily negative $t$. We then have that $u'(t_1) \leq u'(0) \leq u'(t_2)$. This implies that $|u'(0)|$ is bounded by the larger of $|u'(t_1)|, |u'(t_2)|$, which is bounded by the larger of $\norm{\nabla g(x + t_1v)}, \norm{\nabla g(x + t_2v)}$. Therefore, because $u$ is defined in the direction of $\nabla g(x)$, we have that $\norm{\nabla g(x)} = |u'(0)|$ and so is bounded by $\norm{\nabla g(x')}$ for some $\norm{x'} = 1$ as desired.

    We therefore have that for $x$ close to $0$, $\norm{\nabla g(x)}$ is bounded by some constant $C$. We can then argue using a limit:
    \begin{align*}
        \lim_{x \to 0} \norm{\nabla h(x)}
         & = \lim_{x \to 0} \norm{f'(g(x))\nabla g(x)} & \EqComment{Chain rule}           \\
         & \leq C\lim_{x \to 0} |f'(g(x))|        & \EqComment{Boundedness argument} \\
         & = C |f'(0)|                            & \EqComment{$g(0) = 0$}           \\
         & = 0.                                   & \EqComment{$f'(0) = 0$}
    \end{align*}
    We can now show that $(\lambda, \kappa, p)$-suitability holds at $x = 0$:
    \begin{align*}
        \nabla h(x)^T \frac {y - x} p
         & = 0                                   & \EqComment{As just shown}                             \\
         & = f'(g(x))\frac {g(y) - g(x)} p       & \EqComment{$g(0) = 0$ and $f'(0) = 0$}                \\
         & \leq \lambda f(g(y)) - \kappa f(g(x)). & \EqComment{$(\lambda, \kappa, p)$-suitability of $f$}
    \end{align*}
\end{proof}
As an important special case of this theorem, if $g$ is a norm and is differentiable everywhere other than the origin, then it meets the above requirements. The idea for the proof for the above theorem is to repeat the proof of \Cref{suitconvex}, using the fact that $g(0) = 0$ and $f'(0) = 0$ to account for the fact that $g$ is not differentiable at the origin by arguing that the suitability inequality for $h$ can still be transformed into the suitability inequality for $f$ because in both cases the left-hand side is $0$.
The utility of \Cref{suitnorm} is demonstrated in \Cref{theorem:x-alpha-norm}, which extends the results of \tempdel{Bhawalkar et al.\ }\cite{bhawalkar2013coevolutionary} for the PoA of $\norm{x}^\alpha$ to our more general model. This will be discussed in the following subsection

\subsection{The Exact Worst-Case PoA}
\label{subsection:exact-poa}

The goal of this section is to show how we can get an expression for the worst-case PoA of a large class of functions by complementing the upper bounds provided by the tools in subsection \ref{subsection:poa-tools} with matching lower bounds. There are two main tools for establishing these lower bounds. One is \Cref{lowerboundthm}, which provides a general worst-case PoA lower bound of $\frac 2 {e\ln 2}$ and, as we will see below, is tight for a large class of functions. The other is the previously described \Cref{scalarlowerboundlemma}, which shows that the worst-case PoA of a function is at least the infimum of $\frac \lambda \kappa$ over positive $\lambda, \kappa$ such that said function is $(\lambda, \kappa)$-suitable. This lemma was used in the proof of \Cref{lowerboundthm}, but can also be used directly to determine the worst-case PoA of certain functions.

\paragraph{Applications of \Cref{lowerboundthm}}
\Cref{lowerboundthm} most immediately allows us to give the exact worst-case PoA for the following large class of functions derived from $e^x$:
\begin{corollary}\label{expcorol}
    For any $g : \R^m \to \R$ which is nonconstant, differentiable, and convex, the worst-case PoA of $e^{g(z)}$ is exactly $\frac 2 {e\ln 2}$.
\end{corollary}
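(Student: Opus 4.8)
The plan is to pin down the exact value by sandwiching the worst-case PoA of $e^{g(z)}$ between a matching upper and lower bound, each obtained by invoking results already established. The upper bound comes from the suitability machinery, and the lower bound from \Cref{lowerboundthm}.

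For the upper bound, I would first recall from the proof of \Cref{expthm} that there exist $\lambda, \kappa > 0$ with $\frac \lambda \kappa = \frac 2 {e \ln 2}$ for which the scalar function $x \mapsto e^x$ is $(\lambda, \kappa)$-suitable. Taking $f(x) = e^x$, which is nondecreasing on all of $\R$, and using the hypothesis that $g$ is convex and differentiable, I would apply \Cref{suitconvex} with $S = \R$ to conclude that $h(z) = f(g(z)) = e^{g(z)}$ is also $(\lambda, \kappa)$-suitable for the same constants. \Cref{suitcorol} then immediately gives that the worst-case PoA of $e^{g(z)}$ is at most $\frac \lambda \kappa = \frac 2 {e \ln 2}$.

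For the lower bound, I would apply \Cref{lowerboundthm} directly to the function $h(z) = e^{g(z)}$, which requires checking its hypotheses. Nonnegativity is immediate since $e^{g(z)} > 0$; differentiability follows from that of $g$ together with the chain rule; and nonconstancy follows from the nonconstancy of $g$ since $\exp$ is injective. The remaining hypothesis, convexity of $e^{g(z)}$, is the one step needing a genuine argument: here I would use the standard fact that the composition of a convex nondecreasing function with a convex function is convex, applied to $\exp$ (convex and increasing) and $g$ (convex). With these verified, \Cref{lowerboundthm} yields that the worst-case PoA of $e^{g(z)}$ is at least $\frac 2 {e \ln 2}$.

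Combining the two bounds gives that the worst-case PoA is exactly $\frac 2 {e \ln 2}$. I expect no serious obstacle, as the corollary is a clean consequence of the earlier theorems; the only point demanding care is confirming that $e^{g(z)}$ inherits convexity from $g$, so that \Cref{lowerboundthm} is genuinely applicable. Everything else reduces to transporting the scalar suitability constants for $e^x$ through \Cref{suitconvex} and reading off the resulting bound from \Cref{suitcorol}.
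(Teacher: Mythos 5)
Your proposal is correct and follows essentially the same route as the paper: the upper bound via \Cref{expthm}, \Cref{suitconvex}, and \Cref{suitcorol}, and the lower bound by applying \Cref{lowerboundthm} directly to $e^{g(z)}$. The only difference is that you explicitly verify the hypotheses of \Cref{lowerboundthm} (in particular that $e^{g(z)}$ inherits convexity as the composition of the convex nondecreasing function $\exp$ with the convex function $g$), a detail the paper leaves implicit with ``follows directly''; your check is valid and makes the argument more complete.
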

\begin{proof}
    The lower bound follows directly from \Cref{lowerboundthm}. To see the upper bound, first note that by \Cref{expthm}, there exist $\lambda, \kappa > 0$ such that $\frac \lambda \kappa = \frac 2 {e\ln 2}$ and $e^x$ is $(\lambda, \kappa)$-suitable. Therefore, by \Cref{suitconvex}, $e^{g(x)}$ is also $(\lambda, \kappa)$-suitable, so by \Cref{suitcorol} the worst case PoA of $e^{g(x)}$ is at most $\frac \lambda \kappa = \frac 2 {e\ln 2}$.
\end{proof}
We further have the following interesting example of a function for which we can state the exact worst-case PoA.
\begin{corollary}\label{cosh}
    The worst-case PoA of $\cosh$ is $\frac 2 {e\ln 2}$.
\end{corollary}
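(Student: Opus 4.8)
The plan is to prove the two bounds separately and then combine them, exactly as in \Cref{expcorol}. For the lower bound, I would observe that $\cosh$ is a nonconstant, nonnegative (indeed $\cosh x \ge 1$), differentiable, and convex function $\R \to \R$. Therefore \Cref{lowerboundthm} applies directly and gives that the worst-case PoA of $\cosh$ is at least $\frac 2 {e\ln 2}$. This requires no further work.

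For the upper bound, the key idea is to exploit the decomposition
\begin{align*}
    \cosh x = \tfrac 1 2 e^x + \tfrac 1 2 e^{-x},
\end{align*}
together with the closure properties of suitability. First I would invoke \Cref{expthm}, which guarantees the existence of $\lambda, \kappa > 0$ with $\frac \lambda \kappa = \frac 2 {e\ln 2}$ for which $e^x$ is $(\lambda, \kappa)$-suitable. The next step is to show that the \emph{same} pair $\lambda, \kappa$ works for $e^{-x}$: this follows from \Cref{suitlinear} applied to $f(x) = e^x$ with the affine transformation $x \mapsto -x$ (i.e.\ $A = -1$, $v = 0$), which yields that $e^{-x}$ is also $(\lambda, \kappa)$-suitable. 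Since $\cosh$ is the nonnegative linear combination $\tfrac 1 2 e^x + \tfrac 1 2 e^{-x}$, \Cref{suitcombo} then gives that $\cosh$ is $(\lambda, \kappa)$-suitable with this same $\lambda, \kappa$. Finally, \Cref{suitcorol} implies that the worst-case PoA of $\cosh$ is at most $\frac \lambda \kappa = \frac 2 {e\ln 2}$.

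Combining the two directions yields that the worst-case PoA of $\cosh$ is exactly $\frac 2 {e\ln 2}$. The proof is essentially mechanical given the framework already developed; the only point that requires care is verifying that suitability is preserved for $e^{-x}$ with the \emph{identical} constants $\lambda, \kappa$ rather than some other pair, since otherwise the linear-combination step of \Cref{suitcombo} would not apply. This is precisely why I would route the argument through \Cref{suitlinear}, which preserves the specific constants, rather than re-deriving suitability for $e^{-x}$ from scratch. I do not anticipate any genuine obstacle; this corollary is a clean illustration that combining the tight lower bound of \Cref{lowerboundthm} with the suitability-closure tools pins down the exact worst-case PoA for functions built out of $e^x$.
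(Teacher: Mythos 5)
Your proof is correct and takes essentially the same route as the paper's: the lower bound directly from \Cref{lowerboundthm}, and the upper bound via the decomposition $\cosh x = \frac{e^x + e^{-x}}{2}$, using \Cref{expthm} for suitability of $e^x$, \Cref{suitlinear} to transfer the \emph{same} pair $(\lambda, \kappa)$ to $e^{-x}$, \Cref{suitcombo} for the nonnegative combination, and \Cref{suitcorol} to conclude. The one cosmetic caveat is that the existence of $(\lambda, \kappa)$ with $\frac \lambda \kappa = \frac 2 {e \ln 2}$ making $e^x$ suitable is established \emph{within} the proof of \Cref{expthm} (via \Cref{clemma}) rather than by its statement, which the paper itself flags by citing it ``as part of the proof of'' \Cref{expthm}.
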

\begin{proof}
    The lower bound follows directly from \Cref{lowerboundthm}. To see the upper bound, recall that $\cosh x = \frac {e^x + e^{-x}} 2$. We have as part of the proof of \Cref{expthm} that there exist $\lambda, \kappa > 0$ such that $\frac \lambda \kappa = \frac 2 {e\ln 2}$ and $e^x$ is $(\lambda, \kappa)$-suitable; by \Cref{suitlinear} (applying a linear transformation to the input of a function preserves suitability), $e^{-x}$ is also $(\lambda, \kappa)$-suitable. Therefore, by \Cref{suitcombo} (a nonnegative linear combination of suitable functions is also suitable), $\cosh x = \frac {e^x + e^{-x}} 2$ is also $(\lambda, \kappa)$-suitable. Thus, by \Cref{suitcorol} we have that the worst-case PoA is upper bounded by $\frac \lambda \kappa = \frac 2 {e\ln 2}$.
\end{proof}
We can extend \Cref{cosh} to functions of the form $\cosh g(z)$ for convex and differentiable $g : \R^m \to \R$ using \Cref{suitconvex}, as well as to functions of the form $\cosh \norm{z}$ for norms $\norm{\cdot} : \R^m \to \R$ using \Cref{suitnorm}. 

The above techniques can also be reapplied more generally to show that for any function of the form $f(x) = \sum_{k = 1}^\infty w_k e^{b_k x^{a_k}}$ where $a_k, b_k \in \R$ and $w_k \geq 0$, the worst-case PoA of $f$ is exactly $\frac 2 {e\ln 2}$ (unless $f$ is constant).

We note the below corollary in order to make an interesting observation:
\begin{corollary}\label{corollary:h-shifted-exact-wcpoa}
    For all functions $h$ whose worst-case PoA is exactly $\frac 2 {e\ln 2}$, the worst-case PoA of $h(z) + c$ where $c > 0$ is also $\frac 2 {e\ln 2}$.
\end{corollary}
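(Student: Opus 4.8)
The plan is to prove the claim by establishing matching lower and upper bounds on the worst-case PoA of $h(z) + c$, working under the standing assumptions (inherited from the functions to which this corollary is applied, such as $e^{g(z)}$ and $\cosh$) that $h$ is nonnegative, differentiable, and convex. The lower bound is the easy direction. Since the worst-case PoA of $h$ equals $\frac 2 {e \ln 2} > 1$, the function $h$ cannot be constant, so $h(z) + c$ is also nonconstant; moreover $h(z) + c$ is nonnegative (as $h \geq 0$ and $c > 0$), differentiable, and convex. Thus $h(z)+c$ satisfies the hypotheses of \Cref{lowerboundthm}, and applying that theorem immediately gives that its worst-case PoA is at least $\frac 2 {e \ln 2}$.

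For the upper bound, the idea is that adding a constant to the cost function shifts every social cost by a fixed, configuration-independent amount, which can only move the PoA closer to $1$. Concretely, I would fix an arbitrary symmetric opinion formation game appearing in the definition of the worst-case PoA of $h + c$, so that $f_{ij}(x) = w_{ij}(h(x) + c)$ and $g_i(x) = r_i(h(x) + c)$, and compare it to the game using the same weights $w_{ij}, r_i$ but cost function $h$. Each player's cost then satisfies $c_i^{h+c}(z) = c_i^h(z) + c\,(r_i + \sum_{j \neq i} w_{ij})$, where the added term does not depend on $z$; summing over $i$ gives $SC^{h+c}(z) = SC^h(z) + K$ for the constant $K = c\sum_i (r_i + \sum_{j \neq i} w_{ij}) \geq 0$. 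Because the two cost functions $c_i$ differ only by a $z$-independent constant, their gradients coincide, so the Nash equilibrium condition $\nabla_i c_i(x) = 0$ is identical in both games; likewise, the two social-cost functions differ by a constant and so are minimized at the same point. Hence the $h$ game and the $h+c$ game share the same Nash equilibrium $x$ and the same social optimum $y$.

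It then remains to compare the two PoAs. Writing $A = SC^h(x)$ and $B = SC^h(y)$, nonnegativity of the costs and $\mathrm{PoA} \geq 1$ give $A \geq B \geq 0$, and the PoA of the $h + c$ game equals $\frac{A + K}{B + K}$. The crux is the elementary inequality $\frac{A + K}{B + K} \leq \frac A B$, valid whenever $A \geq B > 0$ and $K \geq 0$ since it rearranges to $K(A - B) \geq 0$; this shows the $h + c$ game has PoA at most that of the corresponding $h$ game, which is at most $\frac 2 {e \ln 2}$. I expect the one delicate point to be the degenerate case $B = 0$, where the ratio $\frac A B$ is undefined: here I would argue that the hypothesis that the worst-case PoA of $h$ is \emph{finite} forces $A = 0$ as well, since otherwise the $h$ game would itself exhibit an infinite PoA, contradicting that its worst-case PoA is $\frac 2 {e\ln 2}$. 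Consequently $\frac{A+K}{B+K} = 1$ when $K > 0$, and the game is trivial with PoA $1$ when $K = 0$. Taking the supremum over all such games shows the worst-case PoA of $h + c$ is at most $\frac 2 {e \ln 2}$, which together with the lower bound completes the proof.
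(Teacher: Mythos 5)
Your proposal is correct and takes essentially the same approach as the paper: the lower bound comes from applying \Cref{lowerboundthm} to $h + c$, and the upper bound from the observation that adding the constant shifts the social costs of the Nash equilibrium and the optimum by the same amount $K$, so the ratio $\frac{A+K}{B+K}$ can only decrease toward $1$. Your write-up simply makes explicit what the paper leaves implicit (that the two games share the same equilibrium and optimum, the inequality $\frac{A+K}{B+K} \leq \frac{A}{B}$, and the degenerate case $B = 0$), all of which is sound.
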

\begin{proof}
    The lower bound still follows directly from \Cref{lowerboundthm}. The upper bound applies because increasing a cost function by a constant increases both the social cost of the Nash equilibrium and the optimal social cost by the same amount, which can only decrease their ratio, which is the PoA.
\end{proof}
However, the same is not true of \textit{decreasing} the cost function, i.e. considering $h(x) - c$ for some $c > 0$. In particular, note that the first terms of the Taylor series of $\cosh x$ are $1 + \frac {x^2} 2$. Therefore, consider subtracting $1$ from $\cosh x$: the resulting function will behave like $\frac {x^2} 2$ close to $x = 0$, and if we consider an example network that causes the function $x^2$ to attain the PoA of $\frac 9 8$ shown by \tempdel{Bindel et al.\ }\cite{bindel2015bad}, then this example network should approach the same PoA as scale it down so that the effects of higher order terms vanish. Thus, the PoA of $\cosh x - 1$ should be $\frac 9 8$. It may be interesting to consider how the PoA of $\cosh x - c$ varies for $c \in [0, 1]$.
\paragraph{Direct applications of \Cref{scalarlowerboundlemma}} We first demonstrate an example of how \Cref{scalarlowerboundlemma} can be directly applied by considering another important result of \tempdel{Bhawalkar et al.}\cite{bhawalkar2013coevolutionary} specifically for functions of the form $|x|^\alpha$ in the \munagalamodel model. This result is expressed in the following lemma.
\begin{lemma}\label{bhawalkarthm}
    Define
    $\zeta(\alpha)$ as in \Cref{eq:zetaintro}. For $\alpha > 1$, over $\lambda, \kappa > 0$ such that the function $|x|^\alpha : \R \to \R$ is $(\lambda, \kappa)$-suitable, the minimum value of $\frac \lambda \kappa$ is $\zeta(\alpha)$.
\end{lemma}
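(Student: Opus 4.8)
The plan is to mirror the structure of the proof of \Cref{clemma}, reducing the two suitability conditions (for $p = 1$ and $p = 2$) to explicit single-variable constraints on $(\lambda, \kappa)$ and then optimizing $\frac{\lambda}{\kappa}$ over the resulting feasible region. First I would exploit the homogeneity and evenness of $f(x) = |x|^\alpha$ to collapse the $(\lambda, \kappa, p)$-suitability inequality $f'(a)\frac{b-a}{p} \le \lambda f(b) - \kappa f(a)$ into a one-parameter inequality. Since $f$ is even, it suffices to treat $a > 0$ (the case $a < 0$ maps to it by negating both $a$ and $b$, and $a = 0$ is trivial because $f'(0) = 0$ for $\alpha > 1$). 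Writing $u = b/a$ and dividing through by $a^\alpha$, the inequality becomes $\frac{\alpha}{p}(u - 1) \le \lambda |u|^\alpha - \kappa$ for all $u \in \R$, directly analogous to the reduction to Inequality \ref{inequality:exp-c-lambda-kappa-suit} in the proof of \Cref{expthm}.

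Next, for each fixed $p$, I would determine exactly when this holds by minimizing $G_p(u) = \lambda |u|^\alpha - \kappa - \frac{\alpha}{p}(u-1)$ over $u$. A short derivative computation shows $G_p$ is strictly decreasing on $(-\infty, 0)$ and convex on $(0, \infty)$ with a unique interior minimum at $u^\ast = (p\lambda)^{-1/(\alpha-1)}$; hence the global minimum is at $u^\ast$. Substituting and using $(u^\ast)^{\alpha-1} = \frac{1}{p\lambda}$, which simplifies $\lambda (u^\ast)^\alpha$ to $\frac{u^\ast}{p}$, the requirement $G_p(u^\ast) \ge 0$ reduces to the clean constraint $\kappa \le \phi_p(\lambda) := \frac{\alpha - (\alpha-1)(p\lambda)^{-1/(\alpha-1)}}{p}$. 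This yields two constraints, one for each of $p = 1, 2$.

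I would then minimize $\frac{\lambda}{\kappa}$, equivalently maximize $\frac{\kappa}{\lambda}$, taking $\kappa = \min(\phi_1(\lambda), \phi_2(\lambda))$ for each $\lambda$. Exactly as in \Cref{clemma}, the optimum lies either at the optimum subject to a single constraint or at the intersection $\phi_1(\lambda) = \phi_2(\lambda)$. A convenient observation is that after the substitution $s = p\lambda$, the ratio $\frac{\phi_p(\lambda)}{\lambda}$ equals $\frac{\alpha - (\alpha-1)s^{-1/(\alpha-1)}}{s}$ independently of $p$, is maximized at $s = 1$ with value $1$, and corresponds to $\lambda = \kappa = \frac{1}{p}$; checking the other constraint at each such single-constraint optimum shows it is violated, exactly as with the ``irrelevant'' cases in \Cref{clemma}. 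Hence the minimizer must occur at the intersection.

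Finally, I would solve $\phi_1(\lambda) = \phi_2(\lambda)$. Setting $L = \lambda^{-1/(\alpha-1)}$ linearizes the equation and gives $L = \frac{\alpha}{(\alpha-1)(2 - 2^{-1/(\alpha-1)})}$, from which $\kappa = \alpha \cdot \frac{2^{1/(\alpha-1)} - 1}{2^{\alpha/(\alpha-1)} - 1}$ and $\lambda = L^{-(\alpha-1)}$ follow. The main obstacle is the ensuing algebraic simplification of $\frac{\lambda}{\kappa}$: writing $Q = 2^{\alpha/(\alpha-1)}$ and using the key identity $Q^{\alpha-1} = 2^\alpha$ (so that the factors $2^\alpha$ and $Q^{\alpha-1}$ cancel) collapses the expression to $\frac{(\alpha-1)^{\alpha-1}}{\alpha^\alpha} \cdot \frac{(Q-1)^\alpha}{Q - 2} = \zeta(\alpha)$, completing the proof. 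I expect this final cancellation to be the only genuinely delicate part; everything else closely parallels the already-established argument for the exponential function, and a sanity check at $\alpha = 2$ recovers $\zeta(2) = \frac{9}{8}$.
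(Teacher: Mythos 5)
Your proposal is correct, but note that the paper does not actually prove this lemma: it imports it wholesale from Bhawalkar et al.\ (their Lemma 3.10), using it only as input to \Cref{bhawalkarcorol} via \Cref{scalarlowerboundlemma} and \Cref{suitcorol}. So what you have written is a self-contained replacement for a citation, built by transplanting the paper's own \Cref{clemma}/\Cref{expthm} technique from $e^x$ to $|x|^\alpha$: where the paper divides the suitability inequality by $e^a$ and substitutes $c = b - a$, you divide by $a^\alpha$ and substitute $u = b/a$, exploiting homogeneity instead of the exponential's translation structure. I checked the details and they go through: the reduction to $\frac{\alpha}{p}(u-1) \leq \lambda|u|^\alpha - \kappa$ is valid (evenness handles $a < 0$, and $a = 0$ is trivial since $\lambda > 0$); the minimizer $u^\ast = (p\lambda)^{-1/(\alpha-1)}$ and the constraint $\kappa \leq \phi_p(\lambda)$ are right; the intersection is unique because the equation is linear in $L = \lambda^{-1/(\alpha-1)}$, and with $Q = 2^{\alpha/(\alpha-1)}$ one gets $\kappa = \alpha\frac{Q-2}{2(Q-1)} > 0$ (as $Q > 2$ for $\alpha > 1$), $\lambda = \frac{(\alpha-1)^{\alpha-1}}{\alpha^{\alpha-1}}\cdot\frac{(Q-1)^{\alpha-1}}{2}$ using $Q^{\alpha-1} = 2^\alpha$, and hence $\frac{\lambda}{\kappa} = \frac{(\alpha-1)^{\alpha-1}}{\alpha^\alpha}\cdot\frac{(Q-1)^\alpha}{Q-2} = \zeta(\alpha)$, consistent with $\zeta(2) = \frac{9}{8}$. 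The only step you assert without verifying is that each single-constraint optimum ($\lambda = \kappa = \frac{1}{p}$) violates the other constraint; this is true but deserves a line: at $\lambda = \kappa = 1$ you need $\phi_2(1) < 1$, i.e.\ $\alpha - 2 < (\alpha-1)2^{-1/(\alpha-1)}$, which follows from $e^{-y} > 1 - y$ with $y = \frac{\ln 2}{\alpha - 1}$, and at $\lambda = \kappa = \frac{1}{2}$ you need $\phi_1(\frac{1}{2}) < \frac{1}{2}$, i.e.\ $1 + \frac{1}{2(\alpha-1)} < 2^{1/(\alpha-1)}$, which follows from $e^y > 1 + y$ together with $\ln 2 > \frac{1}{2}$. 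With that line added, your argument is complete and matches the cited result exactly; its "optimum at a single-constraint critical point or at the kink" logic is at the same level of rigor the paper itself uses in \Cref{clemma}.
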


\begin{corollary}\label{bhawalkarcorol}
    For $\alpha > 1$, the worst-case PoA of $|x|^\alpha : \R \to \R$ is exactly $\zeta(\alpha)$.
\end{corollary}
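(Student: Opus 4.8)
The plan is to sandwich the worst-case PoA of $|x|^\alpha$ between matching upper and lower bounds, both of which follow by combining the preceding \Cref{bhawalkarthm} with tools established earlier in the paper. Before doing either, I would record that for $\alpha > 1$ the function $h(x) = |x|^\alpha$ is nonnegative, convex, and differentiable everywhere (in particular $h'(0) = 0$, since $\alpha > 1$), so it satisfies the hypotheses demanded by both \Cref{suitcorol} and \Cref{scalarlowerboundlemma}.

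For the upper bound, I would use that \Cref{bhawalkarthm} gives the minimum of $\frac \lambda \kappa$ over pairs $\lambda, \kappa > 0$ making $|x|^\alpha$ be $(\lambda, \kappa)$-suitable to be exactly $\zeta(\alpha)$, and that this minimum is attained by some concrete pair. Fixing such an extremal pair and applying \Cref{suitcorol}, the worst-case PoA of $|x|^\alpha$ is at most $\frac \lambda \kappa = \zeta(\alpha)$.

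For the lower bound, I would invoke \Cref{scalarlowerboundlemma}, which states that the worst-case PoA of a nonnegative, differentiable, convex $h$ is at least the infimum of $\frac \lambda \kappa$ over positive $\lambda, \kappa$ for which $h$ is $(\lambda, \kappa)$-suitable. By \Cref{bhawalkarthm} this infimum equals $\zeta(\alpha)$, so the worst-case PoA is at least $\zeta(\alpha)$. Combining the two directions yields that the worst-case PoA of $|x|^\alpha$ is exactly $\zeta(\alpha)$.

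The corollary is essentially a packaging of \Cref{bhawalkarthm}, so there is no genuine obstacle in the corollary itself: all of the analytic difficulty lives upstream in \Cref{bhawalkarthm}, namely the delicate calculus optimization of the suitability ratio that produces the closed form $\zeta(\alpha)$. The only points needing care here are checking that $|x|^\alpha$ meets the differentiability and convexity hypotheses — which is exactly why the statement is restricted to $\alpha > 1$, since differentiability at the origin fails for $\alpha \le 1$ — and observing that the minimizing $(\lambda, \kappa)$ is actually achieved, so that \Cref{suitcorol} can be applied to a concrete pair rather than only in the limit.
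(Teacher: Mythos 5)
Your proposal is correct and matches the paper's proof essentially verbatim: both directions are obtained by combining \Cref{bhawalkarthm} with \Cref{suitcorol} for the upper bound and \Cref{scalarlowerboundlemma} for the lower bound. Your extra remarks --- verifying that $|x|^\alpha$ is nonnegative, convex, and differentiable for $\alpha > 1$, and that the minimizing pair $(\lambda, \kappa)$ is attained (as \Cref{bhawalkarthm} asserts a minimum, not merely an infimum) --- are sound hypothesis checks that the paper leaves implicit.
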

\begin{proof}
    By \Cref{bhawalkarthm}, over $\lambda, \kappa > 0$ such that $|x|^\alpha$ is $(\lambda, \kappa)$-suitable, the minimum value of $\frac \lambda \kappa$ is $\zeta(\alpha)$. Therefore, by \Cref{scalarlowerboundlemma}, the worst-case PoA of $|x|^\alpha$ is lower bounded by $\zeta(\alpha)$. Conversely, given such $\lambda, \kappa$, we can apply \Cref{suitcorol} to get that the worst-case PoA of $|x|^\alpha$ is upper bounded by $\zeta(\alpha)$.
\end{proof}

We will now proceed to prove a theorem showing the exact worst-case PoA for functions of the form $f(\norm{z})$ where $\norm{\cdot}$ is a vector norm and $f$ is a scalar function. As a corollary, we will show \Cref{theorem:x-alpha-norm} which extends \tempdel{Bhawalkar et al.}\cite{bhawalkar2013coevolutionary}'s result to the functions $\norm{z}^\alpha : \R^m \to \R$.
\begin{theorem}\label{normthm}
    Let $u : \R_{\ge 0} \to \R$ be a function which is nonnegative, convex, and differentiable, and satisfies $u'(0) = 0$. Let $\norm{z}$ be any norm on $\R^m$ which is differentiable everywhere other than the origin. Define $h : \R^m \to \R$ by $h(z) = u(\norm{z})$. Let $f : \R \to \R$ be the even extension of $u$, so that $f(x) = u(\norm{x})$. Then the worst-case PoA of $h$ is equal to the worst-case PoA of $f$.
\end{theorem}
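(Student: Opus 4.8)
The plan is to prove the two inequalities between the worst-case PoAs separately, using the even extension $f$ as the bridge between the vector argument of $h$ and the scalar lemmas. First I would record the properties of $f$ that both directions rely on. Since $u$ is convex on $\R_{\ge 0}$ with $u'(0) = 0$, its even extension $f(x) = u(\norm{x})$ has both one-sided derivatives at the origin equal to zero, so $f$ is differentiable at $0$ with $f'(0) = 0$; moreover $f'$ is nondecreasing on all of $\R$ (it equals $u' \ge 0$ on the right and $-u'(-\,\cdot\,) \le 0$ on the left, matching up at $0$), so $f$ is convex, and $f$ is plainly nonnegative and differentiable everywhere. These are exactly the hypotheses needed to invoke \Cref{scalarlowerboundlemma} for $f$, as well as the conditions $f'(0) = 0$ and $f'(z) \ge 0$ for $z \ge 0$ required by \Cref{suitnorm}.

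For the lower bound (worst-case PoA of $h$ is at least that of $f$) I would appeal to \Cref{restrictionlemma}. The restriction of $h$ along any unit direction $v \in \R^m$ is $x \mapsto h(xv) = u(\norm{xv}) = u(|x|\,\norm{v}) = u(|x|) = f(x)$, using $\norm{v} = 1$ and the definition of the even extension. Hence the restriction is exactly $f$, and \Cref{restrictionlemma} immediately yields that the worst-case PoA of $h$ is at least the worst-case PoA of $f$.

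For the upper bound I would chain the suitability tools. Writing $h(z) = f(\norm{z})$ with $g(z) = \norm{z}$, the norm $g$ is convex, differentiable away from the origin, and satisfies $g(0) = 0$; combined with $f'(0) = 0$ and $f'(z) \ge 0$ for $z \ge 0$, \Cref{suitnorm} shows that whenever $f$ is $(\lambda, \kappa)$-suitable, so is $h$. By \Cref{suitcorol} this gives that the worst-case PoA of $h$ is at most $\frac{\lambda}{\kappa}$ for every such pair, hence at most $I$, where $I$ denotes the infimum of $\frac{\lambda}{\kappa}$ over all $(\lambda, \kappa)$ for which $f$ is $(\lambda, \kappa)$-suitable. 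Finally \Cref{scalarlowerboundlemma} gives that the worst-case PoA of $f$ is at least $I$, so that the worst-case PoA of $h$ is at most $I$, which is at most the worst-case PoA of $f$. Together with the lower bound, this establishes equality.

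The main obstacle, or rather the conceptual crux, is the upper-bound chain: the infimum $I$ over suitable pairs must be sandwiched between the two worst-case PoAs, so that \Cref{scalarlowerboundlemma} (a \emph{lower} bound on PoA phrased via suitability) combines with \Cref{suitnorm} and \Cref{suitcorol} (an \emph{upper} bound on PoA via suitability) to pin the value of $h$ exactly. The remaining care is in verifying that the even extension $f$ is genuinely convex and differentiable at the origin, since this is precisely the technical hypothesis that lets both scalar lemmas apply; the assumption $u'(0) = 0$ is essential here, as it is what forces the gluing of $u$ with its reflection to be smooth and convex at $0$.
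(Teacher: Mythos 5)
Your proposal is correct and takes essentially the same route as the paper's proof: \Cref{restrictionlemma} applied to a restriction of $h$ gives the lower bound, and \Cref{suitnorm} combined with \Cref{suitcorol} and \Cref{scalarlowerboundlemma} (transferring $(\lambda,\kappa)$-suitability from $f$ to $h$ and sandwiching via the infimum of $\frac \lambda \kappa$) gives the upper bound. Your only departures are cosmetic improvements: you choose $v$ to be a unit vector with respect to $\norm{\cdot}$ itself so the restriction is exactly $f$, where the paper instead absorbs the factor $\norm{v}$ into the matrices $A_{ij}, B_{ij}$, and you explicitly verify that the even extension is convex and differentiable at the origin (using $u'(0)=0$), a hypothesis check the paper leaves implicit.
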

\begin{proof}
    We first apply \Cref{lowerboundthm} to see that we can choose pairs $(\lambda, \kappa)$ such that $f$ is $(\lambda, \kappa)$-suitable and $\frac \lambda \kappa$ is arbitrarily close to the worst-case PoA of $f$.

    Given such pairs, we can then apply \Cref{suitnorm} with $g = \norm{\cdot}$. The requirements are met, as $\norm{\cdot}$ is a norm, while $f$ is nonnegative and satisfies $f'(0) = 0$ because $u$ satisfies both constraints. Thus, for pairs $(\lambda, \kappa)$ such that $\frac \lambda \kappa$ approaches the worst-case PoA of $u$, the worst-case PoA of $h$ is upper bounded by $\frac \lambda \kappa$. It follows that the worst-case PoA of $h$ is at most the worst-case PoA of $f$.

    We can then show that the worst-case PoA of $h$ is at least the worst-case PoA of $f$ using \Cref{restrictionlemma}. In fact, \textit{any} restriction of $h$ is equal to a function $u(\norm{v}\norm{x})$ by the definition of a norm; this is then equal to $f(\norm{v}x)$, meaning that by \Cref{restrictionlemma} the worst-case PoA of $h$ is at least the worst-case PoA of any such function $x \mapsto f(\norm{v}x)$. The scaling of the argument by $\norm{v}$ is equivalent to scaling the argument by including $\norm{v}$ in $A_{ij}, B_{ij}$,  meaning that the definition of the worst-case PoA for any function $f(\norm{v}x)$ is automatically equivalent to the definition of the worst-case PoA for $f$. Thus, the worst-case PoA of $h$ must be at least the worst-case PoA of $f$, completing the proof.
\end{proof}

As a corollary of this theorem, we have the result stated in \Cref{theorem:x-alpha-norm} from the introduction: by combining \Cref{normthm} with \Cref{bhawalkarcorol} we get that the worst-case PoA of $\norm{z}^\alpha$ is exactly $\zeta(\alpha)$, extending \tempdel{Bhawalkar et al.}\cite{bhawalkar2013coevolutionary}'s result  to our \vectormunagalamodel model. Furthermore the worst-case PoA of $\norm{z}^\alpha$ for \textit{any} differentiable norm $\norm{z}$ is exactly $\zeta(\alpha)$, extending their result in an even more general sense. For example, we could take $\norm{z}$ to be the $p$-norm for $1 < p < \infty$ (i.e. $\norm{\cdot}_p$).

In the case of $\alpha = 2$, we can say that we have extended \tempdel{Bindel et al.}\cite{bindel2015bad}'s tight $\frac 9 8$ PoA bound to not only the model mentioned before with cost functions of the form $z^T M z$, but any cost function $\norm{Mz}^2$. In fact, because we show $(\lambda, \kappa)$-suitability of $\norm{Mz}^2$ using \Cref{suitnorm} for the same pair $(\lambda, \kappa)$ for all norms $\norm{\cdot}$, we can even consider a model where not only $M$ but $\norm{\cdot}$ is allowed to depend on the persons it corresponds to, and the $\frac 9 8$ PoA bound will continue to hold as a result of \Cref{suitabilitythm}.

\section{The Clique Model}
\label{section:clique-model}

As mentioned in the introduction, a notable application of our general model is that we can reduce the \cliquevectormodel model, where the network is divided into cliques that use best-response dynamics for the entire clique in order to determine their expressed opinions, to said general model. In fact, our general model is powerful enough that we can reduce a clique version of the model to it. We reintroduce the clique model below; this definition can be applied on top of any of the models defined in the introduction, but we will take it to be a modification of our general model for the purposes of achieving the most general results.
\begin{definition}\label{cliquegeneralmodel}
    A \emph{clique opinion formation game} is a modification of an opinion formation game in which the $n$ persons are partitioned into $t$ cliques $C_1 \sqcup \dotsb \sqcup C_t = \{1, \ldots, n\}$. We define a cost for the entire clique $q_i(z) = \sum_{j \in C_i} c_i(z)$; we then modify the update so that in each clique $C_i$, persons $j \in C_i$ collaborate to choose expressed opinions so that the clique cost $C_i$ is minimized under the assumption that the expressed opinions of members of other cliques remain the same.

    The Nash equilibrium $x$ is then the set of expressed opinions which is invariant under this update, i.e. $x$ satisfies $\nabla_j q_i(z) = 0$ for all $j \in C_i$. $y$ remains unchanged; the PoA is thus redefined as $\frac {SC(x)} {SC(y)}$ for this new $y$.
\end{definition}
We then have the following theorem in parallel to \Cref{suitabilitythm}; the major difference here is that the cost functions $f_{ij}$ must be $(\lambda, \kappa, 1)$-suitable in addition to $(\lambda, \kappa, 2)$-suitable, i.e. $(\lambda, \kappa)$-suitable. The reason for this is that when $i, j$ are in the same clique, $f_{ij}$ now acts as an internal cost function. This is not a significant restriction, as the tools demonstrated in section \ref{section:tools} for proving PoA upper bounds generally assume $(\lambda, \kappa)$-suitability.
\begin{theorem}
    \label{cliquethm}
    Consider a symmetric \textit{clique} opinion formation game whose cost functions $f_{ij}, g_i$ are nonnegative and differentiable. If for some $\lambda, \kappa > 0$, we have that all $f_{ij}$ are both $(\lambda, \kappa, 2)$-suitable and $(\lambda, \kappa, 1)$-suitable, and all $g_i$ are $(\lambda, \kappa, 1)$-suitable, then the price of anarchy of the opinion formation game is at most $\frac \lambda \kappa$.
\end{theorem}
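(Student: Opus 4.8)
The plan is to prove \Cref{cliquethm} by reducing the clique opinion formation game to an ordinary symmetric \ourmodel opinion formation game and then invoking \Cref{suitabilitythm}. First I would collapse each clique $C_i$ into a single \emph{super-person} whose opinion $\tilde z_i$ is the concatenation $(z_j)_{j \in C_i}$ of the opinions of its members, turning an instance with $n$ persons and $t$ cliques into one with $t$ persons. Expanding the clique cost $q_i(z) = \sum_{j \in C_i} c_j(z)$ using $c_j = g_j + \sum_{l \neq j} f_{jl}$, the terms split into three groups: the internal costs $g_j$ of members; the within-clique interaction costs $f_{jk}$ with $j,k \in C_i$ (each unordered pair contributing $2f_{jk}$ by symmetry); and the cross-clique costs $f_{jl}$ with $j \in C_i$ and $l \notin C_i$. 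I would fold the first two groups into the super-person's internal cost $\tilde g_i$ and assign the third group, organized by the clique $C_{i'}$ containing $l$, to a cross-clique interaction $\tilde f_{ii'}$. Concretely, $\tilde f_{ii'}$ acts on a stacked input whose $(j,l)$-block is $A_{jl}z_j + B_{jl}z_l$ and returns $\sum_{j \in C_i,\, l \in C_{i'}} f_{jl}$ of the corresponding block; this fits the \ourmodel model, which permits arbitrary cost functions and arbitrary block-structured linear maps $\tilde A_{ii'}, \tilde B_{ii'}$.

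Next I would verify that the reduction preserves everything relevant. By construction $\tilde c_i := \tilde g_i + \sum_{i' \neq i} \tilde f_{ii'} = q_i$, so $\sum_i \tilde c_i = \sum_i q_i = SC$, meaning the social cost and hence the optimum $y$ are unchanged. The super-person optimality condition $\nabla_{\tilde z_i} \tilde c_i(x) = 0$ is exactly $\nabla_j q_i(x) = 0$ for every $j \in C_i$, which is the clique Nash equilibrium of \Cref{cliquegeneralmodel}, so the equilibrium $x$ is preserved too. The reduced instance is symmetric because the original game is: the $(j,l)$-block of $\tilde f_{ii'}$ is $f_{jl}$ evaluated at $A_{jl}z_j + B_{jl}z_l$, while the matching $(l,j)$-block of $\tilde f_{i'i}$ is $f_{lj}$ at $A_{lj}z_l + B_{lj}z_j$, and the original symmetry $f_{jl} = f_{lj}$, $A_{jl} = B_{lj}$, $B_{jl} = A_{lj}$ makes these agree once the roles of the two super-persons are swapped, giving $\tilde A_{ii'} = \tilde B_{i'i}$ and $\tilde B_{ii'} = \tilde A_{i'i}$ block-by-block.

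The crux is checking suitability of the aggregated functions, and this is exactly where the extra hypothesis that each $f_{ij}$ be $(\lambda,\kappa,1)$-suitable enters. For the cross-clique functions, $\tilde f_{ii'}$ is a sum of terms $f_{jl}$ precomposed with the projection onto block $(j,l)$; since each $f_{jl}$ is $(\lambda,\kappa,2)$-suitable, \Cref{suitlinear} makes each summand $(\lambda,\kappa,2)$-suitable and \Cref{suitcombo} makes the sum $(\lambda,\kappa,2)$-suitable. For the internal functions, $\tilde g_i$ is a nonnegative combination of the shifted member costs $g_j(R_j z_j - s_j)$ and the doubled within-clique costs $2 f_{jk}(A_{jk}z_j + B_{jk}z_k)$; to conclude via \Cref{suitlinear} and \Cref{suitcombo} that $\tilde g_i$ is $(\lambda,\kappa,1)$-suitable, I need each $g_j$ to be $(\lambda,\kappa,1)$-suitable (given) \emph{and} each $f_{jk}$ to be $(\lambda,\kappa,1)$-suitable. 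This is precisely the additional requirement in \Cref{cliquethm}: within a clique the interaction cost $f_{jk}$ is promoted to an internal cost of the super-person, and internal costs must satisfy the $p=1$ condition rather than only the $p=2$ condition needed for genuinely external costs. With $\tilde f_{ii'}$ being $(\lambda,\kappa,2)$-suitable and $\tilde g_i$ being $(\lambda,\kappa,1)$-suitable, \Cref{suitabilitythm} applied to the reduced symmetric instance yields that the price of anarchy is at most $\frac \lambda \kappa$, which is the claim.

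I expect the main obstacle to be the bookkeeping in realizing $\tilde f_{ii'}$ and $\tilde g_i$ as genuine single functions of a super-person's opinion, via the stacking of the $A_{jl}, B_{jl}$ into block-structured $\tilde A_{ii'}, \tilde B_{ii'}$, together with the block-by-block verification of symmetry of the reduced instance; once the construction is set up correctly, the suitability transfer and the appeal to \Cref{suitabilitythm} are immediate from \Cref{suitcombo} and \Cref{suitlinear}.
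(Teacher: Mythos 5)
Your proposal is correct and follows essentially the same route as the paper: both collapse each clique into a single person whose opinion is the stacked vector of member opinions, fold the within-clique $f_{jk}$ and member $g_j$ costs into the super-person's internal cost (which is exactly where the extra $(\lambda,\kappa,1)$-suitability of the $f_{ij}$ is consumed), realize cross-clique costs via block-structured linear maps, verify symmetry and that $\tilde c_i = q_i$ so the equilibrium and optimum are preserved, and conclude by \Cref{suitlinear}, \Cref{suitcombo}, and \Cref{suitabilitythm}. The only cosmetic difference is that the paper sums within-clique terms over ordered pairs while you count unordered pairs with a factor of $2$, which is equivalent by symmetry.
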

As mentioned before, the proof proceeds by a reduction to \Cref{suitabilitythm}. The general idea of this reduction is to create a person corresponding to each clique whose expressed opinion is a stack vector of the expressed opinions of the members of the clique, then choose linear transformations to apply to the inputs of cost functions so that the expressed opinions of individual members can be extracted and used as inputs to the original cost functions. The proof is given below.

\begin{proof}[Proof of \Cref{cliquethm}]
    For any symmetric clique opinion formation game, we will demonstrate a normal (not clique) symmetric opinion formation game whose dynamics are identical. For brevity, we will refer to these as the clique game and the normal game respectively. Furthermore, we will differentiate internal opinions and cost functions defined for the new game by notating them as, for example, $s_i', g_i'$.

   For each clique $C_i$ in our clique game, say that we have $|C_i|$ persons $j_{i, 1}, \ldots, j_{i, |C_i|} \in C_i$. We will define a person $i$ in the normal game who is meant to represent the entire clique $C_i$. The dimension $d_i'$ of their expressed opinion will be the sum $d_{j_{i, 1}} + \dotsb + d_{j_{i, |C_i|}}$ of the dimensions of expressed opinions of the members of $C_i$ in the clique game. We will then consider the expressed opinion $z_i'$ to be equal to the \textit{stack vector} $\begin{pmatrix}
       z_{j_{i, 1}} \dotsb z_{j_{i, |C_i|}}
   \end{pmatrix}^T$; in other words, the first $d_{j_{i, 1}}$ entries of $z_i'$ are the entries of $z_{j_{i, 1}}$, the next $d_{j_{i, 2}}$ entries of $z_i'$ are the entries of $z_{j_{i, 2}}$, and so on so that a given $z$ corresponds to both a set of expressed opinions in the clique game and a set of expressed opinions in the normal game.
   
   Given this, we aim to choose the cost functions and internal opinions of the normal game so that the cost $c_i'(z)$ in the normal game is equal to the cost $q_i(z)$ in the clique game. It will then follow that the Nash equilibrium $x$ and social optimum $y$ are the same in the normal game as in the clique game and have the same social costs, implying that the PoA of the two games is the same.

   We will simply define $R_i' = I$ and $s_i' = 0$ so that the penalty for person $i$ due to internal opinion in the normal game is simply the $g_i'(z_i')$ which we have yet to define. $g_i'$ will now be defined so as to encapsulate all costs internal to the clique $C_i$.

   First note that for any clique $C_i$ and person $j \in C_i$, the entries of $z_j$ appear in $z_i'$, meaning that we can define a linear transformation $L_j : \R^{d_i'} \to {\R^{d_j}}$ such that $L_jz_i' = z_j$. Given this, we can express the cost $g_j(R_j z_j - s_j)$ as $g_j(R_j L_j z_i' - s_j)$. We note that given that $g_j$ is $(\lambda, \kappa, 1)$-suitable, $a \mapsto g_j(R_j L_j a - s_j)$ is as well; this is an instance of the fact that applying an affine transformation to the input of a function preserves $(\lambda, \kappa, p)$-suitability, as shown in \Cref{suitlinear}.

   Similarly, for $j, k \in C_i$, we can express the cost $f_{jk}(A_{jk}z_j + B_{jk}z_k)$ as $f_{jk}(A_{jk}L_jz_i' + B_{jk}L_k z_i') = f_{jk}((A_{jk}L_j + B_{jk}L_k)z_i')$. Just as before, because $f_{jk}$ is $(\lambda, \kappa, 1)$-suitable, the function $a \mapsto f_{jk}((A_{jk}L_j + B_{jk}L_k)a)$ is as well by \Cref{suitlinear}.

   We then set:
   \begin{align*}
       g_i'(z_i') = \sum_{j \in C_i} g_j(R_j L_j z_i' - s_j) + \sum_{j, k \in C_i} f_{jk}((A_{jk}L_j + B_{jk}L_k)z_i').
   \end{align*}
   $g_i'$ is a nonnegative linear combination of $(\lambda, \kappa, 1)$-suitable functions; as a result, it is $(\lambda, \kappa, 1)$-suitable as well by \Cref{suitcombo}. It therefore meets the constraints of \Cref{suitabilitythm}. Furthermore, by the definition of $L_j$, we have:
   \begin{align}\label{cliquegineq}
       g_i'(z_i') = \sum_{j \in C_i} g_j(R_j z_j - s_j) + \sum_{j, k \in C_i} f_{jk}(A_{jk}z_j + B_{jk}z_k),
   \end{align}
   so that $g_i'$ accounts for the portion of $q_i$ which is internal to the clique $C_i$, i.e. the penalties due to internal opinions and the penalties between two members of the clique.

   We now define $f_{il}'$ for two persons $i, l$ in the normal game. We will define $A_{il}', B_{il}'$ by defining the images $A_{il}'z_i', B_{il}'z_l'$, which we will then define procedurally as follows. For each pair $(j, k)$ of persons $j, k$ in the clique game such that $j \in C_i$ and $k \in C_l$, given the penalty $f_{jk}(A_{jk}z_j + B_{jk}z_k)$ corresponding to the difference in opinion between persons $j, k$, we will append the entries in $A_{jk}L_j z_i'$ to $A_{il}'z_i'$ and append the entries in $B_{jk}L_k z_l'$ (which must have the same dimension) to $B_{il}'z_l'$. In this way, we can define a linear transformation $M_{jk}$ that picks out those entries, so that $M_{jk}A_{il}'z_i' = A_{jk}L_j z_i' = A_{jk}z_j$, and $M_{jk}B_{il}'z_l' = B_{jk}L_k z_l' = B_{jk}z_k$. This then allows us to define $f_{il}'$ as follows:
   \begin{align*}
       f_{il}'(a) = \sum_{j \in C_i} \sum_{k \in C_l} f_{jk}(M_{jk}a).
   \end{align*}
   This has two implications. First, because all $f_{jk}$ are $(\lambda, \kappa, 2)$-suitable, the functions $a \mapsto f_{jk}(M_{jk}a)$ are $(\lambda, \kappa, 2)$-suitable as well by \Cref{suitlinear}, and so $f_{il}'$ itself is as well by \Cref{suitcombo}. Combined with the fact that we naturally have $A_{li}' = B_{il}'$ and $B_{li}' = A_{il}$ by the symmetry of the above definitions, we meet the constraints of \Cref{suitabilitythm}.

   Second, in the context of the normal game, we see that:
   \begin{align*}
       f_{il}'(A_{il}'z_i' + B_{il}'z_l')
       &= \sum_{j \in C_i} \sum_{k \in C_l} f_{jk}(M_{jk}(A_{il}'z_i' + B_{il}'z_l')) &\EqComment{Definition of $f_{il}'$}\\
       &= \sum_{j \in C_i} \sum_{k \in C_l} f_{jk}(A_{jk}z_j + B_{jk}z_k), &\EqComment{Definition of $M_{jk}$}
   \end{align*}
   showing that our defined cost between $i, l$ accounts for the portion of $q_i$ due to penalties with persons outside of the clique.

   We therefore have that $g_i'(z_i')$ together with the defined costs $f_{il}'(A_{il}'z_i' + B_{il}'z_l')$ are equivalent to the cost $q_i(z)$; we formally show this below:

    {\allowdisplaybreaks
   \begin{align*}
       c_i'(z)
       &= g_i'(R_i'z_i' - s_i') + \sum_{l \neq i}f_{il}'(A_{il}'z_i' + B_{il}'z_l')&\EqComment{Definition of $c_i$}\\
       &= g_i'(z_i') + \sum_{l \neq i}f_{il}'(A_{il}'z_i' + B_{il}'z_l')&\EqComment{Choice of $R_i', s_i'$}\\
       &= \sum_{j \in C_i} g_j(R_j z_j - s_j) + \sum_{j, k \in C_i} f_{jk}(A_{jk}z_j + B_{jk}z_k) \\ &\qquad \quad + \sum_{l \neq i}f_{il}'(A_{il}'z_i' + B_{il}'z_l')&\EqComment{\Cref{cliquegineq}}\\
       &= \sum_{j \in C_i} g_j(R_j z_j - s_j) + \sum_{j, k \in C_i} f_{jk}(A_{jk}z_j + B_{jk}z_k) \\ &\qquad \quad  + \sum_{l \neq i}\sum_{j \in C_i} \sum_{k \in C_l} f_{jk}(A_{jk}z_j + B_{jk}z_k)&\EqComment{Preceding identity for $f_{il}'$}\\
       &= \sum_{j \in C_i}\left[g_j(R_j z_j - s_j) + \sum_{k \neq j} f_{jk}(A_{jk}z_j + B_{jk}z_k)\right]&\EqComment{Rearranging sums}\\
       &= \sum_{j \in C_i} c_j(z)&\EqComment{Definition of $c_j$}\\
       &= q_i(z).&\EqComment{Definition of $q_i$}\\
\end{align*}
}

   Thus, $c_i'(z) = q_i(z)$ as desired, and so the PoA of the normal game is the same as that of the clique game. We can then conclude by \Cref{suitabilitythm} that the PoA of the normal game is at most $\frac \lambda \kappa$, meaning that the PoA of the clique game is also at most $\frac \lambda \kappa$.
\end{proof}

As an extension, we can prove an analogue of \Cref{suitcorol} for clique opinion formation games, allowing us to apply the tools of Section \ref{section:tools}.
\begin{corollary}{\label{cliquecorol}}
    Suppose that a function $h : \R^m \to \R$ is nonnegative and differentiable. If for some $\lambda, \kappa > 0$, $h$ is $(\lambda, \kappa)$-\emph{suitable}, then the PoA of any symmetric clique opinion formation game whose cost functions are of the form $ch$ is bounded by $\frac \lambda \kappa$.
\end{corollary}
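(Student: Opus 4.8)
The plan is to mirror the proof of \Cref{suitcorol}, but invoking \Cref{cliquethm} in place of \Cref{suitabilitythm}. The crucial observation is that being $(\lambda, \kappa)$-suitable already bundles together both forms of suitability we will need: by the definition introduced in \Cref{suitcorol}, $h$ being $(\lambda, \kappa)$-suitable means that $h$ is simultaneously $(\lambda, \kappa, 1)$-suitable and $(\lambda, \kappa, 2)$-suitable. This is precisely the extra strength that the clique setting demands of the edge functions.

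First I would unpack the hypothesis that every cost function is of the form $ch$. Following \Cref{definition:worst-case-poa}, this means each $f_{ij}$ equals $w_{ij}h$ and each $g_i$ equals $r_i h$ for nonnegative scalars $w_{ij}, r_i$. Scaling by a nonnegative scalar is a single-term nonnegative linear combination, so by \Cref{suitcombo} each such cost function inherits every suitability property of $h$. In particular, each $f_{ij}$ is both $(\lambda, \kappa, 2)$-suitable and $(\lambda, \kappa, 1)$-suitable, and each $g_i$ is $(\lambda, \kappa, 1)$-suitable; nonnegativity and differentiability likewise carry over from $h$.

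Next I would verify that these inherited properties are exactly the hypotheses of \Cref{cliquethm}, which requires that all $f_{ij}$ be both $(\lambda, \kappa, 2)$-suitable and $(\lambda, \kappa, 1)$-suitable and that all $g_i$ be $(\lambda, \kappa, 1)$-suitable. Since these conditions are precisely what the previous step established, I would apply \Cref{cliquethm} directly to conclude that the PoA of any such symmetric clique opinion formation game is at most $\frac \lambda \kappa$, completing the argument.

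I expect no serious obstacle here; this corollary is essentially a bookkeeping step that specializes \Cref{cliquethm} to single-function cost families, analogously to how \Cref{suitcorol} specializes \Cref{suitabilitythm}. The only genuine subtlety — and the reason the statement is worth recording separately — is that within a clique an edge function $f_{ij}$ plays the role of an internal cost function, so \Cref{cliquethm} needs $f_{ij}$ to be $(\lambda, \kappa, 1)$-suitable in addition to $(\lambda, \kappa, 2)$-suitable. The key step is simply noticing that full $(\lambda, \kappa)$-suitability of $h$ supplies this stronger property for free, so that no new estimates beyond those already encapsulated in \Cref{suitcombo} and \Cref{cliquethm} are required.
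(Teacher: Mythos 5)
Your proposal is correct and matches the paper's proof essentially verbatim: both verify that the scaled cost functions $w_{ij}h, r_ih$ inherit the needed suitability properties and then apply \Cref{cliquethm}. The only (cosmetic) difference is the lemma you cite for the scaling step --- you invoke \Cref{suitcombo}, which is indeed the apt reference since $w_{ij}h$ is a one-term nonnegative combination, whereas the paper cites \Cref{suitlinear}, which concerns affine input transformations rather than output scaling.
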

\begin{proof}
    By \Cref{suitlinear}, all cost functions $f_{ij}, g_i$ of any such clique opinion formation game are $(\lambda, \kappa)$-suitable, and so the $f_{ij}$ are both $(\lambda, \kappa, 2)$-suitable and $(\lambda, \kappa, 1)$-suitable and the $g_i$ are $(\lambda, \kappa, 1)$-suitable, so by \Cref{cliquethm} the PoA of such a game is at most $\frac \lambda \kappa$.
\end{proof}

\section*{Acknowledgements}
    The work is partially supported by DARPA QuICC, ONR MURI 2024 award on Algorithms, Learning, and Game Theory, Army-Research Laboratory (ARL) grant W911NF2410052, NSF AF:Small grants 2218678, 2114269, 2347322.

\newpage
\bibliographystyle{alpha}
\bibliography{references}

\newcommand{\etalchar}[1]{$^{#1}$}
\begin{thebibliography}{SGT{\etalchar{+}}08}

\bibitem[BGM13]{bhawalkar2013coevolutionary}
Kshipra Bhawalkar, Sreenivas Gollapudi, and Kamesh Munagala.
\newblock Coevolutionary opinion formation games.
\newblock In {\em Proceedings of the forty-fifth annual ACM symposium on Theory of computing}, pages 41--50, 2013.

\bibitem[BHLR08]{blumregret}
Avrim Blum, MohammadTaghi Hajiaghayi, Katrina Ligett, and Aaron Roth.
\newblock Regret minimization and the price of total anarchy.
\newblock In {\em Proceedings of the Fortieth Annual ACM Symposium on Theory of Computing}, STOC '08, page 373–382, New York, NY, USA, 2008. Association for Computing Machinery.

\bibitem[BKO15]{bindel2015bad}
David Bindel, Jon Kleinberg, and Sigal Oren.
\newblock How bad is forming your own opinion?
\newblock {\em Games and Economic Behavior}, 92:248--265, 2015.

\bibitem[DeG74]{degroot1974reaching}
Morris~H DeGroot.
\newblock Reaching a consensus.
\newblock {\em Journal of the American Statistical association}, 69(345):118--121, 1974.

\bibitem[FGS98]{frasconi1998general}
Paolo Frasconi, Marco Gori, and Alessandro Sperduti.
\newblock A general framework for adaptive processing of data structures.
\newblock {\em IEEE transactions on Neural Networks}, 9(5):768--786, 1998.

\bibitem[FJ90]{friedkin1990social}
Noah~E Friedkin and Eugene~C Johnsen.
\newblock Social influence and opinions.
\newblock {\em Journal of mathematical sociology}, 15(3-4):193--206, 1990.

\bibitem[GMS05]{gori2005new}
Marco Gori, Gabriele Monfardini, and Franco Scarselli.
\newblock A new model for learning in graph domains.
\newblock In {\em Proceedings. 2005 IEEE international joint conference on neural networks, 2005.}, volume~2, pages 729--734. IEEE, 2005.

\bibitem[PPTF16]{parsegov2016novel}
Sergey~E Parsegov, Anton~V Proskurnikov, Roberto Tempo, and Noah~E Friedkin.
\newblock Novel multidimensional models of opinion dynamics in social networks.
\newblock {\em IEEE Transactions on Automatic Control}, 62(5):2270--2285, 2016.

\bibitem[RFTI14]{ravazzi2014ergodic}
Chiara Ravazzi, Paolo Frasca, Roberto Tempo, and Hideaki Ishii.
\newblock Ergodic randomized algorithms and dynamics over networks.
\newblock {\em IEEE transactions on control of network systems}, 2(1):78--87, 2014.

\bibitem[RS15]{roughgarden2015local}
Tim Roughgarden and Florian Schoppmann.
\newblock Local smoothness and the price of anarchy in splittable congestion games.
\newblock {\em Journal of Economic Theory}, 156:317--342, 2015.
\newblock Computer Science and Economic Theory.

\bibitem[SGT{\etalchar{+}}08]{scarselli2008graph}
Franco Scarselli, Marco Gori, Ah~Chung Tsoi, Markus Hagenbuchner, and Gabriele Monfardini.
\newblock The graph neural network model.
\newblock {\em IEEE transactions on neural networks}, 20(1):61--80, 2008.

\bibitem[SS97]{sperduti1997supervised}
Alessandro Sperduti and Antonina Starita.
\newblock Supervised neural networks for the classification of structures.
\newblock {\em IEEE transactions on neural networks}, 8(3):714--735, 1997.

\end{thebibliography}

\appendix
\newpage
\section{Omitted Proofs}

\subsection{Local Smoothness}
\label{appendix:localsmoothness-proof}
We restate \Cref{localsmoothness}, which gives in our notation a result of Roughgarden and Schoppmann \cite{roughgarden2015local}, then provide a proof.
\begin{lemmarestate}{\ref{localsmoothness}}
    Consider a general cost-minimization game where we have $n$ players, each of whom can take some action represented by a vector $z_i \in \R^m$ (defining $z = (z_1, \ldots, z_n)$ as before), for which they have some penalty $c_i(z)$, such that the social cost is $SC(z) = \sum_i c_i(z)$. Assume that $SC(z)$ is always nonnegative.

    If for all $n$-tuples $x, y$ such that $x_i, y_i \in \R^n$, we have:
    \begin{align*}
        \sum_i \nabla_ic_i(x)^T(y_i - x_i) \leq \lambda SC(y) - \kappa SC(x),
    \end{align*}
    then the PoA of the cost minimization game is at most $\frac \lambda \kappa$.
\end{lemmarestate}
\begin{proof}
    First, because $x$ is a Nash equilibrium, it must be that for each person $i$, their personal cost $c_i(x)$ cannot be decreased by changing $x_i$: therefore, $\nabla_i c_i(x) = 0$.
    It then follows that:
    \begin{align*}
        \lambda SC(y) - \kappa SC(x)
         & \geq \sum_i \nabla_ic_i(x)^T(y_i - x_i)
         & \EqComment{Local $(\lambda, \kappa)$ smoothness}
        \\
         & \geq \sum_i 0
         & \EqComment{see above}
        \\
         & = 0.
    \end{align*}
    We can rearrange $\lambda SC(y) - \kappa SC(x)$ to get $\lambda SC(y) \geq \kappa SC(x)$, which then implies that $\frac {SC(x)} {SC(y)} \leq \frac \lambda \kappa$.

    Finally, taking $y$ to be the optimum, for any Nash equilibrium $x$ we have that the ratio of its social cost to the optimal cost is at most $\frac \lambda \kappa$, so the PoA is at most $\frac \lambda \kappa$ as desired.
\end{proof}

\subsection{Lower Bound on PoA}
\label{section:poa-lowerbound-inf}

The following lemma gives a result of \tempdel{Bhawalkar et al.\ }\cite{bhawalkar2013coevolutionary} that is slightly modified for our model. Specifically, the \munagalamodel model required that the cost functions be symmetric, and their proofs used this constraint. However, we can get the same results for our model \textit{without} the requirement that the cost functions be symmetric using the same proofs with some technical modifications. For example, this is important to show that the worst-case PoA of $e^x$ is \textit{exactly} $\frac 2 {e\ln 2}$.
\begin{lemmarestate}{\ref{scalarlowerboundlemma}}
    Given a nonnegative, differentiable, and convex function $h : \R \to \R$, the worst-case PoA of $h$ is lower bounded by the infimum of $\frac \lambda \kappa$ for which $\lambda, \kappa > 0$ and $h$ is $(\lambda, \kappa)$-suitable.
\end{lemmarestate}
\tempdel{Bhawalkar et al.}\cite{bhawalkar2013coevolutionary}'s original result, which is Theorem 3.5 in their paper, is proven in two lemmas. We give them below with the necessary modifications. In the below lemma, which is a modification of Lemma 3.6 in their paper, we briefly describe a part of their proof that is completely unchanged; we then show an elegant technique to prove the remainder in a simpler manner that does not rely on $h$ being symmetric.
\begin{lemma} \label{appendixlb1}
    Given a nonnegative and differentiable function $h : \R \to \R$, say that we have a finite number of pairs of constraints of the form:
    \begin{align*}
        h'(x)\frac {y - x} 1 &\leq \lambda h(y) - \kappa h(x),\\
        h'(x)\frac {y - x} 2 &\leq \lambda h(y) - \kappa h(x),
    \end{align*}
    for $x, y \in \R$. Let $\gamma$ be the minimum value of $\frac \lambda \kappa$ among $\lambda, \kappa > 0$ satisfying said constraints, and fix $\lambda, \kappa$ correspondingly. If $\gamma > 1$, then there exist $x_1, y_1, x_2, y_2 \in \R$ such that:
    \begin{align*}
        h'(x_1)\frac {y_1 - x_1} 2 &= \lambda h(y_1) - \kappa h(x_1),\\
        h'(x_2)(y_2 - x_2) &= \lambda h(y_2) - \kappa h(x_2),
    \end{align*}
    and $h'(x_1)(x_1 - y_1), h'(x_2)(x_2 - y_2)$ have opposite signs.
    \begin{proof}
        First note that \tempdel{Bhawalkar et al.}\cite{bhawalkar2013coevolutionary}'s notation uses $\mu$ in place of $\kappa$, so that $1 - \mu = \kappa$.

        We briefly summarize part of \tempdel{Bhawalkar et al.}\cite{bhawalkar2013coevolutionary}'s proof: they note that only considering constraints for one value of $p$, $\gamma$ will always be at most $1$: because $h$ is convex, it is automatically $(\lambda, \kappa, p)$-suitable for $\lambda, \kappa = p$. Therefore, given that $\gamma > 1$, it must be attained as the intersection of a constraint for $p = 1$ and a constraint for $p = 2$.

        The remainder of their proof shows that $x_1 - y_1, x_2 - y_2$ have opposite signs. This part uses the constraint that $h$ is symmetric to derive that $h$ is increasing for $x > 0$ (given that $h$ is convex); they then argue based on the fact that $h'$ is positive.

        We can instead argue as follows. When $h'(x)(y - x)$ is positive, the constraint $h'(x)(y - x) \leq \lambda h(y) - \kappa h(x)$ is at least as tight as $h'(x)\frac {y - x} 2 \leq \lambda h(y) - \kappa h(x)$ because $h'(x)(y - x) > h'(x)\frac {y - x} 2$. Similarly, when $h'(x)(y - x)$ is negative, the constraint $h'(x)\frac {y - x} 2 \leq \lambda h(y) - \kappa h(x)$ is at least as tight as $h'(x)(y - x) < \lambda h(y) - \kappa h(x)$.
        
        Therefore, considering the constraints of the form $h'(x)\frac {y - x} 1 \leq \lambda h(y) - \kappa h(x)$ only when $h'(x)(y - x)$ is positive and the constraints of the form $h'(x)\frac {y - x} 2 \leq \lambda h(y) - \kappa h(x)$ only when $h'(x)(y - x)$ is negative will yield the same $\gamma$, so we will necessarily have that $h'(x_1)(y_1 - x_1)$ is negative and $h'(x_2)(y_2 - x_2)$ is positive, meaning that they have opposite signs.
    \end{proof}
\end{lemma}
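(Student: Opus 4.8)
The plan is to treat the computation of $\gamma$ as a two-variable linear-fractional program in the unknowns $(\lambda, \kappa)$ and read off the structure of its optimum. Each of the finitely many constraints is an affine inequality in $(\lambda, \kappa)$ of the form $h'(x)\frac{y-x}{p} \le \lambda h(y) - \kappa h(x)$, so the feasible set is a convex polyhedron in the positive quadrant while the objective $\frac{\lambda}{\kappa}$ is linear-fractional. By the standard theory of such programs the minimum (which the statement assumes is attained) occurs at a vertex of the polyhedron, i.e.\ at a point where two of the constraint boundaries meet. Thus I would first fix the two constraints that are tight at the optimal $(\lambda, \kappa)$ and then argue about which types ($p = 1$ or $p = 2$) they can be.

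The key preliminary observation is that a family using constraints of a single value of $p$ can never force $\gamma > 1$: convexity of $h$ gives the first-order inequality $h'(x)(y - x) \le h(y) - h(x)$, so taking $\lambda = \kappa = \frac{1}{p}$ satisfies every $p$-constraint at once and yields ratio $1$. Combining this with the vertex structure, I would show that at the optimum at least one tight constraint has $p = 1$ and at least one has $p = 2$. The clean way to see this is: if no $p = 2$ constraint were tight at the optimal $(\lambda^\ast, \kappa^\ast)$, then all $p = 2$ constraints hold strictly there, and moving along the segment toward an optimal point of the relaxed ($p = 1$-only) program — whose ratio is at most $1 < \gamma$ — strictly decreases $\frac{\lambda}{\kappa}$ (it is monotone along any segment avoiding the pole $\kappa = 0$) while keeping every constraint satisfied near $(\lambda^\ast, \kappa^\ast)$, contradicting optimality. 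The symmetric argument forces a tight $p = 1$ constraint. This produces the desired $x_1, y_1$ (from a tight $p = 2$ constraint) and $x_2, y_2$ (from a tight $p = 1$ constraint) satisfying the two claimed equalities.

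The last and most delicate step is the sign condition, where I want to avoid assuming that $h$ is symmetric — the route by which one would otherwise deduce that $h$ is increasing on $\R^+$ and thereby control the sign of $h'$. Instead I would use a \emph{domination} comparison between the two members of each constraint pair $(x, y)$, which share the right-hand side $\lambda h(y) - \kappa h(x)$: when $h'(x)(y - x) > 0$ the $p = 1$ inequality is strictly tighter than the $p = 2$ one, and when $h'(x)(y - x) < 0$ the reverse holds. Consequently, dropping every $p = 1$ constraint with $h'(x)(y - x) < 0$ and every $p = 2$ constraint with $h'(x)(y - x) > 0$ leaves $\gamma$ unchanged, since each dropped constraint is implied by its partner. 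Restricting to this reduced family, the tight $p = 1$ constraint must have $h'(x_2)(y_2 - x_2) \ge 0$ and the tight $p = 2$ constraint must have $h'(x_1)(y_1 - x_1) \le 0$, so $h'(x_1)(x_1 - y_1)$ and $h'(x_2)(x_2 - y_2)$ indeed have opposite signs.

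I expect the main obstacle to be exactly this final sign step: it is precisely where the original argument invokes symmetry, and the challenge is to replace that with the purely order-theoretic domination comparison above so that the conclusion survives for an arbitrary, not necessarily symmetric, $h$. The first two paragraphs are essentially bookkeeping about a low-dimensional linear-fractional program and follow routinely once the single-$p$ ratio bound is in hand.
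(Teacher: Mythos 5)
Your proposal is correct and takes essentially the same approach as the paper: the convexity bound $\lambda = \kappa = \frac 1 p$ showing that a single-$p$ family forces $\gamma \leq 1$, the conclusion that an optimum with $\gamma > 1$ must make one $p = 1$ and one $p = 2$ constraint tight, and, crucially, the identical domination comparison (a $p = 1$ constraint is implied by its $p = 2$ partner when $h'(x)(y - x) < 0$ and vice versa, since the two share the right-hand side $\lambda h(y) - \kappa h(x)$), which is precisely the paper's replacement for the symmetry argument of \cite{bhawalkar2013coevolutionary}; your linear-fractional-program vertex structure and perturbation step merely fill in the portion the paper delegates to that prior work. One small tightening: you end with the weak inequalities $h'(x_1)(y_1 - x_1) \leq 0 \leq h'(x_2)(y_2 - x_2)$, whereas the strict opposite signs in the statement follow because a tight constraint with $h'(x)(y - x) = 0$ would give $\lambda h(y) = \kappa h(x)$ with $h(y) \geq h(x) \geq 0$ (by convexity) and $\frac \lambda \kappa = \gamma > 1$, forcing $h(x) = h(y) = 0$ and rendering that constraint vacuous --- a degenerate case the paper's own write-up also passes over silently.
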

The next lemma is a modified version of Lemma 3.7 from \tempdel{Bhawalkar et al.\ }\cite{bhawalkar2013coevolutionary}. We simply reprove the entire lemma as multiple details change in order to prove the desired result, and because the proof given by \tempdel{Bhawalkar et al.\ }\cite{bhawalkar2013coevolutionary} leaves out many details, meaning that the following proof also serves expository purposes.
\begin{lemma}\label{appendixlb2}
    We are given a differentiable function $h : \R \to \R$. If for some $\lambda, \kappa > 0$ and $x_1, y_1, x_2, y_2 \in S$ we have that:
    \begin{align*}
        h'(x_1)\frac {y_1 - x_1} 2 &= \lambda h(y_1) - \kappa h(x_1),\\
        h'(x_2)(y_2 - x_2) &= \lambda h(y_2) - \kappa h(x_2),
    \end{align*}
    and that $h'(x_1)(x_1 - y_1), h'(x_2)(x_2 - y_2)$ have opposite signs, then the worst-case PoA of $h$ is at least $\frac \lambda \kappa$.
\end{lemma}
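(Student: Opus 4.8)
The plan is to construct, for the given $\lambda, \kappa > 0$ and the two tight pairs $(x_1, y_1)$ (the $p = 2$ constraint) and $(x_2, y_2)$ (the $p = 1$ constraint), an explicit symmetric opinion formation game whose cost functions are scalings of $h$, together with a Nash equilibrium $x$ and a comparison configuration $y$ at which the local smoothness inequality (Inequality \ref{equation:roughgarden-inequality}) holds with equality. Recall from the proof of \Cref{suitabilitythm} that $\sum_i \nabla_i c_i(x)^T(y_i - x_i) = \sum_i v_i + \sum_{i<j} t_{ij}$ and $\lambda SC(y) - \kappa SC(x) = \sum_i w_i + \sum_{i<j} u_{ij}$, and that the inequality follows termwise from $v_i \le w_i$ and $t_{ij} \le u_{ij}$. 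If I can make every termwise inequality an equality while keeping $x$ a Nash equilibrium, then the left side vanishes (since $\nabla_i c_i(x) = 0$), forcing $\lambda SC(y) = \kappa SC(x)$, i.e. $\frac{SC(x)}{SC(y)} = \frac{\lambda}{\kappa}$. Because the true optimum can only have smaller social cost than $y$, the game then has $\poa \ge \frac{\lambda}{\kappa}$, and since the worst-case PoA is the supremum of $\poa$ over exactly such games, the lemma follows.

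Concretely, I would take a two-person game; the freedom in the transformations $A_{ij}, B_{ij}$ in the heterogeneous model makes a larger network unnecessary. Both persons are placed at Nash opinion $\theta$ and comparison opinion $\theta + (y_2 - x_2)$. I set $R_1 = R_2 = 1$ and $s_i = \theta - x_2$, so the internal argument $R_i z_i - s_i$ equals $x_2$ at the Nash point and $y_2$ at the comparison point; this makes the $(\lambda,\kappa,1)$-suitability of $g_i = r_i h$ tight at $(x_2, y_2)$, i.e. $v_i = w_i$. For the single edge I choose $A_{12} = B_{12} = \frac{y_1 - x_1}{2(y_2 - x_2)}$ (with $A_{21} = B_{12}$, $B_{21} = A_{12}$ as symmetry requires) and fix $\theta = \frac{x_1 (y_2 - x_2)}{y_1 - x_1}$; a short computation shows the edge argument $A_{12} z_1 + B_{12} z_2$ then equals $x_1$ at the Nash point and $y_1$ at the comparison point, making the $(\lambda,\kappa,2)$-suitability of $f_{12} = w h$ tight at $(x_1, y_1)$, i.e. $t_{12} = u_{12}$. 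Thus all termwise inequalities are simultaneously tight.

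The remaining requirement is that $x$ be a genuine Nash equilibrium with nonnegative weights, and this is where the sign hypothesis does its work. Each $c_i$ is a nonnegative combination of $h$ precomposed with affine maps, hence convex in $z_i$, so it suffices to check stationarity: $\nabla_i c_i(x) = r_i h'(x_2) + w A_{12} h'(x_1) = 0$, giving $r_i = -w A_{12}\, h'(x_1)/h'(x_2)$ for both $i$. Fixing $w > 0$, the constraint $r_i \ge 0$ reduces to $\frac{(y_1 - x_1)\,h'(x_1)}{(y_2 - x_2)\,h'(x_2)} \le 0$, which says exactly that $h'(x_1)(x_1 - y_1)$ and $h'(x_2)(x_2 - y_2)$ have opposite signs -- precisely the lemma's hypothesis. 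I expect this reconciliation to be the crux of the proof: the opposite-sign condition is exactly what lets the Nash stationarity equations be solved with nonnegative internal and edge weights.

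Finally I would confirm $SC(y) > 0$ so the ratio $\frac{SC(x)}{SC(y)} = \frac{\lambda}{\kappa}$ is meaningful. This fails only in the degenerate case where $h$ vanishes at all of $x_1, y_1, x_2, y_2$, which does not arise in the regime $\frac{\lambda}{\kappa} > 1$ in which the lemma is invoked (via \Cref{appendixlb1}) and can otherwise be removed by an arbitrarily small perturbation of the construction. With $SC(y) > 0$ established, the chain above yields $\poa \ge \frac{\lambda}{\kappa}$ for this game and hence a worst-case PoA of at least $\frac{\lambda}{\kappa}$.
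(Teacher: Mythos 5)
Your proposal is correct, and its skeleton matches the paper's proof — construct an explicit small game whose cost functions are nonnegative scalings of $h$, arrange the affine data so the pair cost is evaluated at $(x_1, y_1)$ and the internal cost at $(x_2, y_2)$, solve the stationarity equation for the weight ratio, and invoke the opposite-sign hypothesis exactly where you do, to make that ratio nonnegative — but your gadget is genuinely different and more economical. The paper, following Bhawalkar et al., uses a three-person mirror-symmetric network $\pneg, \pzero, \ppos$ with opposite-sign edge maps ($A = -a$, $B = a$, so pair arguments have the form $a(z_j - z_i)$), an anchor person $\pzero$ with no internal cost pinned at $0$ by symmetry, and weight $w = -\frac{bh'(x_2)}{ah'(x_1)}$; the mirror structure is what makes the two end-persons' stationarity conditions coincide. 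You instead exploit the heterogeneity of the model by taking same-sign maps $A_{12} = B_{12} = \frac{y_1 - x_1}{2(y_2 - x_2)}$, so the pair argument is proportional to $z_1 + z_2$ and both persons can sit at the common opinion $\theta$ (well-defined since the strict sign hypothesis forces $y_1 \neq x_1$, $y_2 \neq x_2$, and $h'(x_1), h'(x_2) \neq 0$); this collapses the network to two persons and a single stationarity equation, and the computation closes: with $r = -wA_{12}h'(x_1)/h'(x_2)$ one gets $\lambda SC(y) - \kappa SC(x) = 2\left[r\left(\lambda h(y_2) - \kappa h(x_2)\right) + w\left(\lambda h(y_1) - \kappa h(x_1)\right)\right] = 2\left[r h'(x_2)(y_2 - x_2) + w h'(x_1)\frac{y_1 - x_1}{2}\right] = 0$. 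Two minor remarks. First, like the paper's proof, yours needs convexity of $h$ so that stationarity implies each $c_i$ is actually minimized (the lemma as stated only says "differentiable," but it is invoked via \Cref{scalarlowerboundlemma} where $h$ is convex); you state this explicitly, which is good. Second, your degenerate case $SC(y) = 0$ never arises, so no perturbation is needed: since $r, w > 0$ under the strict sign hypothesis, $SC(y) = 0$ together with $\kappa SC(x) = \lambda SC(y)$ would force $h$ to vanish at all of $x_1, y_1, x_2, y_2$, whence the hypothesized equality at $(x_1, y_1)$ gives $h'(x_1)(y_1 - x_1) = 0$, contradicting the sign assumption outright. The net tradeoff: the paper's construction stays close to the original symmetric-cost setting (pair costs depending on $z_i - z_j$, hence the three-person path), while yours demonstrates that in the \ourmodel model two persons suffice, precisely because same-sign couplings like $h(c(z_1 + z_2))$ are available there.
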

\begin{proof}
We first define $a = \frac {y_1 x_2 - x_1 y_2} {x_2 - y_2}$ and $b = \frac {y_1 x_2 - x_1 y_2} {x_1 - y_1}$. We will define a symmetric opinion formation game whose cost functions are $h$ dilated by $a, b$; their forms have been carefully chosen so that the PoA can be made $\frac \lambda \kappa$ with a simple choice of internal opinions. Importantly, note that because the numerator in their definitions are the same and the denominators in their definitions are $x_2 - y_2$ and $x_1 - y_1$, which have opposite signs, we have that $a, b$ also have opposite signs.

We now describe the specific opinion formation game whose PoA we will show to be $\frac \lambda \kappa$. This game will have three persons, who we will refer to as $\pneg, \pzero, \ppos$. These names are explanatory -- persons $\ppos$ and $\pneg$ will be reflections of each other in the sense that their internal opinions will be opposite and their cost functions will be the same, and person $\pzero$ will have no internal penalty, meaning that by symmetry person $\pzero$ will have expressed opinion $0$ in both the Nash equilibrium and the social optimum, and so the behavior of persons $\pneg, \ppos$ will always be opposites of each other.

We thus choose cost functions $f_{\pneg0} = f_{\ppos0} = wh$ for a $w \geq 0$ to be chosen later, $g_\pneg = g_\ppos = h$, and $g_\pzero = 0$. Therefore, all cost functions are $h$ multiplied by a nonnegative weight, so this game is included in the definition of the worst-case PoA of $h$, meaning that the worst-case PoA of $h$ is at least the PoA of this game.

We then choose $A_{\pneg0} = B_{0\pneg} = -a, B_{\pneg0} = A_{0\pneg} = a$, so that the cost due to difference of opinion between $\pneg$ and $\pzero$ is $wh(az_\pzero - az_\pneg)$. Similarly, we choose $A_{\ppos0} = B_{0\ppos} = a, B_{\ppos0} = A_{0\ppos} = -a$, so that the cost due to difference of opinion between $\ppos$ and $\pzero$ is $wh(az_\ppos - az_\pzero)$. We further choose $R_\pneg = -1, s_\pneg = 1$, so that the internal cost for person $\pneg$ is $h(-bz_\pneg - b)$, which can be written as $h(-b - bz_\pneg)$. Similarly, we choose $R_\ppos = 1, s_\ppos = 1$ so that the internal cost for person $\ppos$ is $h(bz_\ppos - b)$. The choices of $R_\pzero, s_\pzero$ are irrelevant as $g_\pzero = 0$.

Therefore, the costs for each person are as follows:
\begin{align*}
    c_\ppos(z) &= h(bz_\ppos - b) + wh(az_\ppos - az_\pzero),\\
    c_\pzero(z) &= wh(az_\pzero - az_\pneg) + wh(az_\ppos - az_\pzero),\\
    c_\pneg(z) &= h(-b - bz_\pneg) + wh(az_\pzero - az_\pneg).
\end{align*}
The added symmetry in this game is key. For convenience, we will call a set of expressed opinions $z$ \textit{symmetric} if we have $z_\pzero = 0$ and $z_\pneg = -z_\ppos$. For a symmetric $z$, the cost $wh(z_\pzero - z_\pneg)$ will be identical to the cost $wh(z_\ppos - z_\pzero)$, because $z_\pzero - z_\pneg = -z_\pneg = z_\ppos = z_\ppos - z_\pzero$. Similarly, the cost $h(-1 - z_\pneg)$ will be identical to the cost $h(z_\ppos - 1)$. Crucially, this symmetry does not rely on $h$ itself being symmetric.

As a consequence, if for a symmetric $z$, we have that person $\ppos$'s choice $z_\ppos$ is personally optimal given the other persons' expressed opinions, we will have that the same is true for person $\pneg$ by symmetry. Furthermore, for \textit{any} symmetric $z$ the partial derivative of $c_\pzero(z)$ with respect to $z_\pzero$ must be $0$ by symmetry, meaning that the choice $z_\pzero$ is personally optimal for any symmetric $z$. Therefore, to show that some symmetric $x$ is a Nash equilibrium, it suffices to show that person $\ppos$'s choice is personally optimal.

A second consequence is that the social cost for a symmetric $z$ becomes equal to:
\begin{align}\label{appendixssc}
SC(z) = 2(h(bz_\ppos - b) + 2wh(az_\ppos)),
\end{align}
because both $c_\ppos(z)$ and $c_\pneg(z)$ are equal to $h(bz_\ppos - b) + wh(az_\ppos - az_\pzero)$, $c_\pzero(z)$ is equal to $wh(az_\ppos - az_\pzero) + wh(az_\ppos - az_\pzero)$, and $z_\pzero = 0$. Both of these consequences will be used below.

We now define $x$ so that $x_\ppos = \frac {x_1} a, x_\pzero = 0, x_\pneg = -\frac {x_1} a$, and $y$ so that $y_\ppos = \frac {y_1} a, y_\pzero = 0, y_\pneg = -\frac {y_1} a$. Note that both $x, y$ are symmetric. $x, y$ are in fact the Nash equilibrium and social optimum respectively. We will show below that $x$ is the Nash equilibrium; however, we will not show that $y$ is the social optimum as in order to lower bound the PoA of the game we only need $y$ as an example.

As a preliminary step, we perform two calculations for later use. We first show that $x_\ppos - 1 = \frac {x_2} b$:
\begin{align*}
    x_\ppos - 1
    &= \frac {x_1} a - 1&\EqComment{Definition of $x_\ppos$}\\
    &= \frac {x_1 - a} a &\EqComment{Writing as single fraction}\\
    &= \frac {x_1 - \frac {y_1x_2 - x_1 y_2} {x_2 - y_2}} {\frac {y_1x_2 - x_1 y_2} {x_2 - y_2}} &\EqComment{Definition of $a$}\\
    &= \frac {x_1(x_2 - y_2) - (y_1x_2 - x_1 y_2)} {y_1x_2 - x_1 y_2} &\EqComment{Multiplying by denominator}\\
    &= \frac {x_1 - y_1} {y_1 x_2 - x_1 y_2} x_2 &\EqComment{Simplification}\\
    &= \frac {x_2} b. &\EqComment{Definition of $b$}
\end{align*}
We next show that $y_\ppos - 1 = \frac {y_2} b$, which proceeds along similar lines:
\begin{align*}
    y_\ppos - 1
    &= \frac {y_1} a - 1 &\EqComment{Definition of $y_\ppos$}\\
    &= \frac {y_1 - a} a &\EqComment{Writing as single fraction}\\
    &= \frac {y_1 - \frac {y_1 x_2 - x_1 y_2} {x_2 - y_2}} {\frac {y_1 x_2 - x_1 y_2} {x_2 - y_2}} &\EqComment{Definition of $a$}\\
    &= \frac {y_1(x_2 - y_2) - (y_1 x_2 - x_1 y_2)} {y_1 x_2 - x_1 y_2}&\EqComment{Multiplying by denominator}\\
    &= \frac {x_1 - y_1} {y_1 x_2 - x_1 y_2} y_2 &\EqComment{Simplification}\\
    &= \frac {y_2} b. &\EqComment{Definition of $b$}
\end{align*}
We will now show that $x_\ppos$ is the optimal choice for person $\ppos$ given that the other persons' opinions remain fixed; as stated earlier, this is sufficient to show that $x$ is a Nash equilibrium given that $x$ is symmetric.

More precisely, we will choose $w$ so that $x_\ppos$ is personally optimal. The necessary constraint on $w$ comes from setting $\frac \partial {\partial z_\ppos}|_{z = x} c_\ppos(z)$ to $0$, so we calculate said partial derivative here:
\begin{align*}
    \frac \partial {\partial z_\ppos}|_{z = x} c_\ppos(z)
    &= \frac \partial {\partial z_\ppos}|_{z = x}[h(bz_\ppos - b) + wh(az_\ppos - az_\pzero)] &\EqComment{Expanding $c_\ppos(z)$}\\
    &= bh'(bx_\ppos - bs_\ppos) + wah(ax_\ppos - x_\pzero) &\EqComment{Chain rule}\\
    &= bh'\left(b \frac {x_2} b\right) wah(ax_\ppos - x_\pzero) &\EqComment{$x_\ppos - s_\ppos = \frac {x_2} b$}\\
    &= bh'\left(b \frac {x_2} b\right) wah(a\frac {x_1} a) &\EqComment{Definition of $x$}\\
    &= bh'(x_2) + wah'(x_1).
\end{align*}
We therefore choose $w = -\frac {bh'(x_2)} {ah'(x_1)}$ in order to make $x_\ppos$ personally optimal. We can obtain a useful expression for $w$ as follows:
\begin{align*}
    w
    &= -\frac {bh'(x_2)} {ah'(x_1)} &\EqComment{Definition of $w$}\\
    &= -\frac {\frac {y_1 x_2 - x_1 y_2} {x_1 - y_1}} {\frac {y_1 x_2 - x_1 y_2} {x_2 - y_2}} \cdot \frac {h'(x_2)} {h'(x_1)} &\EqComment{Definitions of $a, b$}\\
    &= -\frac {x_2 - y_2} {x_1 - y_1} \cdot \frac {h'(x_2)} {h'(x_1)}. &\EqComment{Cancellation}
\end{align*}
Importantly, $w$ must be nonnegative. To see this, recall that $h'(x_1)(y_1 - x_1)$ and $h'(x_2)(y_2 - x_2)$ have opposite signs; $w$ is simply the negative of the ratio of the latter to the former, and so it must be positive.

We now finally show that $\frac {SC(x)} {SC(y)} = \frac \lambda \kappa$. It is equivalent to show that $\lambda SC(y) - \kappa SC(x) = 0$. We will actually show that $\frac {\lambda SC(y) - \kappa SC(x)} 2 = 0$ as this is also equivalent and removes a redundant factor of $2$ from the argument:
\begin{align*}
    &\frac {\lambda SC(y) - \kappa SC(x)} 2 \\
    &= \frac {\lambda \cdot 2(h(by_\ppos - b) + 2wh(ay_\ppos)) - \kappa \cdot 2(h(bx_\ppos - b) + 2wh(ax_\ppos))} 2 &\EqComment{\Cref{appendixssc}}\\
    &= \lambda (h(by_\ppos - b) + 2wh(ay_\ppos)) - \kappa (h(bx_\ppos - b) + 2wh(ax_\ppos)) &\EqComment{Cancellation}\\
    &= \lambda \left(h(by_\ppos - b) + 2wh\left(a\frac {y_1} a\right)\right) - \kappa \left(h(bx_\ppos - b) + 2wh\left(a\frac {x_1} a\right)\right) &\EqComment{Definitions of $x, y$}\\
    &= \lambda \left(h(by_\ppos - b) + 2wh\left(a\frac {y_1} a\right)\right) - \kappa \left(h\left(b\frac {x_2} b\right) + 2wh\left(a\frac {x_1} a\right)\right) &\EqComment{$x_\ppos - 1 = \frac {x_2} b$}\\
    &= \lambda \left(h\left(b\frac {y_2} b\right) + 2wh\left(a\frac {y_1} a\right)\right) - \kappa \left(h\left(b\frac {x_2} b\right) + 2wh\left(a\frac {x_1} a\right)\right) &\EqComment{$y_\ppos - 1 = \frac {y_2} b$}\\
    &= \lambda (h(y_2) + 2wh(y_1)) - \kappa (h(x_2) + 2wh(x_1)) &\EqComment{Cancellations}\\
    &= (\lambda h(y_2) - \kappa h(x_2)) + 2w(\lambda h(y_1) - \kappa h(x_1)) &\EqComment{Rearranging}\\
    &= h'(x_2)(y_2 - x_2) + 2w h'(x_1)\frac {y_1 - x_1} 2 &\EqComment{Statement of lemma}\\
    &= h'(x_2)(y_2 - x_2) + wh'(x_1)(y_1 - x_1) &\EqComment{Cancellation}\\
    &= h'(x_2)(y_2 - x_2) - \frac {x_2 - y_2} {x_1 - y_1} \cdot \frac {h'(x_2)} {h'(x_1)} h'(x_1)(y_1 - x_1) &\EqComment{Expanding $w$}\\
    &=h'(x_2)(y_2 - x_2) - h'(x_2)(y_2 - x_2) &\EqComment{Cancellation}\\
    &= 0.
\end{align*}
We thus have that $\frac {SC(x)} {SC(y)} = \frac \lambda \kappa$. Therefore, let $y'$ be a social optimum. Then $SC(y') \leq SC(y)$, so the PoA of the game is $\frac {SC(x)} {SC(y')} \geq \frac {SC(x)} {SC(y)} = \frac \lambda \kappa$. It follows that the worst-case PoA of $h$ is at least $\frac \lambda \kappa$.
\end{proof}
We now prove \Cref{scalarlowerboundlemma}; this proof is essentially a restatement of an argument of \tempdel{Bhawalkar et al.\ }\cite{bhawalkar2013coevolutionary} and is included for completeness.
\begin{proof}{Proof of \Cref{scalarlowerboundlemma}}
    Let $I$ be the infimum of $\frac \lambda \kappa$ for which $\lambda, \kappa > 0$ and $h$ is $(\lambda, \kappa)$-suitable on $S$ so that we wish to show that the worst-case PoA of $h$ is at least $I$. By the definition of suitability, $I$ is the infimum of $\frac \lambda \kappa$ over $\lambda, \kappa > 0$ satisfying the pair of constraints:
    \begin{align*}
        h'(x)\frac {y - x} 1 &\leq \lambda h(y) - \kappa h(x),\\
        h'(x)\frac {y - x} 2 &\leq \lambda h(y) - \kappa h(x),
    \end{align*}
    for all $x, y \in S$. As the rational numbers are dense in the real numbers, it is equivalent to consider the above pair of constraints for only rational $x, y$. The number of pairs of constraints is then countable, meaning that we can order them, and define $I_n$ to be the infimum of $\frac \lambda \kappa$ over $\lambda, \kappa > 0$ satisfying the first $n$ pairs of constraints. It then follows that $\lim_{n \to \infty} I_n = I$, so that it suffices to show that the worst-case PoA of $h$ is at least $I_n$ for all $n$.

    For any $n$, first note that if $I_n \leq 1$ then we are done, as the PoA is always at least $1$, and so the worst-case PoA is as well. Otherwise, we can apply \Cref{appendixlb1} with $\gamma = I_n$ to get that there exist $x_1, y_1, x_2, y_2 \in S$ and $\lambda, \kappa > 0$ such that: \begin{align*}
        h'(x_1)\frac {y_1 - x_1} 2 &= \lambda h(y_1) - \kappa h(x_1),\\
        h'(x_2)(y_2 - x_2) &= \lambda h(y_2) - \kappa h(x_2),
    \end{align*}
    and $x_1 - y_1, x_2 - y_2$ have opposite signs. We can then apply \Cref{appendixlb2} to get that the worst-case PoA of $h$ is at least $\frac \lambda \kappa$, completing the proof.
\end{proof}

\subsection{Equivalence of The \arbitrarysymmetricmodelcapital Model and The \ourmodelcapital Model} \label{section:equivalence_arb_het}

In this section, we prove that the \arbitrarysymmetricmodel model and the \ourmodel model are in fact equivalent, and consequently, the \ourmodel model is the most general model with symmetric cost functions.

First recall that in the \arbitrarysymmetricmodel model, the cost function for person $i$ is defined as:

$$
    c_i(z) = g_i(z) + \sum_{j \neq i}{f_{ij}(z_i,z_j)},
$$
where we have the symmetry condition $f_{ij}(x,y) = f_{ij}(y,x)$ for each $i$ and $j$. In the \ourmodel model, the cost function takes the following form:

$$
c_i(z) = g_i(R_iz_i - s_i) + \sum_{j \neq i}{f_{ij}(A_{ij}z_i + B_{ij}z_j)},
$$

with the symmetry condition $f_{ij}(A_{ij}z_i + B_{ij}z_j)=f_{ji}(A_{ji}z_j + B_{ji}z_i)$.

It is straightforward to see that the \ourmodel model can be reduced to the \arbitrarysymmetricmodel model since the cost function allows for more general functions. We proceed by showing that the converse is also true.

To do this, let us represent the cost functions and internal opinions in the \arbitrarysymmetricmodel model using the superscript "arb". Each cost function $f(x, y)$ where $f_{ij}^{\text{arb}} : \R^{m_i} \times \R^{m_j} \to \R$ can be viewed as a function $f : \R^{m_i + m_j} \to \R$ on the concatenation of $x, y$ -- that is, if we let $z \in \R^{m_i + m_j}$ be a vector where the first $m_i$ entries in $z$ are the entries in $x$ and the last $m_j$ entries in $z$ are the entries in $y$, then $f_{ij}(z) = f_{ij}^{\text{arb}}(x, y)$. Therefore, if for $i < j$ we let $A_{ij} = B_{ji} : \R^{m_i} \to \R^{m_i + m_j}$ be a linear map from $x$ to $x$ appended with $m_j$ entries equal to $0$, and let $B_{ij} = A_{ji}: \R^{m_j} \to \R^{m_i + m_j}$ be a linear map from $y$ to $y$ prepended with $m_i$ entries equal to $0$, then we find that $f_{ij}(A_{ij}x + B_{ij}y) = f_{ij}(z) = f_{ij}^{\text{arb}}(x, y)$. We can then finally take $g_i = g_i^{\text{arb}}$ with $R_i$ and $s_i$ equal to $0$ to find that all costs are equivalent.

\subsection{Lemmas \ref{cneglemma} and \ref{cposlemma}}
\label{appendix:cposneglemma-proof}
We will first restate \Cref{cneglemma} and then provide its proof.
\begin{lemmarestate}{\ref{cneglemma}}
For any function $h$ : $[M, \infty)$ (for some $M \in \R$) which is continuous, positive, and satisfies $\int_M^{\infty} h(x)dx = \infty$, we have that for any $\epsilon > 0$, there exist $z \in [M + 1, \infty)$ such that
\begin{align*}
    \frac {\int_z^{z - 1} h(x)dx} {h(z)} \geq -1 - \epsilon.
\end{align*}
\end{lemmarestate}
\begin{proof}
    We will show this using a proof by contradiction. Suppose that there exists some $\epsilon > 0$ such that for all $z \in [M + 1, \infty)$, we have $\frac {\int_z^{z - 1} h(x)dx} {h(z)} < -1 - \epsilon$. 
    Then $\int_z^{z - 1} h(x)dx < (-1 - \epsilon) h(z)$, and so by flipping the signs we have $\int_{z-1}^z h(x)dx > (1 + \epsilon) h(z)$.
    For any $T > M+1$, taking the integral of this inequality over $[M+1,T]$ leads to
    \begin{align}
        \label{inequality:integral-c-minusone}
        \int_{M+1}^T \int_{z-1}^z h(x)dx dz > (1 + \epsilon) \int_{M+1}^T h(z)dz.
    \end{align}
    We can then upper bound the left hand side as follows:
    \begin{align*}
        \int_{M+1}^T \int_{z-1}^z h(x)dx dz = \int_M^{M+1} h(z)(z-M)dz + \int_{M +1}^{T - 1} h(z)dz + \int_{T-1}^T h(z)(T - z)dz
        \leq \int_M^T h(z)dz.
    \end{align*}
    Combining this with Inequality \ref{inequality:integral-c-minusone} gives us $\int_M^T h(z)dz > (1 + \epsilon) \int_{M+1}^T h(z)dz$. Crucially, we can split the left hand side integral into $\int_M^{M + 1}h(z)dz + \int_{M+1}^T h(z)dz$, which then allows us to cancel the term $\int_{M+1}^T h(z)dz$ from the right hand side, leading to the inequality
    \begin{align*}
        \int_M^{M + 1}h(z)dz > \epsilon \int_{M+1}^T h(z)dz.
    \end{align*}
    Taking the limit as $T \to \infty$, the left hand side remains constant and finite (since $h(z)$ is bounded from both sides for $z \in [M, M+1]$) while the right hand side should go to infinity as per the assumptions of the lemma, leading to a contradiction and concluding the proof.
\end{proof}

We now restate \Cref{cposlemma} and then prove it.
\begin{lemmarestate}{\ref{cposlemma}}
For any function $h$ : $[M, \infty)$ (for some $M \in \R$) which is continuous, positive, and satisfies $\int_M^{\infty} h(x)dx = \infty$, we have that for any $\epsilon > 0$, there exists $z \in [M, \infty)$ such that:
\begin{align*}
    \frac {\int_z^{z + 1} h(x)dx} {h(z)} \geq 1 - \epsilon.
\end{align*}
\end{lemmarestate}
\begin{proof}
    We again proceed by contradiction. Suppose that there exists some $\epsilon > 0$ such that for all $z \in [M, \infty)$. We then have $\frac {\int_z^{z + 1} h(x)dx} {h(z)} < 1 - \epsilon$, and so $\int_z^{z+1} h(x)dx < (1 - \epsilon) h(z)$. For any $T > M+1$, taking the integral of this inequality over $[M, T]$ results in
    \begin{align}
        \label{inequality:intergral-c-one}
        \int_M^T \int_z^{z+1} h(x)dxdz < (1 - \epsilon) \int_{M}^{T} h(z)dz.
    \end{align}
    We can then lower bound the left hand side as follows:
    \begin{align*}
        \int_M^T \int_z^{z+1} h(x)dxdz = \int_M^{M+1} h(z)(z-M)dz + \int_{M+1}^T h(z)dz + \int_T^{T + 1}h(z)(T + 1 - z)dz 
        \ge \int_{M+1}^T h(z)dz.
    \end{align*}
    Combining this with Inequality \ref{inequality:intergral-c-one} we have $\int_{M+1}^T h(z)dz < (1 - \epsilon) \int_{M}^{T} h(z)dz$. Similarly to before, we can split the right hand side into $(1 - \epsilon) \int_M^{M + 1} h(z)dz + (1 - \epsilon)\int_{M + 1}^T h(z)dz$; the second term can then be cancelled with the left hand side, leading to the inequality
    \begin{align*}
        \epsilon\int_{M+1}^T h(z)dz < (1 - \epsilon) \int_{M}^{M + 1} h(z)dz.
    \end{align*}
    Taking the limit as $T \to \infty$, one can see that the right hand side remains constant and finite (since $h(z)$ is bounded from both sides for $z \in [M, M+1]$), while the left hand side should go to infinity as per the assumptions of the lemma, leading to a contradiction and concluding the proof.
\end{proof}

\section{Convergence}
\label{appendix:convergence}
Consider the \quadraticvectormodel model's cost function from \Cref{eq:quadratic-vectorized:cost}, restated below.
\begin{align*}
    c_i(z) = (z_i - s_i)^T R_i (z_i - s_i) + \sum_j (z_i - z_j)^T W_{ij} (z_i - z_j).
\end{align*}
As stated before, we assume that $W_{ij}$ is positive semidefinite for every $i,j$, and $R_i$ is also positive semidefinite for every $i$. In this section, we consider the case where all $R_i$ matrices are further positive semidefinite. We derive an update rule from the above cost function by taking $\nabla_i c_i(z) = \frac{\partial c_i(z)}{\partial z_i}$,
\begin{align}
    \label{equation:quadratic-vectorized:derivative}
    \nabla_i c_i(z) = 2R_iz_i - 2R_is_i + 2 \left(\sum_j W_{ij}\right)z_i - 2 \sum_j W_{ij}z_j,
\end{align}
because $R_i$ and $W_{ij}$ are all symmetric. To arrive at the update rule in \Cref{eq:quadratic-vectorized:update}, we put $\nabla_i c_i(z) = 0$ and get:
\begin{align*}
    z_i(t+1) = \left(R_i + \sum_j W_{ij}\right)\inv \left(R_is_i + \sum_j W_{ij} z_j(t)\right).
\end{align*}
Because $R_i$ is positive definite and each $W_{ij}$ is positive semidefinite, $R_i + \sum_j W_{ij}$ is positive definite and thus invertible. To go further we need some definitions. We call a matrix
$$
    A = \begin{bmatrix}
        A_{11} & A_{12} & \cdots & A_{1n} \\
        A_{21} & A_{22} & \cdots & A_{2n} \\
        \vdots & \vdots & \ddots & \vdots \\
        A_{n1} & A_{n2} & \cdots & A_{nn}
    \end{bmatrix}\in \R^{nm \times nm}
$$
a \textit{block} matrix, where $A_{ij} \in \R^{m \times m}$. We further call $A$, \textit{block diagonal}, if $A_{ij} = \mathbf{0}_m$ for every $i \neq j$, where $\mathbf{0}_m$ is a $m \times m$ matrix of zeros. We also call a vector $v = \begin{pmatrix} v_1^T, \dots, v_n^T \end{pmatrix}^T \in \R^{nm}$, a \textit{block} vector, where $v_i \in \R^m$. For a matrix $A$, let $\rho(A)$ be its \emph{spectral radius}.

Let $W$ be a block matrix, where $W_{ij}$ is the weight matrix on the edge $(i, j)$, which we have been using so far. Let $\Lambda$ be a block diagonal matrix, where $\Lambda_{ii} = \left(R_i + \sum_j W_{ij}\right)\inv$. Then we can rewrite the update rule in \Cref{eq:quadratic-vectorized:update} as:
\begin{align}
    z_i(t+1) = \Lambda_{ii} \sum_j W_{ij} z_j(t) + \Lambda_{ii} R_i s_i.
\end{align}
Let $z(t) = \begin{pmatrix} z_1(t)^T, \dots, z_n(t)^T \end{pmatrix}^T$ be the block vector of all opinions at time step $t$, and let $R$ be a block diagonal matrix having each $R_i$ on its diagonal. Then if $s = \begin{pmatrix} s_1^T, \dots, s_n^T \end{pmatrix}^T$ is the block vector of all internal opinions, we can rewrite the above update rule in a single equation as:
\begin{align}
    \label{eq:quadratic-vectorized:short-form}
    z(t+1) = \Lambda W z(t) + \Lambda R s.
\end{align}
We now state a lemma from Ravazzi, Frasca, Tempo, and Ishii \cite{ravazzi2014ergodic}, which helps us understand the convergence of the above update rule.
\begin{lemma}
    \cite[Proposition 1]{ravazzi2014ergodic}
    \label{lemma:convergenet-ravazzi}
    The model:
    $$
        x(t + 1) = Ax(t) + s,
    $$
    is convergent to 
    $$
        x^* = (I - A)\inv s,
    $$
    for any $x(0)$, if and only if $\rho(A) < 1$.
\end{lemma}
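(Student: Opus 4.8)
The plan is to analyze the closed form of the iteration by unrolling the recurrence. Starting from $x(0)$ and applying the update $t$ times, one obtains by a straightforward induction
\begin{align*}
    x(t) = A^t x(0) + \left(\sum_{k=0}^{t-1} A^k\right) s.
\end{align*}
The whole question of convergence then reduces to understanding the two terms on the right-hand side as $t \to \infty$: the homogeneous term $A^t x(0)$, and the partial sums $\left(\sum_{k=0}^{t-1} A^k\right) s$ of the matrix geometric (Neumann) series. I would carry out the two directions of the equivalence separately in terms of this closed form.

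For the direction assuming $\rho(A) < 1$, I would first invoke the standard fact that $\rho(A) < 1$ is equivalent to $A^t \to 0$ as $t \to \infty$. This can be seen via the Jordan canonical form of $A$ (each Jordan block with eigenvalue $\mu$ contributes entries behaving like a polynomial in $t$ times $\mu^t$, which vanish precisely when $|\mu| < 1$), or equivalently via Gelfand's formula $\rho(A) = \lim_{t} \norm{A^t}^{1/t}$. Given this, the homogeneous term $A^t x(0)$ vanishes for every $x(0)$. Moreover, $\rho(A) < 1$ ensures that $1$ is not an eigenvalue of $A$, so $I - A$ is invertible; the telescoping identity $(I - A)\sum_{k=0}^{t-1} A^k = I - A^t$ together with $A^t \to 0$ then gives $\sum_{k=0}^{\infty} A^k = (I - A)^{-1}$. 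Combining these, $x(t) \to 0 + (I - A)^{-1}s = x^*$, establishing convergence to the claimed limit regardless of $x(0)$.

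For the converse, I would argue that if the iteration converges for \emph{every} initial point, then $\rho(A) < 1$. The cleanest route is to compare two trajectories: for initial points $x(0)$ and $x'(0)$, linearity gives $x(t) - x'(t) = A^t(x(0) - x'(0))$. If both trajectories converge, their difference tends to $0$; since $x(0) - x'(0)$ ranges over all of $\R^{nm}$, this forces $A^t v \to 0$ for every $v$, i.e. $A^t \to 0$, which by the spectral equivalence above yields $\rho(A) < 1$. The main obstacle, and the only genuinely nontrivial ingredient, is precisely this equivalence $A^t \to 0 \iff \rho(A) < 1$; everything else is elementary algebra on the unrolled recurrence. Since the lemma is quoted directly as Proposition 1 of Ravazzi, Frasca, Tempo, and Ishii \cite{ravazzi2014ergodic}, within the paper it suffices to cite their result rather than reprove it.
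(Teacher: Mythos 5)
Your proposal is correct, and it is worth noting that the paper itself contains no proof of this lemma: it is imported verbatim as Proposition 1 of \cite{ravazzi2014ergodic}, so your closing remark that a citation suffices matches the paper's treatment exactly. The argument you sketch is the standard textbook proof and is sound: unrolling gives $x(t) = A^t x(0) + \left(\sum_{k=0}^{t-1} A^k\right)s$; the forward direction follows from the equivalence $\rho(A) < 1 \iff A^t \to 0$ together with the telescoping identity $(I - A)\sum_{k=0}^{t-1}A^k = I - A^t$; and the converse follows by differencing trajectories. One small point of precision in the converse: you write that if both trajectories converge then their difference tends to $0$, but in general the difference tends to the \emph{difference of the two limits}. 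You therefore need the hypothesis in the form the lemma actually states it, namely that every trajectory converges to the same point $x^* = (I - A)\inv s$; with that, $x(t) - x'(t) = A^t(x(0) - x'(0)) \to 0$ for all initial pairs, hence $A^t v \to 0$ for every $v$, and the eigenvector argument (extended entrywise to $\C$ to handle complex eigenvalues) yields $\rho(A) < 1$. Mere convergence of each trajectory to some possibly $x(0)$-dependent limit would not suffice: take $A = I$ and $s = 0$, where every trajectory is constant yet $\rho(A) = 1$ and $(I - A)\inv$ does not exist.
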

By applying the above lemma to the update rule in \Cref{eq:quadratic-vectorized:short-form}, we get the following corollary.
\begin{corollary}
    \label{corollary:quadratic-vectorized:convergence-schur-stable}
    The update rule in \Cref{eq:quadratic-vectorized:short-form} is convergent for any $z(0)$, if and only of $\rho(\Lambda W) < 1$.
\end{corollary}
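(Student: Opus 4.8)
The plan is to apply \Cref{lemma:convergenet-ravazzi} directly, since the corollary is an immediate specialization of that result. First I would observe that the update rule in \Cref{eq:quadratic-vectorized:short-form}, namely $z(t+1) = \Lambda W z(t) + \Lambda R s$, has exactly the affine form $x(t+1) = Ax(t) + s$ appearing in the lemma. The key identification is to set $A = \Lambda W$ and to recognize the remaining term $\Lambda R s$ as the constant vector playing the role of ``$s$'' in the lemma's statement.

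The one point worth verifying explicitly is that $\Lambda R s$ is genuinely constant, i.e.\ independent of both the time step $t$ and the current state $z(t)$. This holds because $\Lambda$ is the fixed block diagonal matrix with blocks $\Lambda_{ii} = \left(R_i + \sum_j W_{ij}\right)\inv$, $R$ is the fixed block diagonal matrix of the $R_i$, and $s$ is the fixed block vector of internal opinions; none of these depend on $t$. Hence the recurrence falls squarely within the hypotheses of \Cref{lemma:convergenet-ravazzi}.

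With this identification in place, the lemma yields that the iteration is convergent for every initial point $z(0)$ if and only if $\rho(A) = \rho(\Lambda W) < 1$, which is precisely the claim; moreover, in the convergent case the limit is the fixed point $z^* = (I - \Lambda W)\inv \Lambda R s$. I do not anticipate any real obstacle here: the entire argument is a matching of forms, and the substantive analytic content (the spectral radius characterization of convergence for affine iterations) is supplied wholesale by the cited lemma. The only care required is the bookkeeping confirming that the constant term fits the lemma's template, as noted above.
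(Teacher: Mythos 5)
Your proposal is correct and matches the paper exactly: the paper also obtains \Cref{corollary:quadratic-vectorized:convergence-schur-stable} as an immediate application of \Cref{lemma:convergenet-ravazzi} to \Cref{eq:quadratic-vectorized:short-form}, with $A = \Lambda W$ and the constant term $\Lambda R s$ playing the role of the lemma's $s$. Your explicit check that $\Lambda R s$ is independent of $t$ and $z(t)$ is a harmless bit of extra bookkeeping that the paper leaves implicit.
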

Next, we will introduce the \textit{weight-normalized} property.
\begin{definition}[Weight-Normalized Instance]
    An instance $G = (V, E)$ with $n$ nodes is normalized, if $\sum_j W_{ij} = I$ for every node $i$.
\end{definition}

Next, we will introduce the \textit{clone transformation}, which transforms an instance into a normalized instance.
\begin{definition}[Clone Transformation]
    Consider an instance $G = (V, E)$ with $n$ nodes. The clone transformation of $G$ is the graph $H$ created as follows:
    \begin{itemize}
        \item Put two copies of $G$ in $H$, numbering the nodes such that the first copy's nodes are from $1$ to $n$ (and so the second copy's nodes are from $n + 1$ to $2n$).
        \item Choose a positive real $d$ big enough, so that $dI - \sum_{j=1}^{n} W_{ij}$ is positive definite, for all $i$.
        \item For every node $i$ in the first copy, add an undirected edge between $i$ and $i + n$ (the corresponding node in the second copy) with the weight matrix $dI - \sum_{j=1}^{n} W_{ij}$.
        \item Divide every $W_{ij}$ and $R_i$ by $d$ (element wise).
    \end{itemize}
\end{definition}
It is trivial to see that the clone transformation of an instance, is weight-normalized. Note that the graph can be directed or undirected for the above clone transformation. Also, note that the clone transformation is not unique; any $d$ big enough leads to a different weight-normalized instance.

Before going further, we have the following lemma discussing the uniqueness of the Nash equilibrium.
\begin{lemma}
    Let $G = (V, E)$ be an undirected instance with $n$ nodes, having each $R_i$ as positive definite. The Nash equilibrium of $G$ is unique.
\end{lemma}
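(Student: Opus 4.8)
The plan is to rewrite the Nash equilibrium condition as a single linear system in the block vector $z = (z_1^T, \dots, z_n^T)^T$ and then show that the matrix defining this system is positive definite, hence invertible, so that the system admits a unique solution. Recall from \Cref{equation:quadratic-vectorized:derivative} that the equilibrium condition $\nabla_i c_i(z) = 0$ for every $i$ is equivalent, after dividing by $2$, to
$$R_i z_i + \Big(\sum_j W_{ij}\Big) z_i - \sum_j W_{ij} z_j = R_i s_i.$$
Letting $R$ be the block diagonal matrix carrying the $R_i$ on its diagonal, $D$ the block diagonal matrix carrying $\sum_j W_{ij}$ on its diagonal, and $W$ the block weight matrix, this system reads $(R + D - W)z = Rs$. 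Since each $c_i$ is strictly convex in $z_i$ (as $R_i$ is positive definite), a point is a Nash equilibrium precisely when it satisfies this stationarity system, so uniqueness of the Nash equilibrium is equivalent to invertibility of $R + D - W$.

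The heart of the argument is to show that the block Laplacian $L = D - W$ is positive semidefinite. First I would note that because the instance is undirected we have $W_{ij} = W_{ji}$, so $W$ and hence $L$ is symmetric. Then for any block vector $v = (v_1^T, \dots, v_n^T)^T$ I would establish the identity
$$v^T L v = \frac{1}{2} \sum_{i,j} (v_i - v_j)^T W_{ij} (v_i - v_j),$$
by expanding the right-hand side into its four sums and matching $\sum_{i,j} v_i^T W_{ij} v_i$ and $\sum_{i,j} v_j^T W_{ij} v_j$ against $v^T D v$, and $\sum_{i,j} v_i^T W_{ij} v_j$ and $\sum_{i,j} v_j^T W_{ij} v_i$ against $v^T W v$, using both the undirectedness $W_{ij} = W_{ji}$ and the symmetry of each block $W_{ij}$. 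Because every $W_{ij}$ is positive semidefinite, each summand on the right is nonnegative, giving $v^T L v \geq 0$ and hence $L \succeq 0$.

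Finally, I would combine this with the positive definiteness of $R$: since $R$ is block diagonal with positive definite blocks $R_i$, it is positive definite, and the sum of a positive definite matrix and a positive semidefinite matrix is positive definite. Thus $R + D - W = R + L \succ 0$, so it is invertible and the system $(R + D - W)z = Rs$ has the unique solution $z = (R + D - W)^{-1} Rs$, which is the unique Nash equilibrium.

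The step I expect to be the main obstacle is verifying the quadratic-form identity for $v^T L v$: the bookkeeping that matches the four sums $\sum_{i,j} v_i^T W_{ij} v_i$, $\sum_{i,j} v_j^T W_{ij} v_j$, $\sum_{i,j} v_i^T W_{ij} v_j$, and $\sum_{i,j} v_j^T W_{ij} v_i$ against $2 v^T D v - 2 v^T W v$ is exactly where the undirected structure is essential, and it is the only place in the proof where anything beyond routine linear algebra is required.
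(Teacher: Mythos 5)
Your proposal is correct and follows essentially the same route as the paper: both reduce the equilibrium condition to the linear system $(R + L)z = Rs$ with the block Laplacian $L = D - W$, prove $L \succeq 0$ via the identity $2v^T L v = \sum_{i,j}(v_i - v_j)^T W_{ij}(v_i - v_j)$ using undirectedness and the symmetry of each $W_{ij}$, and conclude invertibility of $R + L$ from the positive definiteness of $R$. Your explicit remark that strict convexity of $c_i$ in $z_i$ makes the stationarity system equivalent to the equilibrium condition is a small point the paper leaves implicit, but otherwise the arguments coincide.
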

\begin{proof}
    One can notice that at a Nash equilibrium $x$, for every $i$, $\nabla_i c_i(x) = 0$. This means that:
    \begin{align*}
        R_i x_i - R_i s_i + \left(\sum_j W_{ij}\right) x_i - \sum_j W_{ij} x_j = 0,
    \end{align*}
    which by rearranging we have that:
    \begin{align*}
        R_i x_i + \sum_j W_{ij} x_i - \sum_j W_{ij} x_j = R_i s_i.
    \end{align*}
    Let $L$ be a block matrix where $L_{ij} = -W_{ij}$ for $i \neq j$, and $L_{ii} = \sum_j W_{ij}$ for every $i$. Let $R$ be a block diagonal matrix having each $R_i$ on its diagonal. Then the above equation for every $i$ can be written into one equation as $Rx + Lx = Rs$, where $x$ is the stacked vector of the Nash equilibrium and $s$ is the stacked vector of all internal opinions. Note that the block matrix $R$ is positive definite since every block on the diagonal is positive definite. Also $L$ is positive semidefinite. We show this by showing that $v^T L v \ge 0$ for every $v \in \R^{nm}$. Precisely, let $v = (v_1^T, \dots, v_n^T)^T$ where $v_i \in \R^m$, then we have:
    \begin{align*}
        v^T L v &= \sum_{i,j} v_i^T L_{ij} v_j &\EqComment{Expansion} \\
        &= \sum_i v_i^T \left(\sum_j W_{ij}\right) v_i - \sum_{i \neq j} v_i^T W_{ij} v_j &\EqComment{Definition of $L$} \\
        &= \sum_{i,j} v_i^T W_{ij} v_i - v_i^T W_{ij} v_j. &\EqComment{Bringing the first term into the second sum}
    \end{align*}
    Now if we write every term in the equation above twice, we get:
    \begin{align*}
        2v^T L v &= \sum_{i,j} 2v_i^T W_{ij} v_i - 2v_i^T W_{ij} v_j &\EqComment{See above} \\
        &= \sum_{i,j} v_i^T W_{ij} v_i - 2v_i^T W_{ij} v_j + v_j^T W_{ij} v_j &\EqComment{Brining a copy of the term for $j$ to the sum for $i$} \\
        &= \sum_{i,j} (v_i - v_j)^T W_{ij} (v_i - v_j), &\EqComment{Symmetry of $W_{ij}$ and grouping the terms} \\
        &\ge 0, &\EqComment{$W_{ij}$ is positive semidefinite}
    \end{align*}
    and thus $L$ is positive semidefinite.

    Now going back to the equation $Rx + Lx = Rs$, we have that $(R + L)x = Rs$, where $R$ is positive definite and $L$ is positive semidefinite. Therefore $R + L$ is positive definite and thus invertible. Hence the unique Nash equilibrium is at $x = (R + L)\inv Rs$.
\end{proof}

We then have the following lemma, regarding the Nash equilibrium of the clone transformation.
\begin{lemma}
    Let $G = (V, E)$ be an undirected instance with $n$ nodes, having each $R_i$ as positive definite. Let $H$ be its clone transformation. If $x_1, \dots, x_n$ is the unique Nash equilibrium for $G$, $x_1, \dots x_n, x_1, \dots, x_n$ is the unique Nash equilibrium for $H$.
\end{lemma}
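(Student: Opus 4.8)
The plan is to reduce everything to the preceding lemma, which guarantees that an undirected instance all of whose internal weight matrices are positive definite has a \emph{unique} Nash equilibrium. Because the clone transformation divides each $R_i$ by $d > 0$, the internal weight of every node of $H$ is $R_i/d$, which is still positive definite; the within-copy edge weights $W_{ij}/d$ remain positive semidefinite, and the cloning-edge weight is a positive multiple of $dI - \sum_j W_{ij}$, which is positive definite by the choice of $d$. So I would first check that $H$ is again an undirected instance meeting the hypotheses of the preceding lemma and therefore has a unique Nash equilibrium. Given this, it suffices to show that the stacked configuration $x' = (x_1, \ldots, x_n, x_1, \ldots, x_n)$ satisfies the stationarity condition $\nabla_i c_i(x') = 0$ for every node $i$ of $H$; uniqueness then forces $x'$ to be \emph{the} Nash equilibrium.

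Next I would write down $\nabla_i c_i(x')$ for a node $i$ in the first copy using the gradient formula $\nabla_i c_i(z) = 2R_i(z_i - s_i) + 2\sum_j W_{ij}(z_i - z_j)$ from \Cref{equation:quadratic-vectorized:derivative}, adapted to the weights of $H$. Node $i$ contributes three kinds of terms: the internal term $\tfrac{2}{d}R_i(x'_i - s_i)$, the within-copy terms $\tfrac{2}{d}\sum_{j} W_{ij}(x'_i - x'_j)$, and one cloning-edge term proportional to $(x'_i - x'_{i+n})$. The crucial observation is that $x'_i = x_i = x'_{i+n}$, so the cloning-edge term vanishes identically. What remains is exactly $\tfrac{1}{d}$ times $2R_i(x_i - s_i) + 2\sum_j W_{ij}(x_i - x_j)$, i.e.\ $\tfrac{1}{d}$ times the gradient of $c_i$ in the original game $G$ evaluated at $x$. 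Since $x_1, \ldots, x_n$ is the Nash equilibrium of $G$, that gradient is zero, and hence $\nabla_i c_i(x') = 0$.

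By the mirror symmetry of the two copies, the identical computation shows $\nabla_{i+n} c_{i+n}(x') = 0$ for every node of the second copy: its cloning-edge term again multiplies $x'_{i+n} - x'_i = 0$ and the remaining terms once more reduce to $\tfrac{1}{d}$ times the vanishing $G$-gradient at $x$. Thus $x'$ satisfies every stationarity condition, and by the uniqueness established in the first step it is the unique Nash equilibrium of $H$.

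The main point to be careful about is bookkeeping rather than any genuine difficulty: one must confirm that $H$ truly satisfies the positive-definiteness hypotheses of the preceding lemma so that uniqueness is available, and one must track the overall factor $1/d$ correctly. Notably, whether or not the cloning-edge weight itself carries the factor $1/d$ (a point where the transformation is slightly ambiguous) is immaterial, because that weight always multiplies $x'_i - x'_{i+n} = 0$ and drops out; the sole role of the clone construction is to force this difference to be zero at $x'$ while supplying the extra weight needed to normalize the instance.
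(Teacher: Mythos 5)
Your proposal is correct and follows essentially the same route as the paper: compute $\nabla_i c_i^H$ at the stacked point $(x_1,\dots,x_n,x_1,\dots,x_n)$, observe that the cloning-edge term vanishes because $x'_i = x'_{i+n}$, and recognize the remainder as $\tfrac{1}{d}$ times the vanishing gradient $\nabla_i c_i^G(x)$, so the first-order conditions hold at every node of $H$. You are in fact slightly more explicit than the paper in verifying that $H$ satisfies the positive-definiteness hypotheses of the preceding uniqueness lemma (the paper leaves this implicit), but this is a presentational refinement rather than a different argument.
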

\begin{proof}
    We first notice that for any instance, at the Nash equilibrium $x$, $\nabla_i c_i(x) = 0$ for any $i$, because otherwise, the node $i$ can change its opinion in the direction of $\nabla_i c_i(x)$ to reduce its penalty. The other way is true too; if $\nabla_i c_i(x) = 0$ for all $i$, then $x$ is a Nash equilibrium.
    
    Now let $c_i^G(z)$ and $c_i^H(z)$ be the cost functions for node $i$ in the graphs $G$ and $H$ respectively. As stated, $\nabla_i c_i^G(x^G) = 0$ where $x^G = (x_1, \dots, x_n)$, the sequence of all expressed opinions at the Nash equilibrium. This means that:
    \begin{align*}
        \nabla_i c_i^G(x^G) = 2R_i x_i - 2R_i s_i + 2 \left(\sum_j W_{ij}\right)x_i - 2 \sum_j W_{ij} x_j = 0.
    \end{align*}
    Then considering $\nabla_i c_i^H(x^H)$ for $1 \le i \le n$, where $x^H = (x_1, \dots, x_n, x_1, \dots, x_n)$, we have that:
    \begin{align*}
        \nabla_i c_i^H(x^H) = \frac{1}{d} \left(2R_i x_i - 2R_i s_i + 2 \left(\sum_j W_{ij} + W_{i,i+n}\right)x_i - 2\sum_j W_{ij} x_j - 2 W_{i,i+n} x_{i + n} \right),
    \end{align*}
    where $d$ is the real number chosen in the clone transformation. Since $x_i = x_{i + n}$, the additional terms having $W_{i,i+n}$ cancel out from the equation above and hence $\nabla_i c_i^H(x^H) = \frac{1}{d} c_i^G(x^G) = 0$, showing that $x^H$ is a Nash equilibrium for $H$, concluding the proof.
\end{proof}

Using the lemma above, it is enough to only consider weight-normalized instances for our purposes. This is because every other instance can be transformed into a weight-normalized instance while keeping the Nash equilibrium.

We then have the following theorem, regarding the convergence of undirected weight-normalized instances.

\begin{theorem}
    \label{theorem:quadratic-vectorized:convergence}
    Let $G = (V, E)$ be an undirected instance (i.e. $W_{ij} = W_{ji}$ for every $i,j$) and have every $R_i$ as positive definite. If $G$ is weight-normalized, then the update rule in \Cref{eq:quadratic-vectorized:short-form} is convergent for any $x(0)$.
\end{theorem}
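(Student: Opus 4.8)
The plan is to reduce everything to the spectral condition already isolated in \Cref{corollary:quadratic-vectorized:convergence-schur-stable}, which states that the update rule in \Cref{eq:quadratic-vectorized:short-form} converges for every starting point $z(0)$ if and only if $\rho(\Lambda W) < 1$. So it suffices to prove this single bound. First I would use the weight-normalization hypothesis $\sum_j W_{ij} = I$ to simplify $\Lambda$: each diagonal block becomes $\Lambda_{ii} = (R_i + \sum_j W_{ij})\inv = (R_i + I)\inv$, so that $\Lambda = B\inv$, where $B$ is the block-diagonal matrix whose diagonal blocks are $R_i + I$. Since every $R_i$ is positive definite, $B$ is symmetric and positive definite, and since the instance is undirected ($W_{ij} = W_{ji}$, each block symmetric) the full block matrix $W$ is symmetric.

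Next I would turn $\rho(\Lambda W)$ into a question about a symmetric matrix. Writing $B^{1/2}$ for the positive definite square root of $B$, the matrix $\Lambda W = B\inv W$ is similar to $B^{-1/2} W B^{-1/2}$, which is symmetric; hence $\Lambda W$ has only real eigenvalues, and each eigenvalue $\mu$ arises from some $x \neq 0$ with $Wx = \mu B x$, giving $\mu = \frac{x^T W x}{x^T B x}$. Consequently $\rho(\Lambda W) < 1$ is equivalent to $-1 < \mu < 1$ for every such $\mu$, which is in turn equivalent to the two positive-definiteness statements $B - W \succ 0$ and $B + W \succ 0$.

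For the upper side I would recall the block Laplacian $L$ from the proof of the uniqueness lemma, which under normalization equals $L = I - W$ and was shown there to satisfy $v^T L v = \tfrac12 \sum_{i,j}(v_i - v_j)^T W_{ij}(v_i - v_j) \geq 0$, i.e. $L$ is positive semidefinite. Then $B - W = R + (I - W) = R + L$ is the sum of a positive definite and a positive semidefinite matrix, hence positive definite, which yields $\mu < 1$. For the lower side I would introduce the signless analogue $L^+ = I + W$ and prove it positive semidefinite by the same doubling argument with a sign flip: using $W_{ij} = W_{ji}$ one gets $2 v^T L^+ v = \sum_{i,j}(v_i + v_j)^T W_{ij}(v_i + v_j) \geq 0$, since each $W_{ij}$ is positive semidefinite. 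Then $B + W = R + (I + W) = R + L^+$ is again positive definite, giving $\mu > -1$.

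The main obstacle is precisely this lower bound $\mu > -1$: whereas $\mu < 1$ is essentially free, following from the positive semidefiniteness of the ordinary Laplacian already established for the uniqueness lemma, the bound $\mu > -1$ requires the new observation that the signless Laplacian $I + W$ is positive semidefinite. Once both inequalities are in hand, every eigenvalue of $\Lambda W$ is real and lies in $(-1, 1)$; since there are finitely many, $\rho(\Lambda W) = \max_\mu |\mu| < 1$, and \Cref{corollary:quadratic-vectorized:convergence-schur-stable} then completes the proof.
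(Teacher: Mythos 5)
Your proof is correct, and it assembles the same two key ingredients as the paper by a genuinely different mechanism. Both arguments reduce to showing $\rho(\Lambda W) < 1$ via \Cref{corollary:quadratic-vectorized:convergence-schur-stable}, and both ultimately rest on the positive semidefiniteness of $I + W$ and $I - W$, proven by the identical term-doubling computation $2v^T(I \pm W)v = \sum_{i,j}(v_i \pm v_j)^T W_{ij}(v_i \pm v_j) \geq 0$. Where you diverge is in how the block-diagonal factor and $W$ are combined: the paper splits the spectral radius by submultiplicativity, $\rho(\Lambda W) \leq \norm{\Lambda}_2 \norm{W}_2$, computing $\norm{\Lambda}_2 < 1$ from the eigenvalues of each $(R_i + I)\inv$ and $\norm{W}_2 \leq 1$ from $I \pm W \succeq 0$; you instead keep $\Lambda W = B\inv W$ intact, symmetrize it by the congruence/similarity $B^{-1/2} W B^{-1/2}$, and characterize its spectrum through generalized Rayleigh quotients, so that $\rho(\Lambda W) < 1$ becomes \emph{equivalent} to $B \pm W \succ 0$, each of which is positive definite as the sum $R + (I \mp W)$ of a positive definite and a positive semidefinite matrix. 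Your route buys a tight characterization (the paper's norm-product bound is only a sufficient condition and could in principle be lossy), establishes that the iteration matrix has real spectrum, and economically reuses the positive semidefiniteness of the Laplacian $L$ already proven for the uniqueness lemma rather than reproving $I - W \succeq 0$; the paper's route is shorter because the two norm bounds are independent one-line computations. One small caveat: your identification $L = I - W$ under weight-normalization implicitly uses $W_{ii} = 0$ (no self-loops), the standing assumption in this setting; if one wished to avoid invoking it, the doubling argument applies directly to $I - W$ exactly as you ran it for $I + W$.
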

\begin{proof}
    We prove the theorem, by applying \Cref{corollary:quadratic-vectorized:convergence-schur-stable}. So we only need to show that $\rho(\Lambda W) < 1$. As a property of spectral radius, $\rho(A) \le \norm{A}$ for any matrix $A$, where $\norm{\cdot}$ is any vector-induced norm. Hence $\rho(\Lambda W) \le \norm{\Lambda W}_2 \le \norm{\Lambda}_2 \cdot \norm{W}_2$ using the sub-multiplicative property of the 2-norm. To prove the statement, we show that $\norm{\Lambda}_2 < 1$ and $\norm{W}_2 \le 1$. Also note that $\norm{A}_2 = \sigma_{\max}(A)$, where $\sigma_{\max}(A)$ is the maximum signular value of $A$. 

    Let us first analyze $\Lambda$. We have that $\Lambda_{ii} = \left(R_i + \sum_j W_{ij}\right)\inv = \left(R_i + I\right)\inv$, because $G$ is weight-normalized. It is trivial to see that if $0 < \lambda_1 \le \dots \le \lambda_m$ are eigenvalues of $R_i$, then $0 < \frac{1}{\lambda_m + 1} \le \dots \le \frac{1}{\lambda_1 + 1} < 1$ are eigenvalues of $\left(R_i + I\right)\inv$. Because $R_i$ is positive definite, so is $\Lambda_{ii}$, and so $\Lambda$ is a block diagonal matrix where each block is positive definite, and thus $\Lambda$ itself is positive definite. For any positive definite matrix, its eigenvalues are the same as its singular values, hence $\norm{\Lambda}_2 = \lambda_{\max}(\Lambda)$ where $\lambda_{\max}(\Lambda)$ is the maximum eigenvalue of $\Lambda$. Again because $\Lambda$ is block diagonal, its eigenvalues are the union of eigenvalues of all $\Lambda_{ii}$. We have already shown that eigenvalues of each $\Lambda_{ii}$ are all less than 1, hence $\norm{\Lambda}_2 < 1$, as intended.

    Next we show that $\norm{W}_2 \le 1$. First notice that $W$ is a symmetric matrix, because $W_{ij} = W_{ji}$ and also each $W_{ij}$ is symmetric itself, thus $W^T = W$. So the singular values of $W$, are the absolute values of its eigenvalues. We show that if $\lambda$ is an eigenvalue of $W$, then $|\lambda| \le 1$ and thus $\rho(W) \le 1$.
    To do this, we show that the matrices $I + W$ and $I - W$ are positive semidefinite. If $\lambda$ is an eigenvalue of $W$, then $1 + \lambda$ is an eigenvalue of $I + W$, thus if $I + W$ is positive semidefinite we have that $1 + \lambda \ge 0$. Following the same argument for $I - W$, we have that $1 - \lambda \ge 0$ and overall we have that $|\lambda| \le 1$. So all that remains is to show that $I + W$ and $I - W$ are positive semidefinite.

    Let us start by $I + W$. We have to show that $v^T (I + W) v \ge 0$ for any $v \in \R^{nm}$. Considering $v$ as a block vector, let $v_1, \dots, v_n$ be its blocks where $v_i \in \R^m$. Then we have:
    \begin{align*}
        v^T (I + W) v &= v^T v + v^T W v \\
        &= \left(\sum_{i} v_i^T v_i \right) + \left(\sum_{i,j} v_i^T W_{ij} v_j\right) &\EqComment{By expansion} \\
        &= \sum_{i,j} v_i^T W_{ij} v_i + v_i^T W_{ij} v_j. &\EqComment{$G$ is weight-normalized}
    \end{align*}
    Let us add each term twice in the equation above, and then divide the whole summation by 2. When readding the terms $v_i^T W_{ij} v_i$, rewrite them to $v_j^T W_{ij} v_j$ which is equal in total to $v_i^T W_{ij} v_i$. When reading the terms $v_i^T W_{ij} v_j$, take the transpose of the term and write $v_j^T W_{ij}^T v_i$ and then simplify to $v_j^T W_{ij} v_i$, because $W_{ij}$ is symmetric. So we have:
    \begin{align*}
        v^T (I + W) v &= \frac{1}{2} \sum_{i,j} v_i^T W_{ij} v_i + v_i^T W_{ij} v_i + v_i^T W_{ij} v_j + v_i^T W_{ij} v_j &\EqComment{Adding each term twice}\\
        &= \frac{1}{2} \sum_{i,j} v_i^T W_{ij} v_i + v_j^T W_{ij} v_j + v_i^T W_{ij} v_j + v_j^T W_{ij} v_i &\EqComment{See above} \\
        &= \frac{1}{2} \sum_{i,j} (v_i + v_j)^T W_{ij} (v_i + v_j). &\EqComment{Grouping the terms}
    \end{align*}
    Now because each $W_{ij}$ is positive semidefinite, $(v_i + v_j)^T W_{ij} (v_i + v_j) \ge 0$ and so $v^T (I + W) v \ge 0$, as intended.

    A similar argument holds for $I - W$. Precisely:
    \begin{align*}
        v^T (I - W) v &= v^T v - v^T W v \\
        &= \left(\sum_{i} v_i^T v_i \right) - \left(\sum_{i,j} v_i^T W_{ij} v_j\right) &\EqComment{By expansion} \\
        &= \sum_{i,j} v_i^T W_{ij} v_i - v_i^T W_{ij} v_j &\EqComment{$G$ is weight-normalized} \\
        &= \frac{1}{2} \sum_{i,j} v_i^T W_{ij} v_i + v_i^T W_{ij} v_i - v_i^T W_{ij} v_j - v_i^T W_{ij} v_j &\EqComment{Adding each term twice} \\
        &= \frac{1}{2} \sum_{i,j} v_i^T W_{ij} v_i + v_j^T W_{ij} v_j - v_i^T W_{ij} v_j - v_j^T W_{ij} v_i &\EqComment{See above} \\
        &= \frac{1}{2} \sum_{i,j} (v_i - v_j)^T W_{ij} (v_i - v_j), &\EqComment{Grouping the terms}
    \end{align*}
    which shows that $v^T (I - W) v \ge 0$, completing the proof.
\end{proof}

\section{An Example of Unbounded PoA with Non-convex Cost Functions}

\label{section:non-convex_unbounded_example}

We will demonstrate here an example of an opinion formation game whose cost functions are nonnegative and continuous (in fact, polynomial) but not convex, such that the PoA of said game is unbounded, which is to say that its optimal social cost is $0$ while there exists a Nash equilibrium whose social cost is positive. This example explains why we demand that cost functions be convex when showing PoA bounds. Note that the following example involves one-dimensional opinions, meaning that even in the one-dimensional case, convexity is required in order to attain PoA bounds.

For simplicity of presentation, we will define the opinion formation game in the \arbitrarysymmetricmodel model. Recall that the \arbitrarysymmetricmodel model can be reduced to our \ourmodel model, and in particular the cost function $f_{12}$ below can be expressed in the \ourmodel model by setting $A_{12} = \begin{pmatrix}
    1 \\ 0
\end{pmatrix}$, $B_{12} = \begin{pmatrix}
    0 \\ 1
\end{pmatrix}$, and $f_{12}(\begin{pmatrix}
    x \\ y
\end{pmatrix}) = (1 - x)^2y^2 + x^2(1 - y)^2$.

\begin{example}\label{example:nonconvex}
    Given $\epsilon > 0$, define an opinion formation game as follows: the game has two players who each have one dimensional opinions with the cost functions $g_k(z_k) = \epsilon z_k^2$ and $f_{12}(z_1, z_2) = (1 - z_1)^2z_2^2 + z_1^2(1 - z_2)^2$.
\end{example}
In \Cref{example:nonconvex}, the cost function $f_{12}$ is not convex, leading to the PoA being unbounded as shown in the following theorem.

\begin{theorem}\label{theorem:nonconvex_poa_infinite}
    \Cref{example:nonconvex} with $\epsilon = \frac 1 8$ gives a one-dimensional opinion formation game whose cost functions are polynomial and nonnegative and which has unbounded PoA.
\end{theorem}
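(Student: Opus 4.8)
The plan is to establish two facts whose combination forces the price of anarchy to be infinite: that the social optimum has cost exactly $0$, and that there is a genuine Nash equilibrium with strictly positive social cost. Because the interaction term $f_{12}$ is not convex, I cannot simply declare an arbitrary stationary profile to be the Nash equilibrium, as the paper cautions earlier; I must instead verify that my candidate consists of true global best responses. The saving grace is that for a fixed value of the opponent's opinion, each player's cost is a \emph{convex quadratic} in that player's own variable, so checking the first-order condition genuinely does suffice once this one-variable convexity is noted.

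For the optimum, I would write $SC(z) = c_1(z) + c_2(z) = \tfrac18(z_1^2 + z_2^2) + 2\bigl[(1-z_1)^2 z_2^2 + z_1^2(1-z_2)^2\bigr]$. Every summand is a product of squares and hence nonnegative, so $SC(z) \geq 0$ for all $z$. Evaluating at the origin gives $SC(0,0) = 0$, so the optimal social cost $SC(y)$ is exactly $0$, attained at $y = (0,0)$. The cost functions are also manifestly polynomial and nonnegative, which dispatches those parts of the claim immediately.

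For the Nash equilibrium I would look for a symmetric profile $(z,z)$. Holding $z_2$ fixed, one has $c_1(z_1, z_2) = z_1^2\bigl[\tfrac18 + z_2^2 + (1-z_2)^2\bigr] - 2z_2^2 z_1 + z_2^2$, a convex quadratic in $z_1$ since its leading coefficient is strictly positive; its unique global minimizer is therefore obtained from $\partial c_1/\partial z_1 = 0$. Imposing that this best response equal $z_2 = z$ yields, after dividing by $z$, the equation $\tfrac18 + z^2 + (1-z)^2 = z$, i.e. $16z^2 - 24z + 9 = 0$, whose discriminant vanishes and gives the double root $z = \tfrac34$. A direct check confirms that with $z_2 = \tfrac34$ the first-order condition reduces to $\tfrac32 z_1 - \tfrac98 = 0$, so $z_1 = \tfrac34$ is indeed the global best response to $\tfrac34$; by the symmetry of the game the same holds for player $2$, and hence $x = (\tfrac34, \tfrac34)$ is a Nash equilibrium.

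Finally I would compute the social cost at this equilibrium: $c_1(\tfrac34,\tfrac34) = \tfrac18\cdot\tfrac{9}{16} + 2\cdot\tfrac{9}{16}\cdot\tfrac{1}{16} = \tfrac{9}{64}$, so $SC(x) = 2\cdot\tfrac{9}{64} = \tfrac{9}{32} > 0$. Since $SC(y) = 0$ while $SC(x) > 0$, the ratio $SC(x)/SC(y)$ is unbounded, completing the claim. The main obstacle, and the only genuinely non-routine point, is the worry that for non-convex costs a stationary profile need not be an equilibrium; this is exactly why I lean on the observation that each $c_i$ is convex in its own argument when the opponent's opinion is frozen, so that the vanishing of the own-variable derivative certifies a true best response rather than merely a critical point.
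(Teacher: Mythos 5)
Your proof is correct and follows essentially the same route as the paper's: the optimum $y=(0,0)$ has social cost $0$, the symmetric profile $x=(\tfrac34,\tfrac34)$ is certified as a Nash equilibrium via the observation that each $c_i$ is a convex quadratic in the player's own variable with the opponent fixed (so the first-order condition suffices despite the non-convexity of $f_{12}$), and $SC(x)>0$ forces unbounded PoA. The only difference is cosmetic: you derive $t=\tfrac34$ by solving $16z^2-24z+9=0$ for the symmetric best-response fixed point, whereas the paper posits $t=\tfrac34$ and verifies the vanishing derivative directly.
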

\begin{proof}
    First note that $g_k$ is a square and $f_{12}$ is a sum of two squares, so both are nonnegative.
    
    Further note that by setting $y$ to have $y_1 = y_2 = 0$, all cost functions become $0$, which then must be the optimal social cost, as the social cost must be nonnegative.

    Now set $x$ to have $x_1 = x_2 = t$, with $t = \frac 3 4$. We claim that $x$ is a Nash equilibrium of the opinion formation game. To see this, by symmetry, it suffices to show that $x_1$ cannot be changed to some $z_1 \neq t$ so as to decrease $c_1(z)$, i.e. with $x_2$ fixed, $x_1$ is a global minimum of $c_1(z)$. Observe that with $z_2$ fixed at $t$, $c_1(z) = g_1(z_1) + f_{12}(z_1, t) = \epsilon z_1^2 + (1 - z_1)^2t^2 + z_1^2(1 - t)^2$ is a sum of squares and so is convex in $z_1$, meaning that it suffices to show that $\frac {\partial {c_1(z)}} {\partial z_1}|_{z_1 = t} = 0$, which we see below:
    \begin{align*}
        \frac {\partial {c_1(z)}} {\partial z_1}|_{z_1 = t}
        &= \frac {\partial } {\partial z_1}|_{z_1 = t} \left[\epsilon z_1^2 + (1 - z_1)^2t^2 + z_1^2(1 - t)^2\right]\\
        &= \left[2\epsilon z_1 - 2(1 - z_1)t^2 + 2z_1(1 - t)^2\right]_{z_1 = t}\\
        &= 2\left(\epsilon t - (1 - t)t^2 + t(1 - t)^2\right)\\
        &= 2\left(\epsilon t - t^2 + t^3 + t - 2t^2 + t^3\right)\\
        &= 2\left(\epsilon t + 2t^3 - 3t^2 + t\right)\\
        &= 2\left(\frac 1 8 \cdot \frac 3 4 + 2\left(\frac 3 4\right)^3 - 3\left(\frac 3 4\right)^2 + \frac 3 4\right)\\
        &= 2\left(\frac 3 {32} + \frac {27} {32} - \frac {27} {16} + \frac 3 4\right)\\
        &= 2\frac {3 + 27 - 54 + 24} {32}\\
        &= 0.
    \end{align*}
    Therefore, $x$ is a Nash equilibrium. However, $g_1(x_1) = g_2(x_2) = \epsilon t^2 > 0$, and $f_{12}$, so $x$ must have a positive social cost, which, because the optimal social cost is $0$, gives an unbounded PoA.
\end{proof}

\section{The \quadraticvectormodelcapital Model Without Semidefiniteness}
\label{section:non_positive_definite}

The semidefiniteness assumption in the \quadraticvectormodel model is necessary since without this assumption the game may not have a Nash equilibrium. This is true even in the single-dimensional case (i.e. the \fjmodel model in \Cref{eq:fj:cost}) where semidefiniteness corresponds to nonnegative weights. Below, we provide an example illustrating that without the semidefiniteness condition, a Nash equilibrium may not exist.

\begin{example}\label{example:nondefinite}
    Consider the following instance of the \quadraticvectormodel model as follows: the game involves two players with one-dimensional opinions, where $r_1 = r_2 = 0$, $s_1 = s_2 = 0$ and there is a single undirected edge between them with weight $-1$ (i.e. $w_{12} = w_{21} = -1$).
\end{example}

\begin{lemma}
    Example \ref{example:nondefinite} does not have a Nash equilibrium.
\end{lemma}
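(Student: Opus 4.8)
The plan is to write out the two players' cost functions explicitly and then show that neither player possesses a best response to any fixed strategy of the other, so that no strategy profile can satisfy the equilibrium condition. First I would substitute the given parameters into the cost function of the \quadraticvectormodel model. Since $r_1 = r_2 = 0$ and $s_1 = s_2 = 0$, the internal terms vanish, leaving only the single edge term of weight $-1$ in each cost. This yields $c_1(z_1, z_2) = -(z_1 - z_2)^2$ and, by symmetry, $c_2(z_1, z_2) = -(z_1 - z_2)^2$.

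Next I would analyze the best-response structure directly. For a profile $(x_1, x_2)$ to be a Nash equilibrium, $x_1$ must minimize $c_1(\cdot, x_2)$ over all $z_1 \in \R$ with $x_2$ held fixed. But $c_1(z_1, x_2) = -(z_1 - x_2)^2$ is a concave (downward-opening) parabola in $z_1$: its unique critical point $z_1 = x_2$ is a \emph{maximum}, and the cost tends to $-\infty$ as $|z_1| \to \infty$. Consequently, for any fixed $x_2$ there is no minimizing $z_1$; player $1$ can always strictly decrease its cost by moving farther from $x_2$, e.g. replacing $x_1$ by $x_2 + K$ for arbitrarily large $K$ drives $c_1$ to $-\infty$. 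The identical argument applies to player $2$. Since no candidate profile can be immune to a unilateral deviation, no Nash equilibrium exists.

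The computation here is entirely routine, so the only point requiring care — and hence the sole ``obstacle'' — is conceptual: one must \emph{not} invoke the first-order condition $\nabla_i c_i(x) = 0$ to locate the equilibrium, as is done elsewhere in the paper for convex costs. Here that stationarity condition would pick out the diagonal $z_1 = z_2$, which is precisely the set of \emph{maxima} of each player's cost rather than minima. The correct and robust argument is therefore that each player's best-response set is empty because the cost is unbounded below along that player's own coordinate axis; this is exactly the failure mode that the positive semidefiniteness assumption on $W_{ij}$ is designed to preclude.
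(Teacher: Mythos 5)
Your proof is correct and rests on the same core observation as the paper's: from any candidate profile, a player can strictly decrease its cost $-(z_1-z_2)^2$ by moving its opinion further away, so no profile survives unilateral deviation. The paper phrases this as a direct contradiction (WLOG $x_1 \leq x_2$, then decreasing $x_1$ or increasing $x_2$ improves that player's cost), while you phrase it as each player's best-response set being empty because the cost is unbounded below — the same argument, and your added caution about not invoking the first-order condition $\nabla_i c_i(x) = 0$ for this non-convex cost is a sensible remark, not a divergence.
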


\begin{proof}
    We prove this by contradiction. Let $x_1, x_2$ represent a Nash equilibrium. Without loss of generality, assume $x_1 \leq x_2$. Increasing $x_2$ would lower the cost function for node $2$ and decreasing $x_1$ would lower the cost function for node $1$, contradicting with $x_1,x_2$ being a Nash equilibrium. Similarly, one can show that the same game, where $r_1 = r_2 = \epsilon > 0$ does not have a Nash equilibrium either, for any $\epsilon < 1$.
\end{proof}

This scalar game example can be extended to any $m > 1$ dimensions by simply adding dimensions with zero edge weights and internal opinion weights set to $1$.

\end{document}